\definecolor{refkey}{gray}{.5}   
\definecolor{labelkey}{gray}{.5} 
\numberwithin{equation}{section}
\newcommand{\R}{{\mathbb R}}
\newcommand{\al}{\alpha}
\newcommand{\ga}{\gamma}
\newcommand{\Ga}{\Gamma}
\newcommand{\ep}{\varepsilon}
\newtheorem{theo}{{\sc \bf Theorem}}[section]
\newtheorem{prop}[theo]{{\sc \bf Proposition}}
\newenvironment{rem}{\medskip\noindent{\it Remark:\/} }{\medskip}
\begin{document}

\title[Calculation of the constant factor in the six-vertex model]{Calculation of the constant factor in the six-vertex model}

\author{Pavel Bleher}
\address{Department of Mathematical Sciences,
Indiana University-Purdue University Indianapolis,
402 N. Blackford St., Indianapolis, IN 46202, U.S.A.}
\email{bleher@math.iupui.edu}

\author{Thomas Bothner}
\address{Department of Mathematical Sciences,
Indiana University-Purdue University Indianapolis,
402 N. Blackford St., Indianapolis, IN 46202, U.S.A.}
\email{tbothner@iupui.edu}

\keywords{Six-vertex model, domain wall boundary conditions, critical line between disordered and antiferroelectric phases, 
asymptotic behavior of the partition function, Riemann-Hilbert problem, Deift-Zhou nonlinear steepest descent method, Toda equation. }

\subjclass[2010]{Primary 82B23; Secondary 15B52.}

\thanks{The first author is supported in part
by the National Science Foundation (NSF) Grants DMS-0969254 and DMS-1265172.}

\date{\today}

\begin{abstract}
We calculate explicitly the constant factor $C$ in the large $N$ asymptotics 
of the partition function $Z_N$ of the six-vertex model with domain wall boundary conditions
on the critical line between the disordered and ferroelectric phases. On the critical line the weights $a,b,c$ of
the model are parameterized by a parameter $\al>1$, as $a=\frac{\al-1}{2}$, $b=\frac{\al+1}{2}$, $c=1$.  
The asymptotics of $Z_N$ on the critical line was obtained earlier in the
paper \cite{BL2} of Bleher and Liechty:  $Z_N=CF^{N^2}G^{\sqrt{N}}N^{1/4}\big(1+O(N^{-1/2})\big)$, 
where $F$ and $G$ are given by explicit expressions, but the constant factor $C>0$ was not known. 
To calculate the constant $C$, we find, by using the Riemann-Hilbert approach, an asymptotic
behavior of $Z_N$ in the double scaling limit,
as $N$ and $\al$ tend simultaneously to $\infty$ in such a way that $\frac{N}{\al}\to t\ge 0$.
Then we apply the Toda equation for the tau-function to find a structural form for $C$, as a function of $\al$, and 
we combine the structural form of $C$ and the double scaling asymptotic behavior of $Z_N$ to
calculate $C$.  
\end{abstract}
\maketitle


\section{Introduction and summary of results}

The six-vertex model is a model in statistical mechanics stated on a square lattice in $\mathbb{Z}^2$ with $N^2$ vertices and arrows on edges. The arrows obey the {\it ice-rule}: at every vertex two arrows point in and two arrows point out. This rule allows for six possible configurations which are depicted in Figure \ref{fig1}.
\begin{figure}[tbh]
  \begin{center}
  \includegraphics[width=8cm,height=6cm]{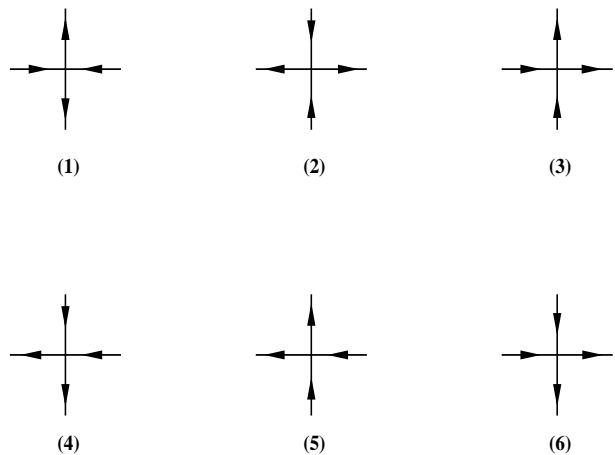}
  \end{center}
  \caption{The arrow configurations at a vertex allowed by the ice-rule}
  \label{fig1}
\end{figure}

On the lattice boundary we consider {\it domain wall boundary conditions} (DWBC), in which all arrows on the top and bottom side of the lattice point inside, and all arrows on the left and right side point outside. We depict a possible $4\times 4$ configuration with DWBC in Figure \ref{fig2} below.
\begin{figure}[tbh]
  \begin{center}
  \includegraphics[width=7cm,height=7cm]{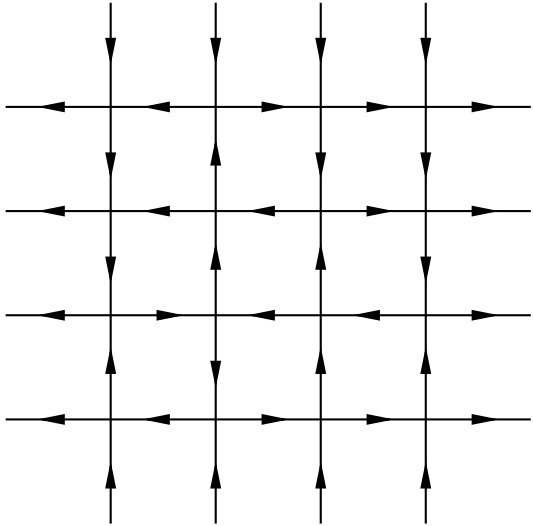}
  \end{center}
  \caption{One example of a $4\times 4$ configuration with DWBC}
  \label{fig2}
\end{figure}

\smallskip

For each vertex type $(i)$ shown in Figure \ref{fig1}, we assign a Boltzmann weight $w_i,i=1,\ldots 6$ and introduce the partition function $Z_N$ as
\begin{equation*}
	Z_N = \sum_{\textnormal{configurations}}\prod_{i=1}^6w_i^{n_i},
\end{equation*}
where $n_i$ denotes the number of vertices of type $(i)$. The partition function $Z_N$ depends, by definition, on six parameters: the weights $w_i$. Through the existence of conservation laws (compare \cite{AR,BF} or \cite{FS}) we can reduce the general case
to the case when $w_1=w_2=a$, $w_3=w=4=b$, and $w_5=w_6=c$. By using the homogeneity of the partition
function with respect to $a$, $b$, and $c$, it can be further reduced to two parameters, 
$\frac{a}{c}$ and $\frac{b}{c}$.\smallskip

The phase diagram of the model is depicted in Figure \ref{fig3}, it shows three phase regions: 
the antiferroelectric (AF) phase region, the disordered (D) phase region, and the ferroelectric (F) phase region.
\begin{figure}[tbh]
  \begin{center}
  \psfragscanon
  \psfrag{1}{\footnotesize{$0$}}
  \psfrag{2}{\footnotesize{$1$}}
  \psfrag{3}{\footnotesize{$1$}}
  \psfrag{4}{\footnotesize{$\frac{a}{c}$}}
  \psfrag{5}{\footnotesize{$\frac{b}{c}$}}
  \psfrag{6}{\footnotesize{$\textnormal{(AF)}$}}
  \psfrag{7}{\footnotesize{$\textnormal{(F)}$}}
  \psfrag{8}{\footnotesize{$\textnormal{(F)}$}}
  \psfrag{9}{\footnotesize{$\textnormal{(D)}$}}
  \includegraphics[width=6.5cm,height=6cm]{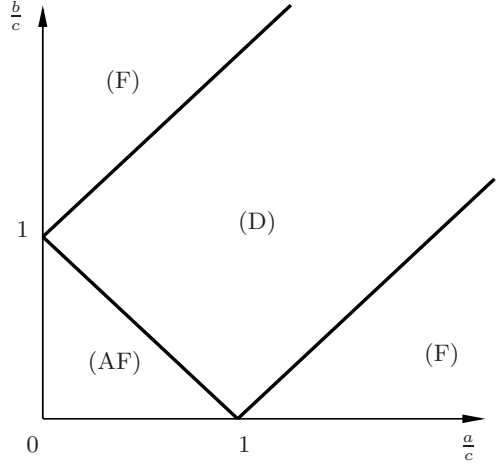}
  \end{center}
  \caption{The phase diagram of the six-vertex model, with the phases (F), (AF) and (D)}
  \label{fig3}
\end{figure}
In these phase regions the weights $a,b$ and $c$ are parameterized as follows:
\begin{align}
	(\textnormal{F})\hspace{0.75cm}&a=\sinh(t-\gamma),&b=\sinh(t+\gamma),\hspace{0.5cm}& c=\sinh(2\gamma),& 0<|\gamma|<t\label{Fpara}\\
	(\textnormal{AF})\hspace{0.75cm}&a=\sinh(\gamma-t),&b=\sinh(\gamma+t),\hspace{0.5cm}& c=\sinh(2\gamma),& 0\leq|t|<\gamma\label{AFpara}\\
	(\textnormal{D})\hspace{0.75cm}&a=\sin(\gamma-t), &b=\sin(\gamma+t),\hspace{0.5cm}& c=\sin(2\gamma),& 0\leq|t|<\gamma<\frac{\pi}{2}.\label{Dpara}
\end{align}
In the present work we calculate the asymptotics of the partition function $Z_N$ of the six-vertex model on the critical line between the disordered and ferroelectric phases which corresponds to
\begin{equation}\label{IKform3}
	\frac{b}{c}-\frac{a}{c}=1.
\end{equation}
On the critical line we use the following parameterization of the weights:
\begin{equation}\label{IKform4}
	a = \frac{\al-1}{2},\hspace{0.5cm}b=\frac{\al+1}{2},\hspace{0.5cm}c=1,\hspace{0.75cm}\al\in(1,\infty).
\end{equation}

The six-vertex model with DWBC was first studied by Korepin in \cite{K}, then further analyzed in the works \cite{I} and \cite{CIK}. This effort lead
to a determinantal formula for the partition function $Z_N$, which is commonly called the {\it Izergin-Korepin formula}.
On the D-F critical line, with weights \eqref{IKform4}, the Izergin-Korepin formula is
\begin{equation}\label{IKform1}
	Z_N(a,b,c) = \left(\frac{\al^2-1}{2}\right)^{N^2}\frac{\tau_N}{(\prod_{k=0}^{N-1}k!)^2},
\end{equation}
where $\tau_N$ is the Hankel determinant,
\begin{equation*}
	\tau_N = \det\left(\frac{d^{i+j-2}}{d\al^{i+j-2}}\varphi\right)_{i,j=1}^N,\hspace{0.5cm}\varphi(\al) = \frac{2}{\al^2-1}.
\end{equation*}
Also, as a consequence of the determinantal formula, $\tau_N$ satisfies the Toda equation,
\begin{equation}\label{toda1}
	\tau_N\tau_N''-\left(\tau_N'\right)^2=\tau_{N+1}\tau_{N-1},\hspace{0.5cm}N\geq 1,\hspace{0.5cm}(') = \frac{d}{d\al}.
\end{equation}
\smallskip

It was observed by Zinn-Justin \cite{Z} that the Hankel determinant $\tau_N$ can be connected to orthogonal polynomials: Since
\begin{equation}\label{zj0}
	\varphi(\al) = \frac{1}{\al-1}-\frac{1}{\al+1} = \int\limits_0^{\infty}e^{-\al x}\left(e^x-e^{-x}\right)\,dx
\end{equation}
we deduce the Zinn-Justin formula,
\begin{equation}\label{zj1}
	\tau_N = \frac{1}{N!}\int\limits_0^{\infty}\cdots\int\limits_0^{\infty}\prod_{j=1}^Nw(x_j)\prod_{i<j}(x_i-x_j)^2dx_1\cdots dx_N,
\end{equation}
where
\begin{equation}\label{zj2}
	w(x) = e^{-\al x}\left(e^x-e^{-x}\right).
\end{equation}
Now introduce monic orthogonal polynomials $\{p_n(x)=x^n+\ldots\}_{n\geq 0}$ with respect to the measure $d\mu(x) = w(x)dx$ 
on the half-axis $[0,\infty)$,
\begin{equation}\label{zj21}
	\int\limits_0^{\infty}p_n(x)p_m(x)d\mu(x) = h_n\delta_{nm}
\end{equation}
and obtain from \eqref{zj1} via orthogonality, that
\begin{equation}\label{zj3}
	\tau_N = \prod_{k=0}^{N-1}h_k.
\end{equation}
The latter identity allows us to rewrite the Toda equation \eqref{toda1} on the critical line in the convenient form
\begin{equation}\label{zj5}
	\left(\ln\tau_N\right)'' = \frac{h_N}{h_{N-1}},\hspace{0.5cm}(') = \frac{d}{d\al},
\end{equation}
and for the partition function, via \eqref{IKform1},
\begin{equation}\label{zj4}
	Z_N = \left(\frac{\al^2-1}{2}\right)^{N^2}\prod_{k=0}^{N-1}\frac{h_k}{(k!)^2}.
\end{equation}
In \cite{BL2}, Bleher and Liechty derive the following large $N$ asymptotics of the partition function $Z_N$:
 
\begin{theo}[see \cite{BL2}]\label{theorem1} On the critical line between the disordered and ferroelectric phase region with $\al>1$, as $N\rightarrow\infty$,
\begin{equation}\label{the20}
	Z_N = CF^{N^2}G^{\sqrt{N}}N^{1/4}\left(1+O\big(N^{-1/2}\big)\right),
\end{equation}
with
\begin{equation*}
	F = \frac{\al+1}{2},\hspace{0.75cm} G = \exp\left[-\zeta\left(\frac{3}{2}\right)\sqrt{\frac{\al-1}{2\pi}}\right].
\end{equation*}
Here $\zeta(s)$ denotes the Riemann zeta function, $C>0$, and the error term in \eqref{the20} is uniform on any compact subset of the set
\begin{equation}\label{excset1}
	\left\{\al\in\mathbb{R}:\ \al>1\right\}.
\end{equation}
\end{theo}

The main result in the present paper is an explicit evaluation of the constant factor $C>0$. We prove the following result:

\begin{theo}\label{theorem2}
The constant factor $C$ in asymptotic formula \eqref{the20} is equal to
\begin{equation}\label{the21}
	C =e^{c} (\al-1)^{1/4},
\end{equation}
where
\begin{equation}\label{the22}
	c = \frac{1}{4}\ln 2+\frac{1}{2}\ln\pi
+\frac{1}{4\pi}\sum_{n=1}^{\infty}\left[-\frac{\pi}{n}+\sum_{m=1}^{\infty}\frac{1}{(m+n)\sqrt{mn}}\right].
\end{equation}
\end{theo}
\begin{rem} Using MAPLE for the numerical evaluation of the series, we obtain that
\begin{equation*}
	\sum_{n=1}^{\infty}\left[-\frac{\pi}{n}+\sum_{m=1}^{\infty}\frac{1}{(m+n)\sqrt{mn}}\right]= -3.568781612\ldots
\end{equation*}
and 
\begin{equation*}
	c= 0.4616571210\ldots
\end{equation*}
\end{rem}

The calculation of the constant factor in the asymptotics of the partition function is a notoriously difficult problem. This problem
appears not only in exactly solvable models of statistical mechanics, such as the six-vertex model and the Ising model, 
but also  in random matrix theory, combinatorics, theory of integrable PDEs, etc.
 In different settings, the ``constant factor problem''
is studied in the works of Tracy \cite{T}, Basor and Tracy \cite{BT}, Budylin and Buslaev \cite{BB}, Ehrhardt \cite{E}, 
Deift, Its, Krasovsky,
and Zhou \cite{DIKZ}, Deift, Its, and Krasovsky \cite{DIK}, Baik, Buckingham, and DiFranco \cite{BBdF}, Bothner and Its \cite{BI}, 
Forrester \cite{F}, and others.

To prove Theorem \ref{theorem2}, we develop the Riemann-Hilbert approach to  the double scaling limit of the partition function $Z_N$,
 as both $N$ and $\al$ tend to $\infty$  in such a way that $\frac{N}{\al}\to t\ge 0$, see Theorem \ref{prop6} below. The double scaling
asymptotics of the partition function can be of interest by itself.
Then we use  the Toda equation to show that the constant $C$ can be written as 
\begin{equation}\label{the23}
	C=(\al-1)^{1/4}e^{d\al+c}.
\end{equation}
After that we apply the double scaling asymptotics of $Z_N$ to 
calculate the values of $d$ and $c$. In this way we find that $d=0$ and that $c$ is given by formula   \eqref{the22}.\smallskip

The result of Theorem \ref{theorem2} adds to the work 
of the first author with Vladimir Fokin \cite{BF}, with Karl Liechty \cite{BL1,BL2,BL3}, 
and with the second author \cite{BB}. This series of papers proves conjectures of Paul Zinn-Justin 
\cite{Z} on the large $N$ asymptotics of $Z_N$ in the phase regions. 
For the convenience of the reader, we briefly summarize obtained results and 
outline what is known about the constant factor in different phase regions.\smallskip

{\it Ferroelectric phase region}. With parameterization \eqref{Fpara}, for any $\varepsilon>0$,
\begin{equation*}
	\textnormal{(F)}\hspace{0.85cm} Z_N = CF^{N^2}G^{N}\left(1+O\left(e^{-N^{1-\varepsilon}}\right)\right),\hspace{0.5cm} N\rightarrow\infty,
\end{equation*}
where
\begin{equation*}
	C=1-e^{-4\gamma},\hspace{0.5cm} G=e^{\gamma-t},\hspace{0.5cm} F=\sinh(\gamma+t),
\end{equation*}
see \cite{BL1}, so that
 the constant factor $C$ is known explicitly in the ferroelectric phase. \smallskip

{\it Antiferroelectric phase region}. Here, with parameterization \eqref{AFpara},
\begin{equation*}
	\textnormal{(AF)}\hspace{0.75cm} Z_N = CF^{N^2}\vartheta_4(N\omega)\left(1+O\left(N^{-1}\right)\right),\hspace{0.5cm} N\rightarrow\infty,
\end{equation*}
where 
\begin{equation*}
	\omega= \frac{\pi}{2}\left(1+\frac{t}{\gamma}\right),\hspace{0.5cm} F=\frac{\pi\big(\sinh(\gamma-t)\sinh(\gamma+t)\big)\vartheta_1'(0)}{2\gamma\vartheta_1(\omega)}
\end{equation*}
and $\vartheta_1(z) = \vartheta_1(z|q),\vartheta_4(z)=\vartheta_4(z|q)$ are the Jacobi theta functions with the elliptic nome $q=e^{-\frac{\pi^2}{2\gamma}}$, see \cite{BL3}. It is known that the constant factor $C$ does not depend on $t$, so that
\begin{equation}\label{cons1}
	C=C(\ga)>0,
\end{equation}
but its exact value is not known.
\smallskip

{\it Disordered phase region}. Compare \cite{BF}, with parameterization \eqref{Dpara} for some $\varepsilon>0$,
\begin{equation*}
	\textnormal{(D)}\hspace{0.75cm} Z_N=CF^{N^2}N^{\kappa}\left(1+O\left(N^{-\varepsilon}\right)\right),\hspace{0.5cm}N\rightarrow\infty,
\end{equation*}
with
\begin{equation*}
	\kappa=\frac{1}{12}-\frac{2\gamma^2}{3\pi(\pi-2\gamma)},\hspace{0.5cm}F = \frac{\pi\big(\sin(\gamma-t)\sin(\gamma+t)\big)}{2\gamma\cos\frac{\pi t}{2\gamma}}.
\end{equation*}
It is known that the constant factor $C>0$ has the following dependence on $t$:
\begin{equation}\label{cons2}
	C = C_0(\gamma)\left(\cos\frac{\pi t}{2\gamma}\right)^{\kappa},
\end{equation}
but the function $C_0(\gamma)>0$ is not known.\smallskip

{\it Critical line between the antiferroelectric and disordered phase regions.} 
With the parameterization $a=1-x,b=1+x,c=2,|x|<1$, see \cite{BB},
\begin{equation*}
	\textnormal{(AF-D)}\hspace{0.75cm} Z_N = CF^{N^2}N^{\frac{1}{12}}\left(1+O\left(N^{-1}\right)\right),\hspace{0.5cm}N\rightarrow\infty
\end{equation*}
where
\begin{equation*}
	F = \frac{\pi(1-x^2)}{2\cos\frac{\pi x}{2}}.
\end{equation*}
The constant factor $C>0$ has the structure
\begin{equation}\label{cons3} 
	C = C_0\left(\cos\frac{\pi x}{2}\right)^{\frac{1}{12}},
\end{equation}
where the universal constant $C_0>0$ is not known.\bigskip

In the last three cases, the structural information \eqref{cons1}, \eqref{cons2}, and \eqref{cons3} 
on the constant factors is obtained by combining the results in \cite{BL3,BF,BB} with the Toda equation. 
This can also be done in the present situation (see \eqref{the22}), which then leaves us with the 
determination of the constants $c$ and $d$. In order to compute them, we use the double scaling asymptotics
of the partition function as described above.\smallskip

In \cite{BL2}, Bleher and Liechty rescaled the original weight \eqref{zj2} as
\begin{equation*}
	w^o(x) = w\left(\frac{x}{\al-1}\right) = e^{-x}-e^{-rx},\hspace{0.5cm} r=\frac{\al+1}{\al-1}>1
\end{equation*}
and studied the constants $h_n^o$ associated with the monic orthogonal polynomials $\{p_n^o(x)\}_{n\geq 0}$, satisfying the orthogonality condition
\begin{equation*}
	\int\limits_0^{\infty}p_n^o(x)p_m^o(x)w^o(x)\,dx=h_n^o\delta_{nm}.
\end{equation*}
The main technical result in the work \cite{BL2} is the following asymptotic formula for $h_N^o$: As $N\rightarrow\infty$
\begin{equation}\label{BLres1}
	\ln\left[\frac{h_N^o}{(N!)^2}\right] = -\frac{\zeta(3/2)}{2\sqrt{\pi(r-1)}N^{1/2}}+\frac{1}{4N}+O\left(N^{-3/2}\right)
\end{equation}
which holds uniformly on any compact subset of the set \eqref{excset1}. Applying \eqref{zj4}, this result implies immediately \eqref{the20}, in particular it gives the listed explicit expressions for $F$ and $G$. However we cannot take the limit $\al\rightarrow\infty$ in \eqref{BLres1}. To overcome this difficulty, in addition to weight \eqref{zj2}, we will study the related weight
\begin{equation}\label{weightb}
	w_t(x) = \frac{w(tx)}{2t} = xe^{-NV(x)},\hspace{0.5cm} V(x) = x-\frac{\tau}{t}\ln S(tx),\hspace{0.5cm}t=\frac{N}{\al} = N\tau,\hspace{0.25cm} S(x)=\frac{\sinh x}{x}
\end{equation}
and its associated monic orthogonal polynomials $\{p_{n,t}(x)\}_{n\geq 0}$, satisfying the orthogonality condition
\begin{equation*}
	\int\limits_0^{\infty}p_{n,t}(x)p_{m,t}(x)w_t(x)\,dx = h_{n,t}\delta_{nm}.
\end{equation*}
Noticing that $h_N^o=(\al-1)^{2N+1}h_N$, we will prove the following generalization of \eqref{BLres1}:
\begin{theo}\label{theo1} As $N\rightarrow\infty$,
\begin{equation}\label{the11}
	\frac{h_N}{(N!)^2} = \frac{N}{8}\tau^{2N+2}b^2\exp\left[N(l+2)+\frac{v}{N}-\frac{1}{6N}+\varepsilon_N(\tau)\right],\hspace{0.5cm}\tau=\frac{1}{\alpha}
\end{equation}
where $\varepsilon_N$ is smooth in the parameter $\tau$ with
\begin{equation*}
	|\varepsilon_N|\leq \frac{c}{(N+1)^2},\hspace{0.5cm}\left|\frac{d\varepsilon _N}{d\tau}\right|\leq \frac{c}{(N+1)^2},\hspace{0.5cm}c>0
\end{equation*}
and the stated expansion \eqref{the11} is uniform with respect to the parameter $0\leq\tau\leq 1-\varepsilon <1$. Here, the parameter $b$ is determined implicitly via the equation
\begin{equation}\label{the12}
	b = \frac{4}{1-\tau}+\frac{2\tau}{(1-\tau)t}I(2bt),\hspace{0.5cm} I(z)= -1+\frac{z}{\pi}\int\limits_0^1\sqrt{\frac{u}{1-u}}\frac{du}{e^{zu}-1},\hspace{0.25cm}z\geq 0,
\end{equation}
the parameter $l$ equals
\begin{equation}\label{the13}
	l = 4(1-\ln 2)-\frac{b}{2}(1-\tau)-b+2\ln b-\frac{2\tau}{t}(1-\ln 2)+\frac{2\tau}{t}J(2bt)+\frac{\tau}{t}\ln S(tb)
\end{equation}
with
\begin{equation}\label{the131}
	J(z) = \frac{z}{\pi}\int\limits_0^1\left(\sqrt{\frac{u}{1-u}}-\arctan\sqrt{\frac{u}{1-u}}\right)\frac{du}{e^{zu}-1},\hspace{0.5cm}z\geq 0
\end{equation}
and
\begin{equation}\label{the14}
	v = \frac{3}{4bq(0)}-\frac{q'(b)}{4q^2(b)}+\frac{47}{12bq(b)},\hspace{0.5cm}q(z) = \frac{1}{2\pi}\oint\limits_{\Gamma}
	\sqrt{\frac{w}{w-b}}\frac{V'(w)}{w-z}\,dw,
\end{equation}
where $\Gamma$ is a counter-clockwise oriented contour containing $[0,b]\cup\{z\}$ in its interior such that $V(z)$ is analytic in the interior of $\Gamma$.
\end{theo}

The proof of Theorem \ref{theo1} relies on the Riemann-Hilbert approach to the orthogonal polynomials associated with the weight $w_t$, 
\begin{equation*}
	w_t(x) = xe^{-NV(x)},\ x\geq 0;\hspace{0.85cm}V(z) = z\big(1-\tau H(tz)\big),\hspace{0.5cm} H(z) = \frac{1}{z}\ln\left(\frac{\sinh z}{z}\right)
\end{equation*}
where the potential $V$ is analytic in the strip
\begin{equation*}
	\Delta_t = \left\{z\in\mathbb{C}:\ -\frac{\pi}{t}<\textnormal{Im}\,z<\frac{\pi}{t}\right\}.
\end{equation*}
This potential is somewhat close to the class of Laguerre potentials which were considered by Vanlessen in \cite{V}. We use a combination of techniques from \cite{V} and \cite{BL2} to derive \eqref{the11}. The explicit form of $C$ in \eqref{the21} will then follow from an interplay of the Toda equation \eqref{zj5} with \eqref{the11} which we combine with \eqref{zj4}. In particular we use the fact, that, as $\al\rightarrow\infty$ and $N$ is fixed, $w_t$ has the nontrivial limit
\begin{equation*}
	\lim_{\al\rightarrow\infty}w_t(x) = xe^{-Nx},\hspace{0.5cm}x\geq 0
\end{equation*}
and therefore the limiting orthogonal polynomials are classical Laguerre polynomials.\smallskip

The structure of the rest of the article is as follows. We derive Theorem \ref{theo1} through
the Riemann-Hilbert approach to orthogonal polynomials. This requires the construction of the equilibrium measure as stated in Section \ref{eqm}, i.e. in particular,
evaluation of the right endpoint $b$ of its support, 
its density, and the Lagrange multiplier $l$. Then, following the Deift-Zhou nonlinear steepest descent roadmap, 
we carry out in Section \ref{RHa} a sequence of transformations. These will allow us to approximate the solution 
of the initial Riemann-Hilbert problem by explicit parametrices and by an iterative solution of a singular integral equation.
As an application of this analysis, we prove in Section \ref{proof_theo1} Theorem \ref{theo1}. 
In Section \ref{toda_eq}, we use the Toda equation to derive formula \eqref{the23} for  the constant $C$. 
Then, in Section \ref{dsl}, we prove  the double scaling asymptotics of the partition function.
Finally, in Section \ref{proof_theo2}, we prove Theorem \ref{theorem2}.


\section{Equilibrium measure}\label{eqm}

\subsection{Definition of the equilibrium measure and evaluation of the endpoint of its support}
We have rescaled the original weight function $w(x)$ from \eqref{zj2} as
\begin{equation*}
	w_t(x) = \frac{w(tx)}{2t} = xe^{-NV(x)},\hspace{0.5cm} V(z) = z\big(1-\tau H(tz)\big),\hspace{0.25cm} H(z) = \frac{1}{z}\ln\left(\frac{\sinh z}{z}\right),\hspace{0.25cm} t= N\tau = \frac{N}{\al},
\end{equation*} 
hence the associated monic orthogonal polynomials $\{p_{n,t}(x)\}_{n\geq 0}$ are related to the initial ones in \eqref{zj21} via the relations
\begin{equation*}
	h_n = 2t^{2n+2}h_{n,t},\hspace{0.5cm} p_n(x) = t^np_{n,t}\left(\frac{x}{t}\right),\hspace{0.25cm}n\geq 0.
\end{equation*}
Thus, after computing the large $N$ asymptotics of $h_{N,t}$, we can evaluate $\tau_N$ from \eqref{zj3} via $h_N=2t^{2N+2}h_{N,t}$.\smallskip

Notice that we can write the Hankel determinant as
\begin{equation*}
	\tau_N = \frac{2^Nt^{N(N+1)}}{N!}\int\limits_0^{\infty}\cdots\int\limits_0^{\infty}\left(\prod_{j=1}^Ny_j\right)e^{-N\sum_{j=1}^NV(y_j)}\prod_{i<j}\left(y_i-y_j\right)^2dy_1\cdots dy_N
\end{equation*}
which allows us, with the help of the empirical measure $\nu$ on $[0,\infty)$
\begin{equation*}
	\nu(s) = \frac{1}{N}\sum_{k=1}^N\delta(s-y_k),\hspace{0.5cm}\int\limits_0^{\infty}d\nu(s) = 1,
\end{equation*}
to express parts of the integrand as
\begin{equation*}
	e^{-N\sum_{j=1}^NV(y_j)}\prod_{i<j}\left(y_i-y_j\right)^2 = e^{-N^2H(\nu)},
\end{equation*}
with the energy functional
\begin{equation*}
	H(\nu) = \iint\ln|t-s|^{-1}d\nu(t)d\nu(s)+\int V(s)d\nu(s).
\end{equation*}
This observation leads to the expectation that the value of $\tau_N$, as $N\rightarrow\infty$, will be concentrated in a vicinity of the global minimum of the functional $H(\nu)$, with $\nu$ varying over
\begin{equation*}
	\mathcal{M}^1[0,\infty) = \left\{\mu\in\textnormal{Borel measures on}\ [0,\infty):\ \int d\mu=1\right\}.
\end{equation*}
But it is well known (cf. \cite{DKM,DKMVZ}) that the minimization problem
\begin{equation*}
	E^V = \inf_{\mu\in\mathcal{M}^1[0,\infty)}\left[\iint \ln|t-s|^{-1}d\mu(t)d\mu(s)+\int V(s)d\mu(s)\right]
\end{equation*}
has a unique solution $\mu=\mu^V\in\mathcal{M}^1[0,\infty)$, the {\it equilibrium measure}. We now begin to gather various characteristica of the equilibrium measure:\bigskip

As the potential $V(z)=z(1-\tau H(tz))$ is convex, the support of the equilibrium measure $\mu^V$ consists of a single interval
\begin{equation*}
	J=[0,b]\subset\mathbb{R}.
\end{equation*}
Our first goal is to derive an expansion for the right endpoint $b$, as $N\rightarrow\infty$ for different values of the double scaling parameter $t$. To this end use the $g$-function
\begin{equation}\label{emeq1}
	g(z) = \int\limits_J\ln(z-w)d\mu^V(w) = \int\limits_0^b\ln(z-w)\psi(w)dw,\hspace{0.5cm} z\in\mathbb{C}\backslash(-\infty,b]
\end{equation}
with the principal branch chosen in the logarithm. The equilibrium measure determines the $g$-function by definition, but on the other hand the $g$-function determines the equilibrium measure uniquely through a set of variational conditions: 
\begin{equation}\label{emeq2}
	z\in[0,\infty)\backslash J:\ g_+(z)+g_-(z)-V(z)-l\leq 0,\hspace{0.75cm}z\in J:\ g_+(z)+g_-(z)-V(z)-l=0,
\end{equation}
where $l\in\mathbb{R}$ is the {\it Lagrange multiplier}. The latter equality on the support $J$, leads to an additive Riemann-Hilbert problem for the unknown $g'(z)$ which is solved explicitly
\begin{equation*}
	g'(z) = \frac{1}{2\pi}\sqrt{\frac{z}{z-b}}\int\limits_0^b\sqrt{\frac{w}{b-w}}\frac{V'(w)}{z-w}\,dw,\hspace{0.5cm}z\in\mathbb{C}\backslash[0,b].
\end{equation*}
Comparing the large $z$-asymptotics of the last equation with the one obtained from \eqref{emeq1}, we derive the following defining equation on the right endpoint $b$
\begin{equation}\label{emeq3}
	\frac{1}{2\pi}\int\limits_0^b\sqrt{\frac{w}{b-w}}V'(w)\,dw = 1.
\end{equation}
Since
\begin{equation*}
	V'(z) = 1-\tau\left(\frac{\cosh(tz)}{\sinh(tz)}-\frac{1}{tz}\right)=1-\tau\left(1-\frac{1}{tz}+\frac{2}{e^{2tz}-1}\right)
\end{equation*}
we obtain from \eqref{emeq3} after the change of variables $w=bu$
\begin{equation*}
	b = \frac{4}{1-\tau}+\frac{2\tau}{(1-\tau)t}I(2bt),\hspace{0.5cm} I(z) = -1+\frac{z}{\pi}\int\limits_0^1\sqrt{\frac{u}{1-u}}\frac{du}{e^{zu}-1},\hspace{0.25cm}z\geq 0
\end{equation*}
which is equation \eqref{the12} in Theorem \ref{theo1}. We will solve the last equation for $b$ iteratively, before that, let us study the asymptotic behavior of $I(z)$ as $z\rightarrow 0$ and $z\rightarrow+\infty$.
\begin{prop}\label{prop1} The function $I(z)$ is analytic in the strip
\begin{equation*}
	\Delta = \left\{z\in\mathbb{C}:\ -\pi<\textnormal{Im}\,z<\pi\right\}
\end{equation*}
with
\begin{equation}\label{prop1eq1}
	I(z) = -\frac{z}{4}+\frac{z^2}{32}+O\big(z^3\big),\qquad z\rightarrow 0,
\end{equation}
and as $z\to+\infty$,
\begin{equation}\label{prop1eq2}
	I(z) = -1+\frac{\zeta(3/2)}{2\sqrt{\pi}\, z^{1/2}}+\frac{3\zeta(5/2)}{8\sqrt{\pi}\,z^{3/2}}
+O\big(z^{-5/2}\big).
\end{equation}
\end{prop}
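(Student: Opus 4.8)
The plan is to analyze the integral
$$
	I(z) = -1 + \frac{z}{\pi}\int_0^1\sqrt{\frac{u}{1-u}}\,\frac{du}{e^{zu}-1}
$$
directly, treating analyticity, the small-$z$ expansion, and the large-$z$ expansion as three separate tasks.

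For analyticity in the strip $\Delta$, the only possible singularities of the integrand come from the zeros of $e^{zu}-1$, i.e.\ from $zu\in 2\pi i\Z$. Since $u$ ranges over $(0,1]$, a zero in the integration range would require $|\Im z|\cdot u \geq 2\pi$ for some $u\le 1$, forcing $|\Im z|\geq 2\pi$; and the apparent singularity at $u=0$ is removable since $u/(e^{zu}-1)\to 1/z$ there, while the $\sqrt{u}$ factor is integrable. Thus for $z$ in $\Delta$ the integrand is, after removing the $u=0$ singularity, a holomorphic function of $z$ dominated locally uniformly by an integrable function of $u$, so Morera/Fubini gives analyticity of $I$ on $\Delta$. (One should note $z=0$ itself: there $z/(e^{zu}-1)\to 1/u$, so $I(0)=-1+\frac1\pi\int_0^1(1-u)^{-1/2}u^{-1/2}\,du=-1+1=0$, consistent with \eqref{prop1eq1}.)

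For the $z\to 0$ expansion, I would substitute the Taylor expansion $\frac{z}{e^{zu}-1} = \frac1u - \frac{zu}{2}\cdot\frac1u\cdot u + \cdots$; more precisely $\frac{z}{e^{zu}-1}=\frac1u\cdot\frac{zu}{e^{zu}-1}=\frac1u\big(1-\tfrac{zu}{2}+\tfrac{(zu)^2}{12}+O((zu)^3)\big)$, so that
$$
	I(z) = -1 + \frac1\pi\int_0^1\frac{1}{\sqrt{u(1-u)}}\Big(1-\tfrac{zu}{2}+\tfrac{z^2u^2}{12}+O(z^3)\Big)du.
$$
Using the Beta-integral values $\int_0^1 u^{k-1/2}(1-u)^{-1/2}du = \pi\binom{2k}{k}4^{-k}$ (namely $\pi$, $\pi/2$, $3\pi/8$ for $k=0,1,2$) gives $I(z) = -1 + 1 - \tfrac{z}{4} + \tfrac{z^2}{32} + O(z^3)$, which is \eqref{prop1eq1}. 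The error term is controlled uniformly because the Taylor remainder of $zu/(e^{zu}-1)$ is $O((zu)^3)$ uniformly for $u\in[0,1]$, $|z|$ small.

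The large-$z$ expansion is the main obstacle and requires care near the endpoint $u=1$, where the exponential decay of $1/(e^{zu}-1)$ is weakest. Write $1/(e^{zu}-1) = \sum_{m=1}^\infty e^{-mzu}$ for $z>0$ and interchange sum and integral:
$$
	I(z) = -1 + \frac{z}{\pi}\sum_{m=1}^\infty\int_0^1\sqrt{\frac{u}{1-u}}\,e^{-mzu}\,du.
$$
Each Watson-type integral is dominated by the neighbourhood of $u=0$: substituting $u = s/(mz)$ one gets $\int_0^1\sqrt{u/(1-u)}\,e^{-mzu}du \sim (mz)^{-3/2}\int_0^\infty s^{1/2}e^{-s}ds\big(1+\tfrac{3s}{2mz}+\cdots\big)$, i.e.\ an asymptotic series in powers of $(mz)^{-1}$ with leading term $\Gamma(3/2)(mz)^{-3/2}=\tfrac{\sqrt\pi}{2}(mz)^{-3/2}$ and next term $\tfrac34\Gamma(5/2)(mz)^{-5/2}$. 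Summing over $m$ produces $\sum_m m^{-3/2}=\zeta(3/2)$ and $\sum_m m^{-5/2}=\zeta(5/2)$, yielding
$$
	I(z) = -1 + \frac{z}{\pi}\Big(\frac{\sqrt\pi}{2}\,\frac{\zeta(3/2)}{z^{3/2}} + \frac{3\sqrt\pi}{8}\cdot\frac{\Gamma(5/2)}{\Gamma(3/2)}\cdot\frac{\zeta(5/2)}{z^{5/2}} + O(z^{-7/2})\Big),
$$
and simplifying (with $\Gamma(5/2)/\Gamma(3/2)=3/2$) gives exactly \eqref{prop1eq2}. The delicate point is justifying that the Watson-lemma expansion of the individual integrals can be summed term-by-term against $\sum_m$ with a uniformly controlled remainder; this I would handle by splitting the $u$-integral at $u=1/2$, treating $u\in[1/2,1]$ as exponentially small (uniformly summable), and applying Watson's lemma with explicit remainder on $u\in[0,1/2]$, the remainder after $k$ terms being $O((mz)^{-(3/2+k)})$ with an $m$-independent constant, so that $\sum_m$ converges for $k\ge 0$ and contributes a further $\zeta(3/2+k)$ factor to the error.
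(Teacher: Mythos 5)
Your overall strategy is sound and, in substance, the same as the paper's: the analyticity and small-$z$ arguments are identical (Taylor expansion of $w/(e^w-1)$ against the Beta integrals $\pi$, $\pi/2$, $3\pi/8$), and for large $z$ the paper simply expands $\sqrt{u/(1-u)}=u^{1/2}+\tfrac12 u^{3/2}+O(u^{5/2})$ and invokes $\int_0^\infty u^{s-1}(e^u-1)^{-1}du=\Gamma(s)\zeta(s)$, which is exactly your geometric-series-plus-Watson computation repackaged; your version is merely more explicit about uniform control of the remainders.

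However, your subleading coefficient in the large-$z$ part is wrong as written. Since $(1-u)^{-1/2}=1+\tfrac{u}{2}+O(u^2)$, the substitution $u=s/(mz)$ gives the correction factor $1+\tfrac{s}{2mz}+\cdots$, not $1+\tfrac{3s}{2mz}+\cdots$, so the second Watson term is
\begin{equation*}
\frac{1}{2}\,\Gamma\!\left(\tfrac{5}{2}\right)(mz)^{-5/2}=\frac{3\sqrt{\pi}}{8}\,(mz)^{-5/2},
\end{equation*}
not $\tfrac{3}{4}\Gamma(5/2)(mz)^{-5/2}$. Your final display compounds this: the coefficient $\tfrac{3\sqrt{\pi}}{8}\cdot\tfrac{\Gamma(5/2)}{\Gamma(3/2)}=\tfrac{9\sqrt{\pi}}{16}$ would yield $\tfrac{9\zeta(5/2)}{16\sqrt{\pi}\,z^{3/2}}$ for the third term of $I(z)$, which does not match \eqref{prop1eq2}; the spurious factor $\Gamma(5/2)/\Gamma(3/2)=3/2$ must be dropped. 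With the corrected coefficient $\tfrac{3\sqrt{\pi}}{8}(mz)^{-5/2}$, summation over $m$ produces $\tfrac{3\sqrt{\pi}}{8}\zeta(5/2)z^{-5/2}$ and, after multiplying by $z/\pi$, exactly the term $\tfrac{3\zeta(5/2)}{8\sqrt{\pi}\,z^{3/2}}$ in \eqref{prop1eq2}. Everything else, including the splitting at $u=1/2$ to justify term-by-term summation of the Watson remainders, is fine.
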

\begin{proof} Analyticity of $I(z)$ in $\Delta$ follows immediately from the analyticity of $\frac{z}{e^z-1}$ in $\Delta$, hence we are left with the two asymptotic expansions. When $z\rightarrow 0$, we use the asymptotic formula,
\begin{equation*}
	\frac{1}{e^z-1} = \frac{1}{z}-\frac{1}{2}+\frac{z}{12}+O\left(z^2\right),\hspace{0.5cm}z\to 0
\end{equation*}
combined with the integrals,
\begin{equation*}
	\int\limits_0^1\frac{du}{\sqrt{u(1-u)}} = \pi,\hspace{0.5cm}\int\limits_0^1\sqrt{\frac{u}{1-u}}\,du = \frac{\pi}{2},\hspace{0.5cm}\int\limits_0^1u\sqrt{\frac{u}{1-u}}\,du = \frac{3\pi}{8}
\end{equation*}
to obtain
\begin{equation*}
	I(z) = -1+\frac{z}{\pi}\int\limits_0^1\sqrt{\frac{u}{1-u}}\left(\frac{1}{zu}-\frac{1}{2}+\frac{zu}{12}+O\left(z^2u^2\right)\right)\,du = -\frac{z}{4}+\frac{z^2}{32}+O\left(z^3\right),
\end{equation*}
which is \eqref{prop1eq1}. When $z\rightarrow+\infty$, we use the asymptotic formula,
\begin{equation*}
	\sqrt{\frac{u}{1-u}} = u^{1/2}+\frac{u^{3/2}}{2}+O\left(u^{5/2}\right),\hspace{0.5cm}u\rightarrow 0,
\end{equation*}
and the integrals,
\begin{equation*}
	\int\limits_0^{\infty}\frac{u^{1/2}}{e^u-1}\,du = \frac{\sqrt{\pi}}{2}\zeta\left(\frac{3}{2}\right),\hspace{0.5cm}\int\limits_0^{\infty}\frac{u^{3/2}}{e^u-1}\,du = \frac{3\sqrt{\pi}}{4}\zeta\left(\frac{5}{2}\right),
\end{equation*}
which gives \eqref{prop1eq2}.
\end{proof}
Let us now return to equation \eqref{the12}. Since $\al>1$, we have that $\tau<1$ and we will in fact assume from now on, that $\tau$ is separated from $1$, so that
\begin{equation*}
	0\leq \tau \leq 1-\varepsilon<1,
\end{equation*}
where $\varepsilon>0$ is fixed throughout the remainder of this paper. To solve \eqref{the12} for $b$, use iterations
\begin{equation}\label{iter1}
	b_j=\frac{4}{1-\tau}+\frac{2\tau }{(1-\tau)t}\,I(2b_{j-1}t),\ j\geq 1 \qquad b_0=\frac{4}{1-\tau}\,.
\end{equation}
Consider the mapping 
\begin{equation*}
	 b\stackrel{f}{\mapsto} \frac{4}{1-\tau}+\frac{2\tau }{(1-\tau)t}\,I(2bt)
\end{equation*}
which satisfies
\begin{equation*}
	\frac{df}{db}=\frac{4\tau }{(1-\tau)}\,I'(2bt).
\end{equation*}
From \eqref{prop1eq2}, after differentiation, we obtain the estimate,
\begin{equation*}
	\left|\frac{df}{db}\right|\le \frac{C\tau }{1+t^{3/2}}\,,\qquad t>0,\quad b>1,
\end{equation*}
where $C>0$ is a constant independent of $\tau$ and $t$, i.e. the mapping $f$ is contracting for small $\tau$ in a neighborhood of the point $b_0$,
and hence $f$ has a fixed point $b$ which can be obtained as a limit of the iterations 
\begin{equation*}
	b_j= f(b_{j-1}),\quad j\ge 1.
\end{equation*}
In addition, we obtain the estimate of the difference $|b_j-b|$ as
\begin{equation}\label{iter2}
	|b_j-b|\le \hat{C} \left(\frac{\tau }{1+t^{3/2}}\right)^{j+1},\hspace{0.25cm}\hat{C}>0.
\end{equation}
Back to \eqref{iter1}, we have in the first iteration 
\begin{equation*}
	b_1=\frac{4}{1-\tau}+\frac{2\tau}{(1-\tau)t}\,I\left(\frac{8t}{1-\tau}\right)
\end{equation*}
and in the second,
\begin{equation}\label{iter3}
	b_2=\frac{4}{1-\tau}+\frac{2\tau}{(1-\tau)t}\,I\left(2tb_1\right)
=\frac{4}{1-\tau}+\frac{2\tau}{(1-\tau)t}\,I\left(\frac{8t}{1-\tau}+\frac{4t\tau}{(1-\tau)t}\,I\left(\frac{8t}{1-\tau}\right)\right).
\end{equation}
Combining now \eqref{iter3} with \eqref{iter2} we obtain
\begin{prop}\label{prop2} As $N\to\infty$, the right endpoint $b$ of the equilibrium measure has the asymptotic behavior,
\begin{equation}\label{prop2eq1}
	b=\frac{4}{1-\tau}+\frac{2\tau}{(1-\tau)t}\,I\left(\frac{8t}{1-\tau}
+\frac{4\tau}{1-\tau}\,I\left(\frac{8t}{1-\tau}\right)\right)
+ O\left(\left(\frac{\tau }{1+t^{3/2}}\right)^3\right)\,,
\end{equation}
which is uniform with respect to the parameters $0\le \tau\le 1-\ep<1$ and $t\ge 0$.
\end{prop}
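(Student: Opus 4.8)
The plan is to read \eqref{prop2eq1} off directly from the contraction scheme already set up for equation \eqref{the12}. Recall that $b$ has been identified as the unique fixed point, in a neighborhood of $b_0=\frac{4}{1-\tau}$, of the map $f(b)=\frac{4}{1-\tau}+\frac{2\tau}{(1-\tau)t}\,I(2bt)$, and that it is realized as the limit of the iterates $b_j=f(b_{j-1})$ with the geometric control \eqref{iter2}, namely $|b_j-b|\le\hat{C}\big(\frac{\tau}{1+t^{3/2}}\big)^{j+1}$. So it is enough to take $j=2$ in this estimate and to write $b_2$ out explicitly.

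First I would substitute $b_1=\frac{4}{1-\tau}+\frac{2\tau}{(1-\tau)t}\,I\!\big(\frac{8t}{1-\tau}\big)$, read off from \eqref{iter1}, into $b_2=f(b_1)$; this reproduces the expression in \eqref{iter3}. Inside the argument of the outer $I$ the coefficient $\frac{4t\tau}{(1-\tau)t}$ collapses to $\frac{4\tau}{1-\tau}$, so that
\[
b_2=\frac{4}{1-\tau}+\frac{2\tau}{(1-\tau)t}\,I\!\left(\frac{8t}{1-\tau}+\frac{4\tau}{1-\tau}\,I\!\left(\frac{8t}{1-\tau}\right)\right),
\]
which is exactly the explicit main term in \eqref{prop2eq1}. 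Combining this identity with the bound $|b-b_2|\le\hat{C}\big(\frac{\tau}{1+t^{3/2}}\big)^3$ from \eqref{iter2} then gives \eqref{prop2eq1} with the asserted error term $O\big((\tau/(1+t^{3/2}))^3\big)$.

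Finally, for the uniformity over $0\le\tau\le 1-\ep<1$ and $t\ge 0$ nothing new is needed: the constant $\hat{C}$ in \eqref{iter2} is independent of $\tau$ and $t$, and since the statement is asymptotic in $N$ with $t=N\tau$, the Lipschitz bound $\big|\frac{df}{db}\big|\le\frac{C\tau}{1+t^{3/2}}$ is uniformly $\le\frac12$ once $N$ is large --- either $\tau$ lies below a fixed threshold, making $C\tau$ small, or $\tau$ is bounded away from $0$, making $t=N\tau\to\infty$ so that the denominator dominates --- while $b_j>1$ holds automatically because $I(z)\ge-1$ for $z\ge 0$ forces $f(b)\ge\frac{4-2/N}{1-\tau}>1$. (At $t=0$ one simply has $I(0)=0$ by \eqref{prop1eq1}, hence $b=\frac{4}{1-\tau}$, and \eqref{prop2eq1} holds trivially.) The argument is thus pure book-keeping on top of \eqref{iter1}--\eqref{iter3}, and I do not foresee any real obstacle; the one point that deserves attention, namely that the iterates remain inside the region where $f$ contracts, is precisely what has already been verified in establishing \eqref{iter2}.
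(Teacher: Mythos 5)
Your proposal is correct and is essentially the paper's own argument: the paper also obtains \eqref{prop2eq1} by combining the explicit second iterate \eqref{iter3} with the contraction estimate \eqref{iter2} at $j=2$. Your additional remarks on why the contraction is uniform over $0\le\tau\le1-\ep$ (either $\tau$ small or $t=N\tau$ large) and why the iterates stay in the region $b>1$ are sound and, if anything, slightly more careful than the paper's treatment.
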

At this point we would like to collect some facts on the density $\psi(x),x\in[0,b]$ of the equilibrium measure as introduced in \eqref{emeq1}.


\subsection{ Evaluation of the density of the equilibrium measure}
We use \cite{DKMVZ} and \cite{V}, more precisely the identities
\begin{equation*}
	\psi(x) = \frac{1}{2\pi}\sqrt{\frac{b-x}{x}}\,q(x),\hspace{0.5cm} x\in[0,b]
\end{equation*}
with (compare \eqref{the14})
\begin{equation*}
	q(z) = \frac{1}{2\pi i}\oint\limits_{\Gamma}\sqrt{\frac{w}{w-b}}\frac{V'(w)}{w-z}\,dw
\end{equation*}
where $\Gamma$ is a counter-clockwise oriented contour containing $[0,b]\cup\{z\}$ in its interior such that $V(z)$ is analytic in the interior of $\Gamma$. Replacing $V'(w)$ by
\begin{equation*}
	V'(z) = 1-\tau k(tz),\hspace{0.5cm}k(z) = \frac{\cosh z}{\sinh z}-\frac{1}{z}
\end{equation*}
we obtain after applying residue theorem
\begin{equation}\label{denseq1}
	q(z) = 1+\tau s(z,t),\hspace{0.5cm}s(z,t) = -\frac{1}{2\pi i}\oint\limits_{\Gamma}\sqrt{\frac{w}{w-b}}\frac{k(tw)}{w-z}\,dw.
\end{equation}
The properties of the function $s(z,t)$ will be important for us. They are described as follows:
\begin{prop}\label{prop3} The function $s(z,t)$ is real analytic on the set $[0,b]\times\R$. As $t\to 0$,
\begin{equation}\label{prop3eq1}
	s(z,t)=-\frac{t}{3}\left(z+\frac{b}{2}\right) +\frac{t^3}{45}\left(z^3+\frac{b}{2}z^2+\frac{3b^2}{8}z+\frac{5b^3}{16}\right)+O\big(t^5\big),
\end{equation}
and as $t\to\infty$,
\begin{equation}\label{prop3eq2}
	s(z,t) =-1+ O\big(t^{-1/2}\big),\hspace{1cm}\frac{\partial s(z,t)}{\partial z} =O\big(t^{-1/2}\big)\,,
\end{equation}
uniformly in $z\in[0,b]$.
\end{prop}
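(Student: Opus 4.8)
The plan is to treat all three assertions through the single Cauchy-type integral \eqref{denseq1},
\[
  s(z,t)=-\frac1{2\pi i}\oint_{\Gamma}\sqrt{\frac{w}{w-b}}\,\frac{k(tw)}{w-z}\,dw,\qquad k(w)=\coth w-\frac1w ,
\]
keeping in mind that $k$ is odd, has a removable singularity at $w=0$ with $k(w)=\frac w3-\frac{w^3}{45}+O(w^5)$, and is meromorphic with nearest poles at $w=\pm i\pi$, so that $k(tw)$ is holomorphic in $|w|<\pi/t$; also $b$ is fixed and bounded by Proposition \ref{prop2} (with $\tau\le1-\varepsilon$). For the real‑analyticity statement I would, for each $t_0\in\R$, take a contour $\Gamma=\Gamma_{t_0}$ encircling $[0,b]$ inside the strip $\{|\Im w|<\tfrac{\pi}{2(|t_0|+1)}\}$, so that $k(tw)$ is holomorphic in a fixed neighbourhood of $\Gamma_{t_0}$ for all $t$ near $t_0$; then the integrand is jointly holomorphic in $(w,z,t)$ there and Fubini together with Morera shows $s$ is real analytic near $[0,b]\times\{t_0\}$, hence on all of $[0,b]\times\R$ (near $t_0=0$ one keeps a fixed contour and $s(z,0)=0$).

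For the expansion as $t\to0$ I would, once $bt<\pi$, fix $\Gamma$ inside $|w|<\pi/t$ and insert the uniformly convergent series $k(tw)=\frac{tw}3-\frac{(tw)^3}{45}+O(t^5w^5)$. Integrating term by term reduces matters to the integrals $c_m(z):=\frac1{2\pi i}\oint_{\Gamma}\sqrt{\frac w{w-b}}\,\frac{w^m}{w-z}\,dw$, each of which is the coefficient of $w^{-1}$ at infinity in the product of $\sqrt{\frac w{w-b}}=1+\frac b{2w}+\frac{3b^2}{8w^2}+\frac{5b^3}{16w^3}+\cdots$ with $\frac{w^m}{w-z}=w^{m-1}\sum_{j\ge0}(z/w)^j$; this gives $c_1(z)=z+\frac b2$ and $c_3(z)=z^3+\frac b2z^2+\frac{3b^2}8z+\frac{5b^3}{16}$, whence $s(z,t)=-\frac t3c_1(z)+\frac{t^3}{45}c_3(z)+O(t^5)$, which is \eqref{prop3eq1}.

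For the regime $t\to\infty$ I would first use $\frac1{2\pi i}\oint_{\Gamma}\sqrt{\frac w{w-b}}\frac{dw}{w-z}=1$ (deform $\Gamma$ to a large circle) to write $s(z,t)+1=-\frac1{2\pi i}\oint_{\Gamma}\sqrt{\frac w{w-b}}\frac{k(tw)-1}{w-z}\,dw$, with $k(tw)-1=\frac2{e^{2tw}-1}-\frac1{tw}$. Collapsing $\Gamma$ onto $[0,b]$ and using the boundary values of the square root (the residue contributions at $w=z$ cancelling, as in \cite{V}) yields, for $x\in(0,b)$, the real principal‑value representation $s(x,t)+1=-\frac1\pi\,\mathrm{p.v.}\!\int_0^b\sqrt{\frac w{b-w}}\frac{k(tw)-1}{w-x}\,dw$, alongside the elementary identity $\frac1\pi\,\mathrm{p.v.}\!\int_0^b\sqrt{\frac w{b-w}}\frac{dw}{w-x}=1$. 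Writing $k(tw)-1=\rho(tw)$ with $\rho(u)=\frac2{e^{2u}-1}-\frac1u$ (bounded on $[0,\infty)$, and $\rho(u)=-\frac1u+O(e^{-2u})$ as $u\to\infty$) and rescaling $w=u/t$, $x=\xi/t$ exhibits the factor $\sqrt{\frac w{b-w}}=\sqrt{\frac u{bt-u}}$, of size $t^{-1/2}$ near the hard edge, and gives $s(\xi/t,t)+1=-\frac1{\pi\sqrt{bt}}\,\mathrm{p.v.}\!\int_0^\infty\frac{\sqrt u\,\rho(u)}{u-\xi}\,du+o(t^{-1/2})$ uniformly in $\xi\ge0$; the remaining integral is a bounded function of $\xi$, so $s(z,t)=-1+O(t^{-1/2})$ uniformly on $[0,b]$. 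For $\partial_zs$ I would differentiate the contour representation, integrate by parts to transfer the derivative onto $\sqrt{\frac w{w-b}}(k(tw)-1)$ (taming the $(w-z)^{-2}$ kernel), and rerun the same rescaling estimate.

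The analyticity statement and the $t\to0$ expansion are routine. The crux is the large‑$t$ analysis, and the genuinely delicate part is the uniformity all the way to the hard edge $z=0$: there the pole $w=z$ of the Cauchy kernel meets the branch point of $\sqrt{w/(w-b)}$ precisely where the equilibrium density degenerates, so one must carefully balance the $t$‑dependent shrinking of the contour near $0$ against the $(w-z)^{-1}$ (respectively $(w-z)^{-2}$) singularity and keep track of the cancellations in the differentiated principal‑value integral in order to land on the sharp $O(t^{-1/2})$ rather than a weaker power of $t$.
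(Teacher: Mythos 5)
Your treatment of real-analyticity and of the $t\to0$ expansion coincides with the paper's: both insert the Taylor expansion of $k$ at the origin into \eqref{denseq1} and read off the coefficient of $w^{-1}$ at infinity by the residue theorem, and your $c_1(z)$, $c_3(z)$ reproduce \eqref{prop3eq1} exactly. The only genuine divergence is the $t\to\infty$ estimate. The paper likewise first strips off the constant part of $k$ to get $s(z,t)+1=-\frac{1}{2\pi i}\oint_{\Gamma}\sqrt{w/(w-b)}\,\frac{2}{e^{2tw}-1}\,\frac{dw}{w-z}$ and rescales $u=2tw$, but then it keeps the contour \emph{off} the real axis, taking $2t\Gamma$ to be a long ``stadium'' at distance $1$ from the segment $[0,2tb]$. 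On that contour $|u-2tz|\ge 1$ for every $z\in[0,b]$, the factor $2/(e^{u}-1)$ decays exponentially in $\Re u$, and $|\sqrt{u/(u-2tb)}|=O(t^{-1/2})$ wherever that exponential factor is not negligible; the derivative bound is word-for-word the same with $(u-2tz)^{-2}$ in place of $(u-2tz)^{-1}$, which is still bounded by $1$. Both estimates in \eqref{prop3eq2}, and their uniformity in $z\in[0,b]$, are then immediate, and there is no hard-edge delicacy at all. Your route --- collapsing $\Gamma$ onto $[0,b]$ to a principal-value integral and then rescaling --- is workable (the half-residue contributions at $w=z$ do cancel, since the boundary values of $\sqrt{w/(w-b)}$ from the two sides are opposite), but it buys nothing and costs you the uniform control of a p.v.\ integral in the parameter $\xi=tz\in[0,tb]$, including near the soft edge $u\approx bt$ where $\sqrt{u/(bt-u)}$ blows up and $\xi$ may sit; the ``genuinely delicate'' hard-edge balancing you flag is an artifact of this choice rather than an intrinsic difficulty. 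If you keep your version you must also justify the integration by parts in the p.v.\ sense for $\partial_z s$, a step the paper's off-axis contour renders unnecessary.
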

\begin{proof} The integrand in \eqref{denseq1} is analytic with respect to $(z,t)\in[0,b]\times\mathbb{R}$, hence $s(z,t)$ is analytic as well and $s(z,t)$ is real-valued since the contour $\Gamma$ can be deformed to the interval $[0,b]$ covered twice. Let us now derive the asymptotic formulae for $s(z,t)$. When $t\rightarrow 0$, use
\begin{equation*}
	k(z) = \frac{z}{3}-\frac{z^3}{45}+O\left(z^5\right),\hspace{0.5cm}z\rightarrow 0
\end{equation*}
and obtain, as $t\rightarrow 0$,
\begin{equation*}
	s(z,t) = -\frac{1}{2\pi i}\oint\limits_{\Gamma}\sqrt{\frac{w}{w-b}}\,\frac{tw}{3}\,\frac{dw}{w-z} 
	+\frac{1}{2\pi i}\oint\limits_{\Gamma}\sqrt{\frac{w}{w-b}}\,\frac{(tw)^3}{45}\,\frac{dw}{w-z}+O\left(t^5\right),
\end{equation*}
hence equation \eqref{prop3eq1} follows from residue theorem, noticing that
\begin{eqnarray*}
	\sqrt{\frac{w}{w-b}}\,\frac{1}{w-z} &=& 1+\frac{1}{w}\left(z+\frac{b}{2}\right)+\frac{1}{w^2}\left(z^2+\frac{b}{2}z+\frac{3b^2}{8}\right)+\frac{1}{w^3}\bigg(z^3+\frac{b}{2}z^2\\
	&&+\frac{3b^2}{8}z+\frac{5b^3}{16}\bigg)+O\left(w^{-4}\right),\hspace{0.5cm}|w|\rightarrow\infty.
\end{eqnarray*}
For the expansions \eqref{prop3eq2}, we rewrite $s(z,t)$ as
\begin{equation}\label{denseq2}
	s(z,t) = -\frac{1}{2\pi i}\oint\limits_{\Gamma}\sqrt{\frac{w}{w-b}}\left(1-\frac{1}{tw}+\frac{2}{e^{2tw}-1}\right)\frac{dw}{w-z} = -1-\frac{1}{2\pi i}\oint\limits_{\Gamma}\sqrt{\frac{w}{w-b}}\frac{2}{e^{2tw}-1}\frac{dw}{w-z},
\end{equation}
where the last equality follows once more from residue theorem. After the change of variables $u=2tw$,
\begin{equation*}
	s(z,t) =-1 
	-\frac{1}{2\pi i}\oint\limits_{2t\Gamma}\sqrt{\frac{u}{u-2tb}} \,\frac{2}{e^{u}-1}\,\frac{du}{u-2tz}
\end{equation*}
and we now choose the contour $2t\Ga$ to be a long ``stadium'', so that it consists of two parallel segments,
$\{u=x\pm i,\; 0\le x\le 2tb\}$ and two semicircles of radius 1, around the points $u=0$ and $u=2tb$. With this choice 
\begin{equation*}
	\left|\sqrt{\frac{u}{u-2tb}} \,\frac{2}{e^{u}-1}\,\frac{1}{u-2tz}\right|\le \frac{C}{\sqrt{1+t}}\,,\quad u\in 2t\Ga,
\end{equation*}  
and we obtain the first estimation in \eqref{prop3eq2}. For the second, we differentiate with respect to $z$, i.e.
\begin{equation*}
	\frac{\partial s(z,t)}{\partial z}  = -\frac{1}{2\pi i}\oint\limits_{\Gamma}\sqrt{\frac{w}{w-b}}\,\frac{2}{e^{2tw}-1}\,\frac{dw}{(w-z)^2},
\end{equation*}
and change again variables $u=2tw$. Estimating the latter integral from above, we obtain the remaining estimation in \eqref{prop3eq2}.
\end{proof}
We can now combine \eqref{denseq1} with Proposition \ref{prop3} and deduce
\begin{equation}\label{denseq3}
	\sup_{0\leq y\leq b}\big|q(y)-1+\tau\big| 
=  O\left(\frac{\tau}{\sqrt{1+t}}\right),\qquad 
\sup_{0\leq y\leq b}\big|q'(y)\big|= O\left(\frac{\tau}{\sqrt{1+t}}\right),
\end{equation}
which is uniform with respect to the parameters $0\leq \tau\leq 1-\varepsilon<1$ and $t\geq 0$.\smallskip

We are left with the computation of the Lagrange multiplier.


\subsection{Evaluation of the Lagrange multiplier}
We will compute the multiplier $l$ via \eqref{emeq2},
\begin{equation*}
	l = 2g(b)-V(b),
\end{equation*}
in other words, we have to compute two quantities. For $g(b)$, use the formula
\begin{equation}\label{lageq1}
	g(b) = \ln b-\int\limits_b^{\infty}\left(\omega(z)-\frac{1}{z}\right)\,dz
\end{equation}
which involves the resolvent $\omega(z)\equiv g'(z)$ and which can be derived immediately from the expansion
\begin{equation*}
	g'(z) = \int\limits_0^b\frac{d\mu^V(w)}{z-w} = \frac{1}{z}+O\left(z^{-2}\right),\hspace{0.5cm}z\rightarrow\infty.
\end{equation*}
As we have already seen,
\begin{equation*}
	\omega(z) = \frac{1}{2\pi}\sqrt{\frac{z-b}{z}}\int\limits_0^b\sqrt{\frac{w}{b-w}}\frac{V'(w)}{z-w}\,dw,
\end{equation*}
and the latter equality can be rewritten as (cf. \cite{DKMVZ})
\begin{equation*}
	\omega(z) = \frac{V'(z)}{2}-\sqrt{\frac{z-b}{z}}\frac{q(z)}{2},\hspace{0.5cm}z\in\mathbb{C}\backslash[0,b].
\end{equation*}
Back to \eqref{lageq1}, this implies
\begin{equation*}
	g(b) = \ln b-\frac{1}{2}\int\limits_b^{\infty}\left(V'(z)-\sqrt{\frac{z-b}{z}}\,q(z)-\frac{2}{z}\right)\,dz.
\end{equation*}
Deforming the contour of integration in \eqref{denseq2} and evaluating the residue at $w=z$, we have
\begin{equation*}
	s(z,t) = -1-\sqrt{\frac{z}{z-b}}\frac{2}{e^{2tz}-1}-\frac{1}{\pi}\int\limits_0^b\sqrt{\frac{w}{b-w}}\frac{2}{e^{2tw}-1}\frac{dw}{w-z}
\end{equation*}
and with 
\begin{equation*}
	V'(z) = 1-\tau\left(1-\frac{1}{tz}+\frac{2}{e^{2tz}-1}\right)
\end{equation*}
therefore
\begin{eqnarray*}
	g(b) &=& \ln b-\frac{1}{2}\int\limits_b^{\infty}\bigg[1-\tau+\frac{\tau}{tz}-\frac{2\tau}{e^{2tz}-1}-\sqrt{\frac{z-b}{z}}\bigg(1-\tau-\sqrt{\frac{z}{z-b}}\,\frac{2\tau}{e^{2tz}-1}\\
	&&-\frac{\tau}{\pi}\int_0^b\sqrt{\frac{w}{b-w}}\,\frac{2}{e^{2tw}-1}\,\frac{dw}{w-z}\bigg)-\frac{2}{z}\bigg]dz,
\end{eqnarray*}
or after simplifications,
\begin{eqnarray}\label{lageq2}
	g(b) &=& \ln b-\frac{1}{2}\int\limits_b^{\infty}\bigg[(1-\tau)\left(1-\sqrt{\frac{z-b}{z}}\right)+\frac{\tau}{tz}-\frac{2}{z}+\sqrt{\frac{z-b}{z}}\,\frac{\tau}{\pi }\nonumber\\
	&&\times\int_0^b\sqrt{\frac{w}{b-w}}\,\frac{2}{e^{2tw}-1}\,\frac{dw}{w-z}\bigg]dz.
\end{eqnarray}
At this point we use \eqref{the12} and write
\begin{equation*}
 -\frac{\tau}{z\pi}\int\limits_0^b\sqrt{\frac{w}{b-w}}\frac{2}{e^{2tw}-1}dw = \frac{1}{z}\left(2-\frac{b}{2}\left(1-\tau\right)-\frac{\tau}{t}\right),
\end{equation*}
hence \eqref{lageq2} reads as
\begin{equation}\label{lageq3}
	g(b) = \ln b+I_1+I_2,
\end{equation}
where
\begin{equation*}
	I_1 = -\frac{1}{2}\int\limits_b^{\infty}\bigg[(1-\tau)\left(1-\sqrt{\frac{z-b}{z}}\right)
+\frac{\tau}{tz}-\frac{2}{z}+\frac{1}{z}\sqrt{\frac{z-b}{z}}\left(2-\frac{b}{2}\left(1-\tau\right)-\frac{\tau}{t}\right)\bigg]dz,
\end{equation*}
and
\begin{equation*}
	I_2 = \frac{\tau}{\pi}\int\limits_b^{\infty}\sqrt{\frac{z-b}{z}}\,
\frac{1}{z}\left[\,\int\limits_0^b\sqrt{\frac{w}{b-w}}\,\frac{w\,dw}{(e^{2tw}-1)(z-w)}\,\right]\,dz.
\end{equation*}
The term $I_1$ is calculated explicitly,
\begin{equation*}
	I_1 = 2(1-\ln 2) -\frac{b}{4}\left(1-\tau\right)-\frac{\tau}{t}\left(1-\ln 2\right)
= 2-2\ln 2 -\frac{b}{4}\left(1-\tau\right)-\frac{1}{N}+\frac{ \ln 2}{N},
\end{equation*}
and in $I_2$ we change the order of integration: Since 
\begin{equation*}
	\int\limits_b^{\infty}\sqrt{\frac{z-b}{z}}\frac{dz}{z(z-w)}
=\frac{2}{w}\left(1-\sqrt{\frac{b-w}{w}}\arctan\sqrt{\frac{w}{b-w}}\right),\quad 0<w< b,
\end{equation*}
we obtain
\begin{equation}\label{lageq4}
	I_2 = \frac{2\tau}{\pi}\int\limits_0^b\left(\sqrt{\frac{w}{b-w}}-\arctan\sqrt{\frac{w}{b-w}}\right)\frac{dw}{e^{2tw}-1} = \frac{\tau}{t}J(2bt)
\end{equation}
with (compare \eqref{the131})
\begin{equation*}
	J(z) = \frac{z}{\pi}\int\limits_0^1\left(\sqrt{\frac{u}{1-u}}-\arctan\sqrt{\frac{u}{1-u}}\right)\frac{du}{e^{zu}-1}.
\end{equation*}
Some important properties of the function $J(z)$ are summarized below.
\begin{prop}\label{prop4} The function $J(z)$ is analytic in the horizontal strip $\Delta$. As $z\rightarrow 0$,
\begin{equation}\label{prop4eq1}
	J(z) = 1-\ln 2 -\frac{z}{8}+\frac{7z^2}{384}+O\left(z^3\right),
\end{equation}
and as $z\rightarrow+\infty$,
\begin{equation}\label{prop4eq2}
 J(z)= \frac{\zeta(5/2)}{4\sqrt{\pi}z^{3/2}}+\frac{9\zeta(7/2)}{16\sqrt{\pi}z^{5/2}}+O\left(z^{-7/2}\right).
\end{equation}
\end{prop}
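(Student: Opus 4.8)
The plan is to follow the template of the proof of Proposition~\ref{prop1}. Write $g(u)=\sqrt{u/(1-u)}-\arctan\sqrt{u/(1-u)}$; the elementary expansion $x-\arctan x=\tfrac13x^3-\tfrac15x^5+\cdots$ with $x=\sqrt{u/(1-u)}$ shows that $g$ vanishes to order $u^{3/2}$ as $u\to0$ while growing only like $(1-u)^{-1/2}$ as $u\to1$, so that $g$ and $g(u)/u$ are integrable on $(0,1)$. For analyticity of $J$ in $\Delta$, rewrite $J(z)=\tfrac1\pi\int_0^1 g(u)\,\tfrac{z}{e^{zu}-1}\,du$ and note that for each $u\in(0,1)$ the map $z\mapsto\tfrac{z}{e^{zu}-1}=\tfrac1u\cdot\tfrac{zu}{e^{zu}-1}$ is analytic on $\Delta$ (the nonzero poles of $w\mapsto w/(e^w-1)$ have $|\Im w|=2\pi k\ge2\pi>\pi$, and $zu$ stays in $\Delta$ when $z$ does), with $|z/(e^{zu}-1)|\le C(K)/u$ uniformly in $u\in(0,1)$ for $z$ in a compact $K\subset\Delta$; since $\int_0^1|g(u)|u^{-1}\,du<\infty$, Morera's theorem together with Fubini's theorem yields analyticity of $J$ on $\Delta$, exactly as for $I$.

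For the expansion \eqref{prop4eq1} at $z\to0$, insert $\tfrac{1}{e^{zu}-1}=\tfrac1{zu}-\tfrac12+\tfrac{zu}{12}+O(z^2u^2)$ to obtain
\begin{equation*}
	J(z)=\frac{z}{\pi}\left[\frac1z\int_0^1\frac{g(u)}{u}\,du-\frac12\int_0^1 g(u)\,du+\frac{z}{12}\int_0^1 ug(u)\,du+O(z^2)\right],
\end{equation*}
so that the statement reduces to the three moment evaluations
\begin{equation*}
	\int_0^1\frac{g(u)}{u}\,du=\pi(1-\ln2),\qquad\int_0^1 g(u)\,du=\frac{\pi}{4},\qquad\int_0^1 ug(u)\,du=\frac{7\pi}{32}.
\end{equation*}
Each of these I would compute with the substitution $u=\sin^2\theta$, which turns $\sqrt{u/(1-u)}$ into $\tan\theta$ and $\arctan\sqrt{u/(1-u)}$ into $\theta$: the $\sqrt{u/(1-u)}$-parts reproduce the integrals $\pi$, $\tfrac\pi2$, $\tfrac{3\pi}8$ already quoted in the proof of Proposition~\ref{prop1}, while the $\arctan$-parts become $2\int_0^{\pi/2}\theta\cot\theta\,d\theta=\pi\ln2$, $\int_0^{\pi/2}\theta\sin2\theta\,d\theta=\tfrac\pi4$ and $2\int_0^{\pi/2}\theta\sin^3\theta\cos\theta\,d\theta=\tfrac{5\pi}{32}$, each by one integration by parts (the first using $\int_0^{\pi/2}\ln\sin\theta\,d\theta=-\tfrac\pi2\ln2$). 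Substituting the coefficients back yields $J(z)=(1-\ln2)-\tfrac z8+\tfrac{7z^2}{384}+O(z^3)$.

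For the expansion \eqref{prop4eq2} as $z\to+\infty$, I would localize near $u=0$. Splitting the integral at a fixed $\delta\in(0,1)$, on $[\delta,1]$ one has $|e^{zu}-1|^{-1}\le2e^{-\delta z}$ and $g$ is integrable up to $u=1$, so that contribution is $O(ze^{-\delta z})$; on $[0,\delta]$ substitute $v=zu$ and use $g(v/z)=\tfrac13(v/z)^{3/2}+\tfrac3{10}(v/z)^{5/2}+O((v/z)^{7/2})$ (from the series for $g$ above together with $(1-u)^{-3/2}=1+\tfrac32u+\cdots$). Extending the range of $v$ back to $(0,\infty)$ at the cost of an exponentially small error, the computation reduces to the standard integrals $\int_0^\infty\tfrac{v^{3/2}}{e^v-1}\,dv=\Gamma(\tfrac52)\zeta(\tfrac52)=\tfrac{3\sqrt\pi}4\zeta(\tfrac52)$ and $\int_0^\infty\tfrac{v^{5/2}}{e^v-1}\,dv=\Gamma(\tfrac72)\zeta(\tfrac72)=\tfrac{15\sqrt\pi}8\zeta(\tfrac72)$, which yield the stated coefficients $\tfrac{\zeta(5/2)}{4\sqrt\pi}$ and $\tfrac{9\zeta(7/2)}{16\sqrt\pi}$, while the $O((v/z)^{7/2})$ remainder integrates to $O(z^{-7/2})$. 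The only step demanding genuine care --- and it is routine --- is this last one: one must check that the tail truncation and the term-by-term integration of the fractional expansion of $g$ really leave an error of order $O(z^{-7/2})$ rather than something larger, i.e.\ carry out the Watson's-lemma bookkeeping left implicit in the proof of Proposition~\ref{prop1}.
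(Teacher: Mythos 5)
Your proposal is correct and follows essentially the same route as the paper's proof: the same expansion of $\tfrac{1}{e^{zu}-1}$ with the three moments $\int_0^1 g(u)u^{-1}du=\pi(1-\ln 2)$, $\int_0^1 g(u)\,du=\tfrac{\pi}{4}$, $\int_0^1 ug(u)\,du=\tfrac{7\pi}{32}$ for $z\to 0$, and the same local expansion $g(u)=\tfrac{u^{3/2}}{3}+\tfrac{3}{10}u^{5/2}+O(u^{7/2})$ paired with $\int_0^\infty \tfrac{u^{s}}{e^u-1}\,du=\Gamma(s+1)\zeta(s+1)$ for $z\to+\infty$. The only difference is that you supply details the paper leaves implicit (the Morera/Fubini argument for analyticity, the $u=\sin^2\theta$ evaluation of the moments, and the Watson's-lemma bookkeeping), and all of your stated intermediate values check out.
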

\begin{proof} Our reasoning is almost identical to the one given in the proof of Proposition \ref{prop1}, in particular analyticity follows from the analyticity of $\frac{z}{e^z-1}$ in $\Delta$. When $z\rightarrow 0$, we use the integrals
\begin{equation*}
	\int\limits_0^1\left(\sqrt{\frac{u}{1-u}}-\arctan\sqrt{\frac{u}{1-u}}\right)\frac{du}{u} = \pi(1-\ln 2)
\end{equation*}
\begin{equation*}
	\int\limits_0^1\left(\sqrt{\frac{u}{1-u}}-\arctan\sqrt{\frac{u}{1-u}}\right)\,du =\frac{\pi}{4},\ \ \ \
	\int\limits_0^1\left(\sqrt{\frac{u}{1-u}}-\arctan\sqrt{\frac{u}{1-u}}\right)u\,du =\frac{7\pi}{32}
\end{equation*}
and obtain
\begin{eqnarray*}
	J(z) &=& \frac{z}{\pi}\int\limits_0^1\left(\sqrt{\frac{u}{1-u}}-\arctan\sqrt{\frac{u}{1-u}}\right)\left(\frac{1}{zu}-\frac{1}{2}+\frac{zu}{12}+O\left(z^2u^2\right)\right)\,du\\
	&=&1-\ln 2-\frac{z}{8}+\frac{7z^2}{384}+O\left(z^3\right),
\end{eqnarray*}
which is \eqref{prop4eq1}. When $z\rightarrow+\infty$, we use the expansion
\begin{equation*}
	\sqrt{\frac{u}{1-u}}-\arctan\sqrt{\frac{u}{1-u}}=\frac{u^{3/2}}{3}+\frac{3}{10}u^{5/2}+O\left(u^{7/2}\right),\hspace{0.5cm}u\rightarrow 0,
\end{equation*}
and the integrals,
\begin{equation*}
	\int\limits_0^{\infty}\frac{u^{3/2}}{e^u-1}\,du = \frac{3\sqrt{\pi}}{4}\zeta\left(\frac{5}{2}\right),\hspace{0.5cm}\int\limits_0^{\infty}\frac{u^{5/2}}{e^u-1}\,du = \frac{15\sqrt{\pi}}{8}\zeta\left(\frac{7}{2}\right),
\end{equation*}
which gives \eqref{prop4eq2}.
\end{proof}
In the end we go back to \eqref{lageq3} and summarize
\begin{equation*}
	g(b) =\ln b + 2(1-\ln 2)-\frac{b}{4}(1-\tau)-\frac{\tau}{t}(1-\ln 2)+\frac{\tau}{t}J(2bt),
\end{equation*}
which, combined with $V(b)=b-\frac{\tau}{t}\ln S(bt)$, gives equation \eqref{the13}, i.e.
\begin{equation}\label{lageq5}
	l = 4(1-\ln 2)-\frac{b}{2}(1-\tau)-b+2\ln b-\frac{2\tau}{t}(1-\ln 2)+\frac{2\tau}{t}J(2bt)+\frac{\tau}{t}\ln S(bt).
\end{equation}
At this point we can begin the asymptotical analysis.


\section{Riemann-Hilbert approach}\label{RHa}

\subsection{Riemann-Hilbert characterization for orthogonal polynomials}
We will solve the Fokas-Its-Kitaev \cite{FIK} Riemann-Hilbert problem (RHP) for orthogonal polynomials asymptotically: this problem requires the construction of a $2\times 2$ piecewise analytic matrix-valued function $Y(z) = Y^{(n)}(z)$ such that
\begin{itemize}
	\item $Y^{(n)}(z)$ is analytic for $z\in\mathbb{C}\backslash[0,\infty)$
	\item If we orient the half ray $[0,\infty)$ from left to right, the limiting values of $Y^{(n)}(z)$ from either side are related via the equation 
	\begin{equation*}
		Y_+^{(n)}(z) = Y_-^{(n)}(z)\begin{pmatrix}
		1 & w_t(z)\\
		0 & 1\\
		\end{pmatrix},\hspace{0.5cm} z\in[0,\infty)
	\end{equation*}
	\item At the endpoint $z=0$, $Y(z)$ remains bounded, i.e.
	\begin{equation*}
		Y(z) = O(1),\hspace{0.5cm} z\rightarrow 0,\ \ z\in\mathbb{C}\backslash[0,\infty)
	\end{equation*}
	\item As $z\rightarrow\infty$, we have
	\begin{equation*}
		Y^{(n)}(z) = \Big(I+O\big(z^{-1}\big)\Big) z^{n\sigma_3},\hspace{0.5cm} \sigma_3 = \begin{pmatrix}
		1 & 0\\
		0 & -1\\
		\end{pmatrix}.
	\end{equation*}
\end{itemize}
The unique solution $Y^{(n)}(z)$ to the latter problem equals
\begin{equation*}
	Y^{(n)}(z) = \begin{pmatrix}
	 p_{n,t}(z) & \frac{1}{2\pi i}\int_0^{\infty}p_{n,t}(s)(s)w_t(s)\frac{ds}{s-z}\\
	 \gamma_{n-1}p_{n-1,t}(z) & \frac{\gamma_{n-1}}{2\pi i}\int_0^{\infty}p_{n-1,t}(s)w_t(s)\frac{ds}{s-z}\\
	 \end{pmatrix}
\end{equation*}
where $p_{n,t}(z)=z^n+\ldots$ is precisely the $n^{\textnormal{th}}$ monic orthogonal polynomial subject to the measure $d\mu(s) = w_t(s)ds$ supported on the half-ray $[0,\infty)$. Moreover, 
\begin{equation*}
	\gamma_n = -\frac{2\pi i}{h_{n,t}},\hspace{0.5cm} h_{n,t} = \int\limits_0^{\infty}p_{n,t}^2(s)d\mu(s),
\end{equation*}
and in addition, $Y^{(n)}(z)z^{-n\sigma_3}$ has a full asymptotic expansion near infinity:
\begin{equation*}
	Y^{(n)}(z)z^{-n\sigma_3} = I+\frac{Y_1^{(n)}}{z}+O\left(z^{-2}\right),\hspace{0.5cm}z\rightarrow\infty,\hspace{0.25cm}Y_k^{(n)} = \left(Y_k^{(n)}\right)_{ij}.
\end{equation*}
This expansion connects to the normalizing constants via
\begin{equation}\label{FIKeq1}
	h_{n,t} = -2\pi i\left(Y_1^{(n)}\right)_{12}.
\end{equation}
Since we want to compute the large $N$ asymptotics of $h_{N,t}$, we will need to solve the latter RHP for $Y(z)=Y^{(N)}(z)$. Such an asymptotic solution can be derived by applying the Deift-Zhou nonlinear steepest descent method \cite{DZ} paired with techniques which have been developed in \cite{DKMVZ,V} and \cite{BL2}. In short, we will approximate the solution $Y(z)$ with the help of solutions of certain Riemann-Hilbert model problems. The necessary steps are worked out in the subsections below.


\subsection{First transformation of the RHP - normalization}
Recall \eqref{emeq1} and introduce
\begin{equation}\label{normeq1}
	Y(z) = \exp\bigg(\frac{Nl}{2}\sigma_3\bigg)T(z)\exp\bigg(N\Big(g(z)-\frac{l}{2}\Big)\sigma_3\bigg),\ \ z\in\mathbb{C}\backslash\mathbb{R}.
\end{equation}
The analytical properties of $T(z)$ are summarized in the following:
\begin{itemize}
	\item $T(z)$ is analytic for $z\in\mathbb{C}\backslash[0,\infty)$
	\item From the jump properties of $g(z)$, compare \eqref{emeq2}, we get that
	\begin{equation}\label{Tj:1}
		T_+(z) = T_-(z)\begin{pmatrix} 
		e^{-N(g_+-g_-)} & z\\
		0 & e^{N(g_+-g_-)}\\
		\end{pmatrix},\ \ \ z\in[0,b]
	\end{equation}
	and
	\begin{equation}\label{Tj:2}
		T_+(z)=T_-(z)\begin{pmatrix}
		1 & ze^{N(g_++g_--V-l)}\\
		0 & 1\\
		\end{pmatrix},\ \ \ z\in[0,\infty)\backslash[0,b].
	\end{equation}
	\item As $z\rightarrow 0$ and $z\in\mathbb{C}\backslash[0,\infty)$, the function $T(z)$ is bounded,
	\begin{equation*}
		T(z) = O(1),\hspace{0.5cm}z\to 0,\ \ z\in\mathbb{C}\backslash[0,\infty)
	\end{equation*}
	\item At infinity, the transformed function $T(z)$ is now normalized as 
	\begin{equation*}
		T(z) = I+O\big(z^{-1}\big),\hspace{0.5cm}z\rightarrow\infty.
	\end{equation*}
\end{itemize}
Consider the jumps \eqref{Tj:1} and \eqref{Tj:2}: First, by the Euler Lagrange variational condition \eqref{emeq2},
\begin{equation*}
	g_+(z)+g_-(z)-V(z)-l <0,\hspace{0.5cm} z\in[0,\infty)\backslash[0,b],
\end{equation*}
hence for $z\in(b+\eta,\infty),\eta>0$ fixed,
\begin{equation*}
	\begin{pmatrix}
	1 & ze^{N(g_++g_--V-l)}\\
	0 & 1\\
	\end{pmatrix}\longrightarrow I,\hspace{0.5cm} N\rightarrow \infty
\end{equation*}
where the stated convergence is exponentially fast. Secondly for $[0,b]$: Since $g_-(z) = V(z)-g_+(z)+l,z\in[0,b]$ the function
\begin{equation*}
	G(z) = g_+(z)-g_-(z) = 2g_+(z)-V(z)-l
\end{equation*}
can be analytically continued in a (in general $t$-dependent) neighborhood of the line segment $[0,b]$ into the upper halfplane. Here we use in particular that $V(z)$ is analytic in the strip $\Delta_t$. But since
\begin{equation*}
	G(z) = 2\pi i\int\limits_z^b\psi(w)dw,\ \ z\in[0,b]
\end{equation*}
and therefore
\begin{equation*}
	\frac{d}{dy}G(z+iy)\Big|_{y=0} = 2\pi\psi(z)>0,\hspace{0.5cm} z\in(0,b),
\end{equation*}
we see that the stated (local) continuation of $G(z)$ into the upper half-plane satisfies
\begin{equation}\label{normeq2}
	\textnormal{Re}\ G(z) >0\hspace{0.5cm}\textnormal{for}\ \ \textnormal{Im}\ z>0. 
\end{equation}
In the lower halfplane the argument is similar:
\begin{equation*}
	G(z) = -2g_-(z)+V(z)+l
\end{equation*}
admits analytical (in general into a $t$-dependent neighborhood) continuation into the lower half-plane so that
\begin{equation}\label{normeq3}
	\textnormal{Re}\ G(z)<0\hspace{0.5cm}\textnormal{for}\ \ \textnormal{Im}\ z<0.
\end{equation}
The continuations motivate the use of the matrix factorization
\begin{eqnarray*}
	\begin{pmatrix}
	e^{-N(g_+(z)-g_-(z))} & z\\
	0 & e^{N(g_+(z)-g_-(z))}\\
	\end{pmatrix} &=&\begin{pmatrix}
	1 & 0\\
	\frac{1}{z}e^{NG(z)} & 1\\
	\end{pmatrix}\begin{pmatrix}
	0 & z\\
	-\frac{1}{z} & 0\\
	\end{pmatrix}\begin{pmatrix}
	1 & 0\\
	\frac{1}{z}e^{-NG(z)} & 1\\
	\end{pmatrix}\\
	&=& S_{L_1}S_PS_{L_2},\ \ \ z\in(0,b),
\end{eqnarray*}
and thus the second transformation of the RHP.


\subsection{Second transformation of the RHP - opening of lenses}
We let $\mathcal{L}^{\pm}$ denote the upper (lower) lens, shown in Figure \ref{fig4}. Define
\begin{equation}\label{opeq1}
	S(z) = \left\{
                                   \begin{array}{ll}
                                     T(z)S_{L_2}^{-1}, & \hbox{$z\in\mathcal{L}^+$,} \\
                                     T(z)S_{L_1}, & \hbox{$z\in\mathcal{L}^-$,} \\
                                     T(z), & \hbox{else,} 
                                   \end{array}
                                 \right.
\end{equation}
so that $S(z)$ solves the following RHP
\begin{figure}[tbh]
  \begin{center}
  \psfragscanon
  \psfrag{1}{\footnotesize{$0$}}
  \psfrag{2}{\footnotesize{$b$}}
  \psfrag{3}{\footnotesize{$\mathcal{L}^+$}}
  \psfrag{6}{\footnotesize{$\mathcal{L}^-$}}
  \psfrag{7}{\footnotesize{$\gamma^+$}}
  \psfrag{10}{\footnotesize{$\gamma^-$}}
  \includegraphics[width=10cm,height=3cm]{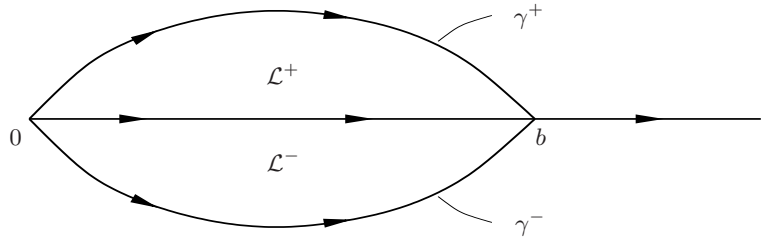}
  \end{center}
  \caption{The second transformation - opening of lenses}
  \label{fig4}
\end{figure}
\begin{itemize}
	\item $S(z)$ is analytic for $z\in\mathbb{C}\backslash([0,\infty)\cup \Gamma)$, with $\Gamma=\gamma^+\cup \gamma^-$
	\item The jumps, with orientation fixed as in Figure \ref{fig4}, are as follows
	\begin{equation*}
		S_+(z) = S_-(z)\left\{
                                   \begin{array}{ll}
                                     \bigl(\begin{smallmatrix}
                                     1 & 0\\
                                     \frac{1}{z}e^{-NG(z)} & 1\\
                                     \end{smallmatrix}\bigl), & \hbox{$z\in\gamma^+$,} \\
                                     \bigl(\begin{smallmatrix}
                                     0 & z\\
                                     -\frac{1}{z} & 0\\
                                     \end{smallmatrix}\bigr), & \hbox{$z\in(0,b)$,} \\
                                     \bigl(\begin{smallmatrix}
                                     1 & ze^{N(g_++g_--V-l)}\\
                                     0 & 1\\
                                     \end{smallmatrix}\bigr), & \hbox{$z\in[0,\infty)\backslash[0,b]$,} \\
                                     \bigl(\begin{smallmatrix}
                                     1 & 0\\
                                      \frac{1}{z}e^{NG(z)}& 1\\
                                      \end{smallmatrix}\bigr), & \hbox{$z\in\gamma^-$.}
                                   \end{array}
                                 \right.
  \end{equation*}
  \item For the behavior at the origin, we see from the behavior of $T(z)$ and \eqref{opeq1}, that
  \begin{equation}\label{opeq2}
  	S(z) = O(1),\hspace{0.5cm} z\rightarrow 0,\ z\in\mathbb{C}\backslash\big(\mathcal{L}^+\cup\mathcal{L}^-\big)
  \end{equation}
  and 
  \begin{equation}\label{opeq3}
  	S(z) = O\big(z^{-1}\big),\hspace{0.5cm} z\rightarrow 0,\ z\in\mathcal{L}^+\cup\mathcal{L}^-.
  \end{equation}
  \item As $z\rightarrow\infty$, we have $S(z)=I+O\big(z^{-1}\big)$.
\end{itemize}
Recalling \eqref{normeq2} and \eqref{normeq3} as well as the behavior of the jump matrix on the infinite ray $(b,\infty)$ we expect (and justify rigorously below) that as $N\rightarrow\infty$, $S(z)$ converges to a solution of a RHP, in which the only jump is on the line segment $(0,b)$. In more detail, this model RHP reads as follows.


\subsection{The model RHP}
Find a piecewise analytic $2\times 2$ matrix valued function $M(z)$ such that
\begin{itemize}
	\item $M(z)$ is analytic for $z\in\mathbb{C}\backslash[0,b]$
	\item Along $(0,b)$, we have the boundary relation
	\begin{equation*}
		M_+(z) = M_-(z)\begin{pmatrix}
		0 & z\\
		-\frac{1}{z} & 0\\
		\end{pmatrix},\hspace{0.5cm} z\in(0,b)
	\end{equation*}
	\item The function $M(z)$ is square integrable on $[0,b]$
	\item As $z\rightarrow\infty$, the function is normalized as 
	\begin{equation*}
		M(z)=I+O\big(z^{-1}\big)
	\end{equation*}
\end{itemize}
We compute a solution to this problem by introducing
\begin{equation*}
	N(z) = M(z)\mathcal{D}(z)^{\sigma_3},\hspace{0.5cm} z\in\mathbb{C}\backslash[0,b]
\end{equation*}
where the scalar Szeg\"o function $\mathcal{D}(z)$ satisfies
\begin{equation}\label{M:1}
	\mathcal{D}_+(z)\mathcal{D}_-(z) = z,\hspace{0.5cm}z\in[0,b].
\end{equation}
Such a function indeed exists, namely
\begin{equation*}
	\mathcal{D}(z) = \exp\Bigg[\frac{\sqrt{z(z-b)}}{2\pi i}\int\limits_0^b\frac{\ln w}{\sqrt{w(w-b)}_+}\frac{dw}{w-z}\Bigg] = \sqrt{\frac{bz}{2}}\bigg(z-\frac{b}{2}+\sqrt{z(z-b)}\bigg)^{-1/2}
\end{equation*}
where we choose principal branches for all fractional power functions. The latter choice of $\mathcal{D}(z)$ transforms the original model problem to a RHP for $N(z)$ with jump
\begin{equation*}
	N_+(z) = N_-(z)\begin{pmatrix}
	0 & 1\\
	-1 & 0\\
	\end{pmatrix},\hspace{0.5cm}z\in[0,b]
\end{equation*}
which is solved via diagonalization. Noticing further that
\begin{equation*}
	\mathcal{D}(z) = \frac{\sqrt{b}}{2}\bigg(1+\frac{b}{4z}+\frac{b^2}{8z^2}+O\big(z^{-3}\big)\bigg),\hspace{0.5cm} z\rightarrow\infty
\end{equation*}
we obtain 
\begin{equation}\label{modeq1}
	M(z) = \bigg(\frac{\sqrt{b}}{2}\bigg)^{\sigma_3}\frac{1}{2}\begin{pmatrix}
	\delta+\delta^{-1} & i(\delta-\delta^{-1})\\
	-i(\delta-\delta^{-1})& \delta+\delta^{-1}\\
	\end{pmatrix}\mathcal{D}^{-\sigma_3}(z)
\end{equation}
with 
\begin{equation*}
	\delta(z) = \bigg(\frac{z}{z-b}\bigg)^{1/4}
\end{equation*}
defined on $\mathbb{C}\backslash[0,b]$ with its branch such that $(\frac{z}{z-b})^{1/4}\rightarrow 1$ as $z\rightarrow+\infty,\ \textnormal{arg}\, z=0$. Before moving on, we note for future purposes that
\begin{equation}\label{modeq2}
	M(z) = I+\frac{b}{4z}\begin{pmatrix}
	-1& \frac{ib}{4}\\
	\frac{4}{ib} & 1\\
	\end{pmatrix}+O\big(z^{-2}\big),\hspace{0.5cm}z\rightarrow\infty.
\end{equation}


\subsection{Construction of a parametrix at $z=b$}
For a small neighborhood $\mathcal{U}$ of the point $b$, observe that
\begin{equation}\label{pl:1}
	G(z) = 2g(z)-V(z)-l= \int\limits_z^b\sqrt{\frac{w-b}{w}}q(w)dw = -\frac{2}{3}h_1(z)(z-b)^{3/2},\hspace{0.5cm} z\in\mathcal{U}\cap \gamma^+
\end{equation}
where $h_1(z)$ is an analytic function in $\mathcal{U}$ such that
\begin{equation*}
	h_1(z) = \frac{q(b)}{\sqrt{b}}\bigg[1+\frac{3}{5}\bigg(\frac{q'(b)}{q(b)}-\frac{1}{2b}\bigg)(z-b)+O\big((z-b)\big)^2\bigg],\hspace{0.5cm}z\rightarrow b,
\end{equation*}
and the function $(z-b)^{3/2}$ is defined for $z\in\mathbb{C}\backslash(-\infty,b]$ with
\begin{equation*}
	(z-b)^{3/2}>0\ \ \textnormal{if}\ \ z>b.
\end{equation*}
Similarly, with the same choice of branches,
\begin{equation*}
	G(z) = -2g(z)+V(z)+l = \frac{2}{3}a(z)(z-b)^{3/2},\hspace{0.5cm} z\in\mathcal{U}\cap \gamma^-
\end{equation*}
and
\begin{equation*}
	g_+(z)+g_-(z)-V(z)-l=-\int\limits_b^z\sqrt{\frac{w-b}{w}}q(w)dw=-\frac{2}{3}a(z)(z-b)^{3/2},\ \ \ z\in\mathcal{U}\cap(b,\infty).
\end{equation*}
The expansions motivate the construction of the parametrix in terms of the Airy function $\textnormal{Ai}(\zeta)$. This construction has appeared frequently in the nonlinear-steepest descent literature and we will follow here the notation of \cite{BB}: define for $\zeta\in\mathbb{C}$ 
\begin{equation}\label{pl:2}
	A_0(\zeta) = \begin{pmatrix}
	\frac{d}{d\zeta}\textnormal{Ai}(\zeta) & e^{i\frac{\pi}{3}}\frac{d}{d\zeta}\textnormal{Ai}\Big(e^{-i\frac{2\pi}{3}}\zeta\Big)\\
	\textnormal{Ai}(\zeta) & e^{i\frac{\pi}{3}}\textnormal{Ai}\Big(e^{-i\frac{2\pi}{3}}\zeta\Big)\\
	\end{pmatrix}.
\end{equation}
With this, introduce the ``bare parametrix"
\begin{equation}\label{pl:3}
		A^{RH}(\zeta) = \left\{
                                 \begin{array}{ll}
                                   A_0(\zeta), & \hbox{arg $\zeta\in(0,\frac{2\pi}{3})$,} \smallskip\\
                                   A_0(\zeta)\begin{pmatrix}
                          1 & 0 \\
                          -1 & 1 \\
                        \end{pmatrix}, & \hbox{arg $\zeta\in(\frac{2\pi}{3},\pi)$,} \bigskip \\
                                   A_0(\zeta)\begin{pmatrix}
                                   1 & -1\\
                                   0 & 1\\
                                   \end{pmatrix},& \hbox{arg $\zeta\in(-\frac{2\pi}{3},0)$,} \bigskip \\
                                   A_0(\zeta)\begin{pmatrix}
                                   0 & -1 \\
                                   1 & 1\\
                                   \end{pmatrix}, & \hbox{arg $\zeta\in(-\pi,-\frac{2\pi}{3})$.}
                                 \end{array}
                               \right.
\end{equation}
which solves the RHP depicted in Figure \ref{fig5}
\begin{figure}[tbh]
  \begin{center}
  \psfragscanon
  \psfrag{1}{\footnotesize{$\begin{pmatrix}
  1 & 1\\
  0 & 1\\
  \end{pmatrix}$}}
  \psfrag{2}{\footnotesize{$\begin{pmatrix}
  1 & 0\\
  -1 & 1\\
  \end{pmatrix}$}}
  \psfrag{3}{\footnotesize{$\begin{pmatrix}
  1 & 0\\
  -1 & 1\\
  \end{pmatrix}$}}
  \psfrag{4}{\footnotesize{$\begin{pmatrix}
  0 & -1\\
  1 & 0\\
  \end{pmatrix}$}}
  \includegraphics[width=5cm,height=4cm]{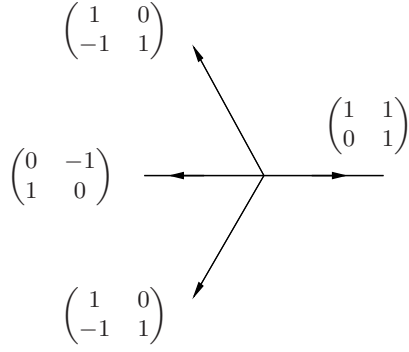}
  \end{center}
  \caption{The model RHP near $z=b$ which can be solved explicitly using Airy functions}
  \label{fig5}
\end{figure}
\begin{itemize}
	\item $A^{RH}(\zeta)$ is analytic for $\zeta\in\mathbb{C}\backslash\{\textnormal{arg}\ \zeta=-\frac{2\pi}{3},0,\frac{2\pi}{3},\pi\}$
	\item We have jumps as sketched in Figure \ref{fig5}
	\begin{eqnarray*}
		A^{RH}_+(\zeta)&=&A^{RH}_-(\zeta)\begin{pmatrix}
		1 & 0\\
		-1 & 1\\
		\end{pmatrix},\hspace{0.5cm}\textnormal{arg}\ \zeta=\mp \frac{2\pi}{3}\\
		A^{RH}_+(\zeta)&=&A^{RH}_-(\zeta)\begin{pmatrix}
		1 & 1\\
		0 & 1\\
		\end{pmatrix},\hspace{0.5cm}\textnormal{arg}\ \zeta=0\\
		A^{RH}_+(\zeta)&=&A^{RH}_-(\zeta)\begin{pmatrix}
		0 & -1\\
		1 & 0\\
		\end{pmatrix},\hspace{0.5cm}\textnormal{arg}\ \zeta=\pi
	\end{eqnarray*}
	\item From the asymptotics of the Airy function (cf. \cite{BE})
	\begin{eqnarray}\label{pl:4}
		A^{RH}(\zeta)&=&\frac{\zeta^{\sigma_3/4}}{2\sqrt{\pi}}\begin{pmatrix}
		-1 & i\\
		1 & i\\
		\end{pmatrix}\Bigg[I+\frac{1}{48\zeta^{3/2}}\begin{pmatrix}
	1 & 6i\\
	6i & -1\\
	\end{pmatrix}+\frac{35}{4608\zeta^{6/2}}\begin{pmatrix}
	-1 & 12i\\
	-12i & -1\\
	\end{pmatrix}\nonumber\\
	&&+O\big(\zeta^{-9/2}\big)\Bigg]e^{-\frac{2}{3}\zeta^{3/2}\sigma_3}.
	\end{eqnarray}
\end{itemize}
In order to construct the local parametrix to the solution of the $S$-RHP near $z=b$, we first define
\begin{equation}\label{pl:5}
	\zeta(z) = \bigg(\frac{3N}{4}\bigg)^{2/3}\Big(-2g(z)+V(z)+l\Big)^{2/3},\hspace{0.5cm}|z-b|<r.
\end{equation}
This change of variables is locally conformal, since
\begin{equation*}
	\zeta(z) = \bigg(\frac{Nq(b)}{2\sqrt{b}}\bigg)^{2/3}(z-b)\bigg[1+\frac{2}{5}\bigg(\frac{q'(b)}{q(b)}-\frac{1}{2b}\bigg)(z-b)+O\big((z-b)^2\big)\bigg],\hspace{0.5cm}|z-b|<r.
\end{equation*}
Secondly, it allows us to define the right parametrix $U(z)$ near $z=b$ by 
\begin{equation}\label{pl:6}
	U(z) = B_r(z)(-i\sqrt{\pi})A^{RH}\big(\zeta(z)\big)e^{\frac{2}{3}\zeta^{3/2}(z)\sigma_3}z^{-\sigma_3/2},\hspace{0.5cm}|z-b|<r
\end{equation}
which involves the multiplier
\begin{eqnarray}
	B_r(z) &=& M(z)z^{\sigma_3/2}\begin{pmatrix}
	-i & i\\
	1 & 1\\
	\end{pmatrix}\zeta^{-\sigma_3/4}(z)
	=\bigg(\frac{\sqrt{b}}{2}\bigg)^{\sigma_3}\begin{pmatrix}
	-i & i\\
	1 & 1\\
	\end{pmatrix}\bigg(\zeta(z)\frac{z}{z-b}\bigg)^{-\sigma_3/4}\nonumber\\
	&&\times\zeta^{\sigma_3/4}(z)\frac{i}{2}\begin{pmatrix}
	1 & -i\\
	-1 & -i\\
	\end{pmatrix}\mathcal{D}^{-\sigma_3}(z)z^{\sigma_3/2}\begin{pmatrix}
	-i & i\\
	1 & 1\\
	\end{pmatrix}\zeta^{-\sigma_3/4}(z).\label{pl:7}
\end{eqnarray}
Notice that $B_r(z)$ is analytic in a neighborhood of $z=b$, since for $z\in(b-r,b)$
\begin{eqnarray*}
	\big(B_r(z)\big)_+ &=& M_+(z)z^{\sigma_3/2}\begin{pmatrix}
	-i & i\\
	1 & 1\\
	\end{pmatrix}\zeta^{-\sigma_3/4}_+(z)\\
	&=& M_-(z) \begin{pmatrix}
	0 & z\\
	-\frac{1}{z} & 0\\
	\end{pmatrix}z^{\sigma_3/2}\begin{pmatrix}
	-i & i\\
	1 & 1\\
	\end{pmatrix}\zeta^{-\sigma_3/4}_-(z)e^{-i\frac{\pi}{2}\sigma_3}=\big(B_r(z)\big)_-,
\end{eqnarray*}
and by a direct computation
\begin{equation*}
	B_r(b) = \bigg(\frac{\sqrt{b}}{2}\bigg)^{\sigma_3}\begin{pmatrix}
	-i & i\\
	1 & 1\\
	\end{pmatrix}\bigg(\frac{Nq(b)}{2}b\bigg)^{-\sigma_3/6}.
\end{equation*}
Thus, after employing a local contour deformation, we derive that the parametrix $U(z)$ has jumps along the curves depicted in Figure \ref{fig6}. Moreover, these jumps are described by the same matrices as in the $S$-RHP, indeed with orientation as in Figure \ref{fig6}
\begin{figure}[tbh]
  \begin{center}
  \psfragscanon
  \psfrag{2}{\footnotesize{$\begin{pmatrix}
  1 & 0\\
  -\frac{1}{z}e^{-NG(z)} & 1\\
  \end{pmatrix}$}}
  \psfrag{4}{\footnotesize{$\begin{pmatrix}
  1 & 0\\
  -\frac{1}{z}e^{NG(z)} & 1\\
  \end{pmatrix}$}}
  \psfrag{1}{\footnotesize{$\begin{pmatrix}
  1 & ze^{N(g_++g_--V-l)}\\
  0 & 1\\
  \end{pmatrix}$}}
  \psfrag{3}{\footnotesize{$\begin{pmatrix}
  0 & -z\\
  \frac{1}{z} & 0\\
  \end{pmatrix}$}}
  \includegraphics[width=6cm,height=3cm]{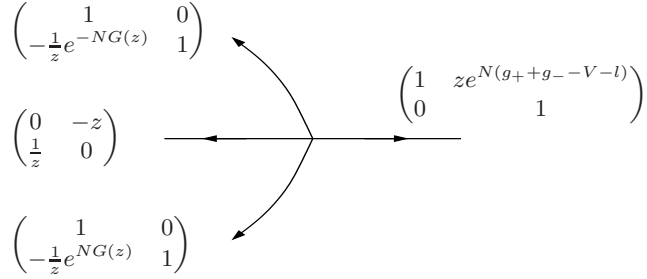}
  \end{center}
  \caption{Transformation of parametrix jumps to original jumps}
  \label{fig6}
\end{figure}
\begin{eqnarray*}
	z^{\sigma_3/2}e^{-\frac{2}{3}\zeta^{3/2}(z)\sigma_3}\begin{pmatrix}
	1 & 0\\
	-1 & 1\\
	\end{pmatrix}e^{\frac{2}{3}\zeta^{3/2}(z)\sigma_3}z^{-\sigma_3/2} &=& \begin{pmatrix}
	1 & 0\\
	-\frac{1}{z}e^{-NG(z)} & 1\\
	\end{pmatrix},\ \ z\in\mathcal{U}\cap \gamma^+\\
	z^{\sigma_3/2}e^{-\frac{2}{3}\zeta^{3/2}(z)\sigma_3}\begin{pmatrix}
	0 & -1\\
	1 & 0\\
	\end{pmatrix}e^{\frac{2}{3}\zeta^{3/2}(z)\sigma_3}z^{-\sigma_3/2}&=& \begin{pmatrix}
	0 & -z\\
	\frac{1}{z} & 0\\
	\end{pmatrix},\ \ z\in\mathcal{U}\cap (0,b)\\
	z^{\sigma_3/2}e^{-\frac{2}{3}\zeta^{3/2}(z)\sigma_3}\begin{pmatrix}
	1 & 1\\
	0 & 1\\
	\end{pmatrix}e^{\frac{2}{3}\zeta^{3/2}(z)\sigma_3}z^{-\sigma_3/2} &=& \begin{pmatrix}
	1 & ze^{N(g_++g_--V-l)}\\
	0 & 1\\
	\end{pmatrix},\ \ z\in\mathcal{U}\cap (b,\infty)\\
	z^{\sigma_3/2}e^{-\frac{2}{3}\zeta^{3/2}(z)\sigma_3}\begin{pmatrix}
	1 & 0\\
	-1 & 1\\
	\end{pmatrix}e^{\frac{2}{3}\zeta^{3/2}(z)\sigma_3}z^{-\sigma_3/2} &=& \begin{pmatrix}
	1 & 0\\
	-\frac{1}{z}e^{NG(z)} & 1\\
	\end{pmatrix},\ \ z\in\mathcal{U}\cap \gamma^-.
\end{eqnarray*}
But this means that the ratio of $S(z)$ with $U(z)$ is locally analytic (here we use the boundedness of the Airy function at the origin), i.e.
\begin{equation}\label{pl:8}
	S(z) = N_r(z)U(z),\ \ |z-b|<r<\frac{b}{2}.
\end{equation}
The use of the multiplier $B_r(z)$ in the definition \eqref{pl:6} follows from the need of a ``matching" between the local model functions $U(z)$ and $M(z)$: observe that
\begin{equation*}
	B_r(z)\Big(-\frac{i}{2}\Big)\zeta^{\sigma_3/4}(z)\begin{pmatrix}
	-1 & i\\
	1 & i\\
	\end{pmatrix} = M(z)z^{\sigma_3/2},
\end{equation*}
so that with the asymptotics \eqref{pl:4},
\begin{eqnarray}
	U(z) &=& M(z)z^{\sigma_3/2}\Bigg[I+\frac{1}{48\zeta^{3/2}}\begin{pmatrix}
	1 & 6i\\
	6i & -1\\
	\end{pmatrix}+\frac{35}{4608\zeta^{6/2}}\begin{pmatrix}
	-1 & 12i\\
	-12i & -1\\
	\end{pmatrix}+O\big(\zeta^{-9/2}\big)\Bigg]z^{-\sigma_3/2}\nonumber\\
	&=&\Bigg[I+\frac{U_1(z)}{96\zeta^{3/2}}+\frac{35U_2(z)}{4608\zeta^{6/2}}+O\big(\zeta^{-9/2}\big)\Bigg]M(z)\label{pl:9}
\end{eqnarray}
as $N\rightarrow\infty$ for any $\al>1$ and $0<r_1\leq |z-b|\leq r_2<\frac{b}{2}$ (so $|\zeta|\rightarrow\infty$). Here $U_k = (U_k^{ij})$ are given by
\begin{eqnarray*}
	U_1^{11}(z) &=& \delta^2(z)\bigg(1-\frac{3}{z}\mathcal{D}^2(z)-3z\mathcal{D}^{-2}(z)\bigg)+\delta^{-2}(z)\bigg(1+\frac{3}{z}\mathcal{D}^2(z)+3z\mathcal{D}^{-2}(z)\bigg) = -U_1^{22}\\
	U_1^{12}(z) &=&-\frac{ib}{4}\bigg(\delta^2(z)-\delta^{-2}(z)-\frac{3}{z}\mathcal{D}^2(z)\big(\delta(z)-\delta^{-1}(z)\big)^2\\
	&&-3z\mathcal{D}^{-2}(z)\big(\delta(z)+\delta^{-1}(z)\big)^2\bigg)
\end{eqnarray*}
\begin{eqnarray*}
	U_1^{21}(z) &=& -\frac{4i}{b}\bigg(\delta^2(z)-\delta^{-2}(z)-\frac{3}{z}\mathcal{D}^2(z)\big(\delta(z)+\delta^{-1}(z)\big)^2\\
	&&-3z\mathcal{D}^{-2}(z)\big(\delta(z)-\delta^{-1}(z)\big)^2\bigg)
\end{eqnarray*}
and
\begin{eqnarray*}
	U_2^{11}(z) &=& -1+\delta^2(z)\bigg(\frac{3}{z}\mathcal{D}^2(z)-3z\mathcal{D}^{-2}(z)\bigg)-\delta^{-2}(z)\bigg(\frac{3}{z}\mathcal{D}^2(z)-3z\mathcal{D}^{-2}(z)\bigg)\\
	U_2^{22}(z) &=& -1-\delta^2(z)\bigg(\frac{3}{z}\mathcal{D}^2(z)-3z\mathcal{D}^{-2}(z)\bigg)+\delta^{-2}(z)\bigg(\frac{3}{z}\mathcal{D}^2(z)-3z\mathcal{D}^{-2}(z)\bigg)\\
	U_2^{12}(z) &=& -\frac{ib}{4}\bigg(\frac{3}{z}\mathcal{D}^2(z)\big(\delta(z)-\delta^{-1}(z)\big)^2-3z\mathcal{D}^{-2}(z)\big(\delta(z)+\delta^{-1}(z)\big)^2\bigg)\\
	U_2^{21}(z) &=& -\frac{4i}{b}\bigg(\frac{3}{z}\mathcal{D}^2(z)\big(\delta(z)+\delta^{-1}(z)\big)^2-3z\mathcal{D}^{-2}(z)\big(\delta(z)-\delta^{-1}(z)\big)^2\bigg).
\end{eqnarray*}
But as the function $\zeta(z)$ is of order $N^{2/3}$ on the latter annulus and $\delta(z),\mathcal{D}(z)$ are bounded, we obtain an asymptotical matching between the model functions from equation \eqref{pl:8},
\begin{equation*}
	U(z) = \big(I+o(1)\big)M(z),\hspace{0.5cm}N\rightarrow\infty,\ \ 0\leq\tau\leq 1-\varepsilon<1,\ \ 0<r_1\leq |z-b|\leq r_2<\frac{b}{2}
\end{equation*}
The latter relation will be important later on and we also emphasize that the last estimation is uniform with respect to the parameter $0\leq\tau\leq 1-\varepsilon<1$. 


\subsection{Construction of a parametrix at $z=0$}

Fix a small neighborhood $\mathcal{V}$ of the origin and observe that
\begin{eqnarray*}
	G(z) = 2g(z)-V(z)-l &=& \int\limits_z^b\sqrt{\frac{w-b}{w}}q(w)dw = 2\pi i-\int\limits_0^z\sqrt{\frac{w-b}{w}}q(w)dw\\
	&=&2\pi i-2h_2(z)\sqrt{z},\hspace{0.5cm}z\in\mathcal{V}\cap \gamma^+,\ \ 0<\textnormal{arg}\ z\leq 2\pi
\end{eqnarray*}
where $h_2(z)$ is analytic in $\mathcal{V}$ such that
\begin{equation*}
	h_2(z)=e^{i\frac{\pi}{2}}q(0)\sqrt{b}\bigg[1+\frac{1}{3}\bigg(\frac{q'(0)}{q(0)}-\frac{1}{2b}\bigg)z+O\big(z^2\big)\bigg],\hspace{0.5cm}z\rightarrow 0.
\end{equation*}
Similarly 
\begin{equation*}
	G(z) = -2g(z)+V(z)+l=2\pi i+2h_2(z)\sqrt{z},\hspace{0.5cm}z\in\mathcal{V}\cap\gamma^-,\ \ 0<\textnormal{arg}\ z\leq 2\pi
\end{equation*}
and both stated local behaviors suggest to use the Bessel functions $I_1(\zeta)$ and $K_1(\zeta)$ in the construction of an edge point parametrix. Again, we proceed in several steps. First we recall (cf. \cite{BE}) that the modified Bessel functions are unique independent solutions to Bessel's equation
\begin{equation*}
	z^2w''+zw'-(z^2+1)w = 0
\end{equation*}
satisfying the following asymptotic conditions as $\zeta\rightarrow\infty$ and $-\frac{\pi}{2}<\textnormal{arg}\ \zeta<\frac{3\pi}{2}$
\begin{equation*}
	I_1(\zeta)\sim\frac{e^{\zeta}}{\sqrt{2\pi\zeta}}\bigg(1-\frac{3}{8\zeta}-\frac{15}{64\zeta^2}+O\big(\zeta^{-3}\big)\bigg)+
	\frac{e^{-\zeta}e^{i\frac{3\pi}{2}}}{\sqrt{2\pi\zeta}}\bigg(1+\frac{3}{8\zeta}-\frac{15}{64\zeta^2}+O\big(\zeta^{-3}\big)\bigg)
\end{equation*} 
as well as for $-\frac{3\pi}{2}<\textnormal{arg}\ \zeta<\frac{\pi}{2}$
\begin{equation*}
		I_1(\zeta)\sim\frac{e^{\zeta}}{\sqrt{2\pi\zeta}}\bigg(1-\frac{3}{8\zeta}-\frac{15}{64\zeta^2}+O\big(\zeta^{-3}\big)\bigg)+
	\frac{e^{-\zeta}e^{-i\frac{3\pi}{2}}}{\sqrt{2\pi\zeta}}\bigg(1+\frac{3}{8\zeta}-\frac{15}{64\zeta^2}+O\big(\zeta^{-3}\big)\bigg).
\end{equation*}
On the other hand
\begin{equation*}
	K_1(\zeta) = \sqrt{\frac{\pi}{2\zeta}}e^{-\zeta}\bigg(1+\frac{3}{8\zeta}-\frac{15}{64\zeta^2}+O\big(\zeta^{-3}\big)\bigg),\hspace{0.5cm}\zeta\rightarrow\infty
\end{equation*}
which holds in a full neighborhood of infinity. Secondly $I_1(\zeta),K_1(\zeta)$ satisfy monodromy relations, valid on the entire universal covering of the punctured plane
\begin{equation}\label{pr:1}
	I_1\big(e^{-i\pi }\zeta\big) = e^{-i\pi }I_1(\zeta),\hspace{1cm} K_1(e^{-i\pi }\zeta) = e^{i\pi }K_1(\zeta)+i\pi I_1(\zeta)
\end{equation}
and finally the following expansions at the origin are valid
\begin{equation*}
	I_1(\zeta) = \frac{\zeta}{2}\bigg(1+\frac{\zeta^2}{8}+O\big(\zeta^4\big)\bigg),\hspace{0.5cm} K_1(\zeta)=\frac{1}{\zeta}\Big(1+O\big(\zeta^2\ln\zeta\big)\Big),\ \ \zeta\rightarrow 0.
\end{equation*}
Remembering the latter properties we introduce on the punctured plane $\zeta\in\mathbb{C}\backslash\{0\}$,
\begin{equation}\label{pr:2}
	P_{BE}(\zeta) = e^{-i\frac{\pi}{4}}\begin{pmatrix}
	I_1(2e^{-i\frac{\pi}{2}}\sqrt{\zeta}) & -\frac{i}{\pi}K_1(2e^{-i\frac{\pi}{2}}\sqrt{\zeta})\\
	-2\pi i\sqrt{\zeta}\big(I_1\big)'(2e^{-i\frac{\pi}{2}}\sqrt{\zeta}) & -2\sqrt{\zeta}\big(K_1\big)'(2e^{-i\frac{\pi}{2}}\sqrt{\zeta})\\
	\end{pmatrix},\hspace{0.5cm}0<\textnormal{arg}\ \zeta\leq2\pi.
\end{equation}
From the behavior of $I_1(\zeta)$ and $K_1(\zeta)$ at infinity, we deduce
\begin{eqnarray}
	P_{BE}(\zeta) &=& \zeta^{-\sigma_3/4}(2\pi)^{-\sigma_3/2}\frac{1}{\sqrt{2}}\begin{pmatrix}
	1 & -i\\
	-i & 1\\
	\end{pmatrix}\Bigg[I+\frac{1}{16\sqrt{\zeta}}\begin{pmatrix}
	-5i & -2\\
	-2 & 5i\\
	\end{pmatrix} +\frac{3}{64\zeta}\begin{pmatrix}-1& -6i\\
	6i & -1\\
	\end{pmatrix}\nonumber\\
	&&+O\big(\zeta^{-3/2}\big)\Bigg]e^{-2i\sqrt{\zeta}\sigma_3}\begin{pmatrix}
	1 & 0\\
	1 & 1\\
	\end{pmatrix},\hspace{0.5cm}\zeta\rightarrow\infty,\ \frac{\pi}{3}<\textnormal{arg}\ \zeta<\frac{7\pi}{3}\label{pr:3}
\end{eqnarray}
and
\begin{eqnarray}
	P_{BE}(\zeta) &=& \zeta^{-\sigma_3/4}(2\pi)^{-\sigma_3/2}\frac{1}{\sqrt{2}}\begin{pmatrix}
	1 & -i\\
	-i & 1\\
	\end{pmatrix}\Bigg[I+\frac{1}{16\sqrt{\zeta}}\begin{pmatrix}
	-5i & -2\\
	-2 & 5i\\
	\end{pmatrix} +\frac{3}{64\zeta}\begin{pmatrix}-1& -6i\\
	6i & -1\\
	\end{pmatrix}\nonumber\\
	&&+O\big(\zeta^{-3/2}\big)\Bigg]e^{-2i\sqrt{\zeta}\sigma_3}\begin{pmatrix}
	1 & 0\\
	-1 & 1\\
	\end{pmatrix},\hspace{0.5cm}\zeta\rightarrow\infty,\ -\frac{\pi}{3}<\textnormal{arg}\ \zeta<\frac{5\pi}{3}.\label{pr:4}
\end{eqnarray}
The ``bare parametrix" is given by
\begin{equation}\label{pr:5}
	P_{BE}^{RH}(\zeta) = \left\{
                                 \begin{array}{ll}
                                   P_{BE}(\zeta)\begin{pmatrix}
                                   1 & 0\\
                                   1 & 1\\
                                   \end{pmatrix}, & \hbox{arg $\zeta\in(0,\frac{\pi}{3})$,} \smallskip\\
                                   P_{BE}(\zeta), & \hbox{arg $\zeta\in(\frac{\pi}{3},\frac{5\pi}{3})$,} \smallskip \\
                                   P_{BE}(\zeta)\begin{pmatrix}
                                   1 & 0\\
                                   -1 & 1\\
                                   \end{pmatrix},& \hbox{arg $\zeta\in(\frac{5\pi}{3},2\pi)$.} 
                                 \end{array}
                               \right.
\end{equation}
and its analytical properties summarized below
\begin{figure}[tbh]
  \begin{center}
  \psfragscanon
  \psfrag{1}{\footnotesize{$\begin{pmatrix}
  0 & 1\\
  -1 & 0\\
  \end{pmatrix}$}}
  \psfrag{2}{\footnotesize{$\begin{pmatrix}
  1 & 0\\
  -1 & 1\\
  \end{pmatrix}$}}
  \psfrag{3}{\footnotesize{$\begin{pmatrix}
  1 & 0\\
  -1 & 1\\
  \end{pmatrix}$}}
  \includegraphics[width=3cm,height=4cm]{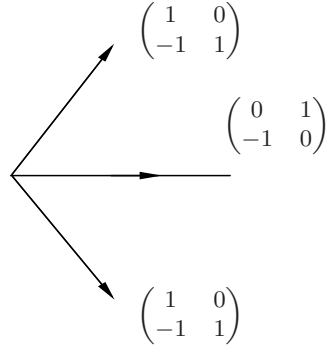}
  \end{center}
  \caption{The model RHP near $z=0$ which can be solved explicitly using Bessel functions}
  \label{fig7}
\end{figure}

\begin{itemize}
	\item $P_{BE}^{RH}(\zeta)$ is analytic for $\zeta\in\mathbb{C}\backslash\big\{\textnormal{arg}\ \zeta=0,\frac{\pi}{3},\frac{5\pi}{3}\big\}$
	\item The following jumps hold, see Figure \ref{fig7},
	\begin{eqnarray*}
		\big(P_{BE}^{RH}(\zeta)\big)_+ &=& \big(P_{BE}^{RH}(\zeta)\big)_-\begin{pmatrix}
		1 & 0\\
		-1 & 1\\
		\end{pmatrix},\hspace{1cm}\textnormal{arg}\ \zeta=\frac{\pi}{3}\\
		\big(P_{BE}^{RH}(\zeta)\big)_+ &=& \big(P_{BE}^{RH}(\zeta)\big)_-\begin{pmatrix}
		1 & 0\\
		-1 & 1\\
		\end{pmatrix},\hspace{1cm}\textnormal{arg}\ \zeta=\frac{5\pi}{3}.
	\end{eqnarray*}
	For the jump on the line $\textnormal{arg}\ \zeta = 0$ we notice that the monodromy relations imply
	\begin{eqnarray*}
		I_1(2e^{-i\frac{\pi}{2}}\sqrt{\zeta}_+) &=& I_1(2e^{-i\frac{\pi}{2}}e^{-i\pi}\sqrt{\zeta}_-) = e^{-i\pi}I_1(2e^{-i\frac{\pi}{2}}\sqrt{\zeta}_-)\\
		\big(I_1\big)'(2e^{-i\frac{\pi}{2}}\sqrt{\zeta}_+) &=& \big(I_1\big)'(2e^{-i\frac{\pi}{2}}\sqrt{\zeta}_-)
	\end{eqnarray*}
	and
	\begin{eqnarray*}
		K_1(2e^{-i\frac{\pi}{2}}\sqrt{\zeta}_+) &=& K_1(2e^{-i\frac{\pi}{2}}e^{-i\pi}\sqrt{\zeta}_-) = e^{i\pi}K_1(2e^{-i\frac{\pi}{2}}\sqrt{\zeta}_-)+i\pi I_1(2e^{-i\frac{\pi}{2}}\sqrt{\zeta}_-)\\
		\big(K_1\big)'(2e^{-i\frac{\pi}{2}}\sqrt{\zeta}_+) &=& \big(K_1\big)'(2e^{-i\frac{\pi}{2}}\sqrt{\zeta}_-)-i\pi\big(I_1\big)'(2e^{-i\frac{\pi}{2}}\sqrt{\zeta}_-).
	\end{eqnarray*}
	Therefore
	\begin{equation*}
		\big(P_{BE}(\zeta)\big)_+=\big(P_{BE}(\zeta)\big)_-\begin{pmatrix}
		-1 & 1\\
		0 & -1\\
		\end{pmatrix}
	\end{equation*}
	and hence
	\begin{equation*}
		\big(P_{BE}^{RH}(\zeta)\big)_+=\big(P_{BE}^{RH}(\zeta)\big)_-\begin{pmatrix}
		0 & 1\\
		-1 & 0\\
		\end{pmatrix},\hspace{1cm}\textnormal{arg}\ \zeta=0.
	\end{equation*}
	\item At the origin 
	\begin{equation}\label{pr:6}
		P_{BE}^{RH}(\zeta) = \frac{e^{-i\frac{\pi}{4}}}{2\sqrt{\zeta}}\Bigg[\begin{pmatrix}
		0 & \frac{1}{\pi}\\
		0 & 1\\
		\end{pmatrix}+O\big(\zeta\ln\zeta\big)\Bigg],\hspace{0.5cm}\zeta\rightarrow 0,\ \ \frac{\pi}{3}<\textnormal{arg}\ \zeta<\frac{5\pi}{3}
	\end{equation}
	and for the other sector we have to multiply the latter expansion with the correct multipliers from \eqref{pr:5}
	\item In order to determine the behavior of $P_{BE}^{RH}(\zeta)$ at infinity, we recall \eqref{pr:3} and \eqref{pr:4} as well as
	\begin{equation*}
		e^{-2i\sqrt{\zeta}\sigma_3}\begin{pmatrix}
		1 & 0\\
		\pm 1 & 1\\
		\end{pmatrix}e^{2i\sqrt{\zeta}\sigma_3} = \begin{pmatrix}
		1 & 0\\
		\pm e^{4i\sqrt{\zeta}} & 1\\
		\end{pmatrix},\hspace{0.5cm}\frac{\pi}{3}<\textnormal{arg}\ \zeta<\frac{5\pi}{3}.
	\end{equation*}
	However for those $\zeta$, we have $\textnormal{Re}\big(4i\sqrt{\zeta}\big)<0$, hence the given product approaches the identity exponentially fast as $\zeta\rightarrow\infty$. Together we have
	\begin{eqnarray}\label{pr:7}
		P_{BE}^{RH}(\zeta)&=& \zeta^{-\sigma_3/4}(2\pi)^{-\sigma_3/2}\frac{1}{\sqrt{2}}\begin{pmatrix}
	1 & -i\\
	-i & 1\\
	\end{pmatrix}\Bigg[I+\frac{1}{16\sqrt{\zeta}}\begin{pmatrix}
	-5i & -2\\
	-2 & 5i\\
	\end{pmatrix} \nonumber\\
	&&+\frac{3}{64\zeta}\begin{pmatrix}-1& -6i\\
	6i & -1\\
	\end{pmatrix}
	+O\big(\zeta^{-3/2}\big)\Bigg]e^{-2i\sqrt{\zeta}\sigma_3},\hspace{0.5cm}\zeta\rightarrow\infty 
\end{eqnarray}
valid in a whole neighborhood of infinity. 
\end{itemize}
With the help of the model function $P_{BE}^{RH}(\zeta)$, the local parametrix near $z=0$ is now defined as follows: first define
\begin{equation}\label{pr:8}
	\zeta(z) = e^{-i\pi}\bigg(\frac{N}{4}\bigg)^2\Big(-2g(z)+V(z)+l-2\pi i\Big)^2,\hspace{0.5cm} |z|<r,\ \ 0<\textnormal{arg}\ \zeta\leq 2\pi.
\end{equation}
which is also a locally conformal change of variables, as
\begin{equation*}
	\zeta(z) = \bigg(\frac{Nq(0)\sqrt{b}}{2}\bigg)^2z\bigg[1+\frac{2}{3}\bigg(\frac{q'(0)}{q(0)}-\frac{1}{2b}\bigg)z+O\big(z^2\big)\bigg],\hspace{0.5cm}|z|<r.
\end{equation*}
Using the change $\zeta=\zeta(z)$, the left parametrix $V(z)$ near $z=0$ is given by the formula
\begin{equation}\label{pr:9}
	W(z) = B_l(z)P_{BE}^{RH}\big(\zeta(z)\big)e^{2i\zeta^{1/2}(z)\sigma_3}(-z)^{-\sigma_3/2},\hspace{0.5cm} |z|<r
\end{equation}
with the matrix multiplier
\begin{equation*}
	B_l(z) =M(z)(-z)^{\sigma_3/2}\frac{1}{\sqrt{2}}\begin{pmatrix}
	1 & i\\
	i & 1\\
	\end{pmatrix}\zeta^{\sigma_3/4}(z)(2\pi)^{\sigma_3/2}.
\end{equation*}
Again $B_l(z)$ is analytic in a neighborhood of $z=0$, for $z\in(0,r)$
\begin{eqnarray*}
	\big(B_l(z)\big)_+ &=&M_+(z)(-z)^{\sigma_3/2}_+\frac{1}{\sqrt{2}}\begin{pmatrix}
	1& i\\
	i & 1\\
	\end{pmatrix}\zeta_+^{\sigma_3/4}(z)(2\pi)^{\sigma_3/2}\\
	&=& M_-(z)\begin{pmatrix}
	0 & z\\
	-\frac{1}{z} & 0\\
	\end{pmatrix}(-z)_-^{\sigma_3/2}e^{i\pi\sigma_3}\frac{1}{\sqrt{2}}\begin{pmatrix}
	1 & i\\
	i & 1\\
	\end{pmatrix}\zeta_-^{\sigma_3/4}(z)e^{-i\frac{\pi}{2}\sigma_3}(2\pi)^{\sigma_3/2}\\
	&=& \big(B_l(z)\big)_-
\end{eqnarray*}
and with
\begin{equation*}
	\delta(z)\big(\zeta(z)\big)^{-1/4} = \bigg(\frac{z}{\zeta(z)(z-b)}\bigg)^{1/4} = \bigg(\frac{4e^{i\pi}}{N^2b^2q(0)}\bigg)^{1/4}\Big(1+O\big(z\big)\Big),\hspace{0.5cm}z\rightarrow 0
\end{equation*}
we obtain from a direct computation
\begin{equation*}
	B_l(0) = \frac{1}{\sqrt{2}}\bigg(\frac{\sqrt{b}}{2}\bigg)^{\sigma_3}\begin{pmatrix}
	1 & i\\
	i & 1\\
	\end{pmatrix}\bigg(\frac{N^2b^2q(0)}{4}e^{-i\pi}\bigg)^{\sigma_3/4}.
\end{equation*}
Moreover the latter identity combined with \eqref{pr:6}, allows us to show that 
\begin{equation*}
	W(z) = O(1),\hspace{0.5cm} z\rightarrow 0,\ \ \ \frac{\pi}{3}<\textnormal{arg}\ z<\frac{5\pi}{3}
\end{equation*}
as well as
\begin{equation*}
	W(z) = O\big(z^{-1}\big),\hspace{0.5cm} z\rightarrow 0,\ \ \ 0<\textnormal{arg}\ z<\frac{\pi}{3},\ \ \frac{5\pi}{3}<\textnormal{arg}\ z<2\pi,
\end{equation*}
which precisely matches the endpoint behavior of $S(z)$ in \eqref{opeq2} and \eqref{opeq3}. On the other hand, the parametrix $W(z)$ has jumps along the curves depicted in Figure \ref{fig8}, and we can locally match the latter curves with the jump contour in the S-RHP.\\[0.10cm]
\smallskip
\begin{figure}[tbh]
  \begin{center}
  \psfragscanon
  \psfrag{1}{\footnotesize{$\begin{pmatrix}
  0 & z\\
  -\frac{1}{z} & 1\\
  \end{pmatrix}$}}
  \psfrag{2}{\footnotesize{$\begin{pmatrix}
  1 & 0\\
  \frac{1}{z}e^{-NG(z)} & 1\\
  \end{pmatrix}$}}
  \psfrag{4}{\footnotesize{$\begin{pmatrix}
  1 & e^{N(g_++g_--V-l)}\\
  0 & 1\\
  \end{pmatrix}$}}
  \psfrag{4}{\footnotesize{$\begin{pmatrix}
  1 & 0\\
  \frac{1}{z}e^{NG(z)} & 1\\
  \end{pmatrix}$}}
  \includegraphics[width=7cm,height=3.5cm]{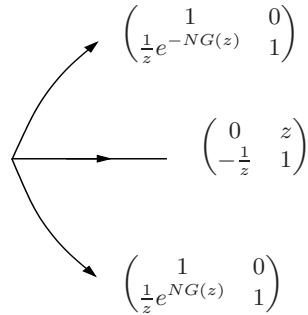}
  \end{center}
  \caption{Transformation of parametrix jumps to original jumps}
  \label{fig8}
\end{figure}

Moreover, the jumps are identical to the ones in the $S$-RHP since
\begin{eqnarray*}
	(-z)^{\sigma_3/2}e^{-2i\zeta^{1/2}(z)\sigma_3}\begin{pmatrix}
	1 & 0\\
	-1 & 1\\
	\end{pmatrix}e^{2i\zeta^{1/2}(z)\sigma_3}(-z)^{-\sigma_3/2} &=& \begin{pmatrix}
	1 & 0\\
	\frac{1}{z}e^{-NG(z)} & 1\\
	\end{pmatrix},\hspace{0.5cm}z\in\mathcal{V}\cap\gamma^+\\
	(-z)^{\sigma_3/2}e^{-2i\zeta^{1/2}(z)\sigma_3}\begin{pmatrix}
	0 & 1\\
	-1 & 0\\
	\end{pmatrix}e^{2i\zeta^{1/2}(z)\sigma_3}(-z)^{-\sigma_3/2} &=& \begin{pmatrix}
	0 & z\\
	-\frac{1}{z} & 0\\
	\end{pmatrix},\hspace{0.5cm}z\in\mathcal{V}\cap(0,b)\\
	(-z)^{\sigma_3/2}e^{-2i\zeta^{1/2}(z)\sigma_3}\begin{pmatrix}
	1 & 0\\
	-1 & 1\\
	\end{pmatrix}e^{2i\zeta^{1/2}(z)\sigma_3}(-z)^{-\sigma_3/2} &=& \begin{pmatrix}
	1 & 0\\
	\frac{1}{z}e^{NG(z)} & 1\\
	\end{pmatrix},\hspace{0.5cm}z\in\mathcal{V}\cap\gamma^-.
\end{eqnarray*}
Hence the ratio of $S(z)$ with $W(z)$ is locally analytic, i.e.
\begin{equation}\label{pr:10}
	S(z) = N_l(z)W(z),\hspace{0.5cm} 0<|z|<r<\frac{b}{2}.
\end{equation}
The role of the left multiplier $B_l(z)$ in \eqref{pr:9} is the same as in the construction of the parametrix $U(z)$, it provides us with an asymptotic matching relation between the model functions: with
\begin{equation*}
	B_l(z)\zeta^{-\sigma_3/4}(z)(2\pi)^{-\sigma_3/2}\frac{1}{\sqrt{2}}\begin{pmatrix}
	1 & -i\\
	-i & 1\\
	\end{pmatrix} = M(z)(-z)^{\sigma_3/2}
\end{equation*}
we deduce from \eqref{pr:7}
\begin{eqnarray}\label{pr:11}
	W(z)&=&M(z)(-z)^{\sigma_3/2}\Bigg[I+\frac{1}{16\sqrt{\zeta}}\begin{pmatrix}
	-5i & -2\\
	-2 & 5i\\
	\end{pmatrix}+\frac{3}{64\zeta}\begin{pmatrix}-1& -6i\\
	6i & -1\\
	\end{pmatrix}
	+O\big(\zeta^{-3/2}\big)\Bigg](-z)^{-\sigma_3/2}\nonumber\\
	&=&\Bigg[I+\frac{W_1(z)}{32\sqrt{\zeta}}+\frac{3W_2(z)}{64\zeta}+O\big(\zeta^{-3/2}\big)\Bigg]M(z)
\end{eqnarray}
as $N\rightarrow\infty$ (hence $|\zeta|\rightarrow\infty$), for any $\al>1$ and $0<r_1\leq|z|\leq r_2<\frac{b}{2}$. The coefficients $W_k=(W_k^{ij})$ are given by
\begin{eqnarray*}
	W_1^{11}(z) &=&-i\delta^2(z)\bigg(5-\frac{1}{z}\mathcal{D}^2(z)-z\mathcal{D}^{-2}(z)\bigg)-i\delta^{-2}(z)\bigg(5+\frac{1}{z}\mathcal{D}^2(z)+z\mathcal{D}^{-2}(z)\bigg)\\
	 &=& -W_1^{22}(z),\\
	W_1^{12}(z) &=&\frac{b}{4}\bigg(-5\big(\delta^2(z)-\delta^{-2}(z)\big)+\frac{1}{z}\mathcal{D}^2(z)\big(\delta(z)-\delta^{-1}(z)\big)^2\\
	&&+z\mathcal{D}^{-2}(z)\big(\delta(z)+\delta^{-1}(z)\big)^2\bigg),\\
	W_1^{21}(z) &=& \frac{4}{b}\bigg(-5\big(\delta^2(z)-\delta^{-2}(z)\big)+\frac{1}{z}\mathcal{D}^2(z)\big(\delta(z)+\delta^{-1}(z)\big)^2\\
	&&+z\mathcal{D}^{-2}(z)\big(\delta(z)-\delta^{-1}(z)\big)^2\bigg),
\end{eqnarray*}
and
\begin{eqnarray*}
	W_2^{11}(z)&=&-1+\frac{3}{2}\delta^2(z)\bigg(\frac{1}{z}\mathcal{D}^2(z)-z\mathcal{D}^{-2}(z)\bigg)-\frac{3}{2}\delta^{-2}(z)\bigg(\frac{1}{z}\mathcal{D}^2(z)-z\mathcal{D}^{-2}(z)\bigg),\\
	W_2^{22}(z)&=&-1-\frac{3}{2}\delta^2(z)\bigg(\frac{1}{z}\mathcal{D}^2(z)-z\mathcal{D}^{-2}(z)\bigg)+\frac{3}{2}\delta^{-2}(z)\bigg(\frac{1}{z}\mathcal{D}^2(z)-z\mathcal{D}^{-2}(z)\bigg),\\
	W_2^{12}(z)&=&-\frac{ib}{4}\bigg(\frac{3}{2z}\mathcal{D}^2(z)\big(\delta(z)-\delta^{-1}(z)\big)^2-\frac{3}{2}z\mathcal{D}^{-2}(z)\big(\delta(z)+\delta^{-1}(z)\big)^2\bigg),\\
	W_2^{21}(z)&=&-\frac{4i}{b}\bigg(\frac{3}{2z}\mathcal{D}^2(z)\big(\delta(z)+\delta^{-1}(z)\big)^2-\frac{3}{2}z\mathcal{D}^{-2}(z)\big(\delta(z)-\delta^{-1}(z)\big)^2\bigg).
\end{eqnarray*}
Since $\zeta(z)$ is of order $N^2$ on the latter annulus and $\delta(z),\mathcal{D}(z)$ are bounded, equation \eqref{pr:10} implies the following matching relation between $W(z)$ and $M(z)$,
\begin{equation}\label{pr:12}
	W(z) = \big(I+o(1)\big)M(z),\hspace{0.5cm} N\rightarrow\infty,\ 0\leq\tau\leq 1-\varepsilon<1,\ \ 0<r_1\leq |z|\leq r_2<\frac{b}{2}.
\end{equation}
We now use the model functions $M(z),U(z)$ and $W(z)$ and employ another transformation.


\subsection{Third transformation of the RHP - ratio problem}

In this step we put
\begin{equation}\label{R:1}
	R(z) = S(z)\left\{
                                   \begin{array}{ll}
                                     \big(W(z)\big)^{-1}, & \hbox{$|z|<r_1$,} \\
                                     \big(U(z)\big)^{-1}, & \hbox{$|z-b|<r_2$,} \\
                                     \big(M(z)\big)^{-1}, & \hbox{$|z|>r_1,|z-b|>r_2$} 
                                   \end{array}
                                 \right.
\end{equation}
with $0<r_2<\frac{b}{2}$ and $0\leq r_1<\min\big\{\frac{\pi}{2t},\frac{b}{2}\big\}$. The reason for choosing the latter radius in this explicit $t$-dependent form arises from the analyticity of the potential $V(z)$, which is holomorphic in the strip $\Delta_t$. Moreover the set of its branch points $\Omega_t$ is given by
\begin{equation*}
	\Omega_t = \left\{i\frac{n\pi}{t}:\ n\in\mathbb{Z}\backslash\{0\}\right\}.
\end{equation*}
Hence we need to choose a neighborhood of the origin in \eqref{R:1} which does not include any of the branch points. With $C_{t,b}$ denoting the clockwise oriented circles shown in Figure \ref{fig9}, the ratio-function $R(z)$ solves the following RHP
\begin{figure}[tbh]
  \begin{center}
  \psfragscanon
  \psfrag{1}{\footnotesize{$C_t$}}
  \psfrag{2}{\footnotesize{$C_b$}}
  \psfrag{4}{\footnotesize{$\hat{\gamma}^+$}}
  \psfrag{5}{\footnotesize{$\hat{\gamma}^-$}}
  \includegraphics[width=8cm,height=3cm]{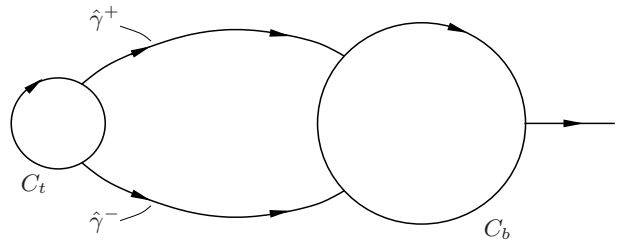}
  \end{center}
  \caption{The jump graph for the ratio function $R(z)$}
  \label{fig9}
\end{figure}
\begin{itemize}
	\item $R(z)$ is analytic for $z\in\mathbb{C}\backslash\big\{C_{t,b}\cup\hat{\Gamma}\cup(b+r_2,\infty)\big\}$ with $\hat{\Gamma} = \hat{\gamma}_+\cup\hat{\gamma}_-$
	\item The jumps are as follows: on the infinite branch $(b+r_2,\infty)$,
	\begin{equation*}
		R_+(z) = R_-(z)M(z)\begin{pmatrix}
		1 & ze^{N(g_++g_--V-l)}\\
		0 & 1\\
		\end{pmatrix}\big(M(z)\big)^{-1},
	\end{equation*}
		on the upper lens boundary $\hat{\gamma}_+$, resp. lower lens boundary $\hat{\gamma}_-$,
		\begin{eqnarray*}
			R_+(z) &=&R_-(z)M(z)\begin{pmatrix}
			1 & 0\\
			\frac{1}{z}e^{-NG(z)} & 1 \\
			\end{pmatrix}\big(M(z)\big)^{-1},\hspace{0.5cm}z\in\hat{\gamma}_+\\
			R_+(z) &=&R_-(z)M(z)\begin{pmatrix}
			1 & 0\\
			\frac{1}{z}e^{NG(z)} & 1 \\
			\end{pmatrix}\big(M(z)\big)^{-1},\hspace{0.5cm}z\in\hat{\gamma}_-
		\end{eqnarray*}
		and on the clockwise oriented circles $C_{t,b}$,
		\begin{equation*}
			R_+(z) = R_-(z)\left\{
                                   \begin{array}{ll}
                                     W(z)\big(M(z)\big)^{-1}, & \hbox{$|z|=r_1$,} \\
                                     U(z)\big(M(z)\big)^{-1}, & \hbox{$|z-b|=r_2$.} 
                                   \end{array}
                                 \right.
   \end{equation*}
	\item As $z\rightarrow\infty$, the function is normalized as $R(z)\rightarrow I$.
\end{itemize}
We note that $R(z)$ has no jumps inside of $C_t$ and $C_b$ and across the line segment in between. Also, $R(z)$ is bounded at $z=0$ and $z=b$, which follows from \eqref{pl:8} and \eqref{pr:10}. To move ahead, we recall the previously stated behavior of the jump matrices as $N\rightarrow\infty$ and note that these estimations are valid for any $\al>1$ such that $0\leq\tau\leq1-\varepsilon<1$. In fact, on the half ray $(b+r_2,\infty)$, the jumps approach the identity matrix. Through \eqref{pl:9}, the same is also true on the circle $C_b$, more precisely with $G_R$ denoting the jump matrix in the latter ratio-RHP
\begin{equation}\label{R:2}
	\|G_R-I\|_{L^2\cap L^{\infty}(C_b)}\leq \frac{\hat{c}}{N},\hspace{0.5cm} N\rightarrow\infty,\ \ 0\leq\tau\leq1-\varepsilon<1
\end{equation}
with a constant $\hat{c}>0$ whose value is not important. For the lens boundaries we use the local identities
\begin{equation*}
	G(z) = 2\pi i\mp2h_2(z)\sqrt{z},\ \ z\in\mathcal{V}\cap\gamma^{\pm}.
\end{equation*}
They imply
\begin{equation}\label{R:3}
	\sup_{z\in\hat{\gamma}^+\cup\hat{\gamma}^-}\big|G_R(z)-I\big| = \sup_{z\in(\hat{\gamma}^+\cup\hat{\gamma}^-)\cap C_t}\big|G_R(z)-I\big| = O\big(te^{-cNt^{-1/2}}\big),\hspace{0.25cm}N\rightarrow\infty,\ \ 0\leq\tau\leq1-\varepsilon<1
\end{equation}
and since
\begin{equation*}
	\frac{N}{\sqrt{t}} = \sqrt{N\al} \rightarrow\infty\hspace{1cm}\textnormal{as}\ N\rightarrow\infty\hspace{0.5cm}\textnormal{for any}\ \al>1:\ \ 0\leq\tau\leq1-\varepsilon<1,
\end{equation*}
we see that the contributions arising from the lenses decay exponentially fast. In order to estimate $G_R$ on the circle $C_t$, we use \eqref{pr:11}
\begin{equation}\label{R:4}
	\sup_{z\in C_t}\bigg|G_R(z)-I-\frac{W_1(z)}{32\sqrt{\zeta(z)}}\bigg| = O\bigg(\frac{t}{N^2}\bigg) = O\big(\tau N^{-1}\big),\hspace{0.5cm} N\rightarrow\infty,\ \ 0\leq\tau\leq1-\varepsilon<1.
\end{equation}
This estimation holds since \eqref{pr:11} extends to a full asymptotic series of the form
\begin{equation}\label{R:5}
	G_R(z)-I = \sum_{k=1}^{\infty}\tilde{W}_k(z)N^{-k},
\end{equation}
valid as long as $N^2|z|\rightarrow\infty$. However for $z\in C_t$
\begin{equation*}
	\frac{1}{N^2|z|} = O\big(\tau N^{-1}\big)\hspace{0.5cm}\textnormal{as}\ N\rightarrow\infty\hspace{0.5cm}\textnormal{for any}\ \al>1:\ 0\leq\tau\leq1-\varepsilon<1,
\end{equation*}
so \eqref{R:5} holds in particular for $z\in C_t$ and since $\tilde{W}_k(z)$ has a pole of order at most $\left\lfloor\frac{k+1}{2}\right\rfloor$ at the origin, we obtain \eqref{R:4}. From the local expansions
\begin{equation*}
	\mathcal{D}^2(z) = -z\big(1+O\left(\sqrt{z}\,\right)\big),\hspace{0.5cm} \delta^2(z) = -i\sqrt{b}\sqrt{z}\big(1+O(z)\big),\ \ \ z\rightarrow 0,\ \ z\in\mathbb{C}\backslash[0,b]
\end{equation*}
we now evaluate the residue of $\tilde{W}_1(z)$ at the origin
\begin{equation*}
	B = \textnormal{res}_{z=0} \tilde{W}_1(z) = \textnormal{res}_{z=0}\bigg(\frac{NW_1(z)}{32\sqrt{\zeta(z)}}\bigg) = \frac{3}{16q(0)}\bigg(\frac{b}{4}\bigg)^{\sigma_3/2}\begin{pmatrix}
	1 & e^{i\frac{\pi}{2}}\\
	e^{i\frac{\pi}{2}} & -1\\
	\end{pmatrix}\bigg(\frac{b}{4}\bigg)^{-\sigma_3/2}.
\end{equation*}
Hence we can rewrite \eqref{R:4} as
\begin{equation}\label{R:6}
	\sup_{z\in C_t}\bigg|G_R(z)-I-\frac{B}{Nz}\bigg|=O\big(N^{-1}\big),\hspace{0.5cm} N\rightarrow\infty,\ \ 0\leq\tau\leq1-\varepsilon<1,
\end{equation}
however 
\begin{equation*}
	\sup_{z\in C_t}\bigg|\frac{B}{Nz}\bigg| = O(\tau),\hspace{0.5cm}N\rightarrow\infty,\ \ 0\leq\tau\leq1-\varepsilon<1,
\end{equation*}
so $G_R(z)-I$ is not uniformly close to zero on $C_t$ as $N\rightarrow\infty$ for all $0\leq\tau\leq1-\varepsilon<1$. To overcome this difficulty we employ our final transformation.


\subsection{Fourth and final transformation of the RHP}

Since $\det B= \textnormal{trace}\ B = 0$, we see that the matrix function $I+\frac{B}{Nz}$ is unimodular for any $z\in\mathbb{C}\backslash\{0\}$, in fact
\begin{equation*}
	\bigg(I+\frac{B}{Nz}\bigg)^{-1}=I-\frac{B}{Nz},\hspace{0.5cm}z\in\mathbb{C}\backslash\{0\}.
\end{equation*} 
We introduce 
\begin{equation}\label{U:1}
	Q(z) = \left\{
                                   \begin{array}{ll}
                                     R(z), & \hbox{$|z|\leq r_1$,} \\
                                     R(z)\big(I+\frac{B}{Nz}\big)^{-1}, & \hbox{$|z|>r_1$.} 
                                   \end{array}
                                 \right.
\end{equation}
and are lead to the following RHP
\begin{itemize}
	\item $Q(z)$ is analytic for $z\in\mathbb{C}\backslash\big\{C_{t,b}\cup\hat{\Gamma}\cup(b+r_2,\infty)\big\}$
	\item With $G_Q$ denoting the jump matrix in the $Q$-RHP we have
	\begin{eqnarray*}
		G_Q(z)&=&G_R(z)\bigg(I+\frac{B}{Nz}\bigg)^{-1},\hspace{0.5cm} z\in C_t\\
		G_Q(z)&=&\bigg(I+\frac{B}{Nz}\bigg)G_R(z)\bigg(I+\frac{B}{Nz}\bigg)^{-1},\hspace{0.5cm} z\in C_b\cup\hat{\Gamma}\cup(b+r_2,\infty)
	\end{eqnarray*}
	\item As $z\rightarrow\infty$, we have $Q(z)\rightarrow I$
\end{itemize}
In the stated problem all jump matrices approach the identity matrix as $N\rightarrow\infty$ for any $\al>1$ such that $0\leq\tau\leq1-\varepsilon<1$. More precisely with $\Sigma_Q$ denoting the underlying contour
\begin{equation}\label{U:2}
	\|G_Q-I\|_{L^2\cap L^{\infty}(\Sigma_Q)}\leq \frac{c}{N},\hspace{0.5cm} N\rightarrow\infty,\ \ 0\leq\tau\leq1-\varepsilon<1
\end{equation}
with a constant $c>0$ whose value is not important. This estimation allows us to solve the $Q$-RHP iteratively.


\subsection{Solution of the RHP for $Q(z)$ via iteration}

The final RHP for the function $Q(z)$ reads as
\begin{itemize}
	\item $Q(z)$ is analytic for $z\in\mathbb{C}\backslash\Sigma_Q$.
	\item The boundary values on the contour shown in Figure \ref{fig9} are related via the identity
	\begin{equation*}
		Q_+(z) = Q_-(z)G_Q(z),\hspace{0.5cm} z\in\Sigma_Q.
	\end{equation*}
	\item The normalization $Q(z)=I+O(z^{-1})$ is valid as $z\rightarrow\infty$
\end{itemize}
and it is equivalent to the singular integral equation
\begin{equation}\label{It:1}
	Q_-(z) = I+\frac{1}{2\pi i}\int\limits_{\Sigma_Q}Q_-(w)\big(G_Q(w)-I\big)\frac{dw}{w-z_-}.
\end{equation}
Through \eqref{U:2}, we obtain \cite{DZ,BK} that equation \eqref{It:1} can be solved iteratively in $L^2(\Sigma_Q)$ for sufficiently large $N$ and $\al>1:0\leq\tau\leq1-\varepsilon<1$. 
Also, the unique solution satisfies
\begin{equation}\label{It:2}
	\|Q_--I\|_{L^2(\Sigma_Q)}\leq \frac{c}{N},\hspace{0.5cm} N\rightarrow\infty\hspace{0.5cm}\textnormal{for any}\ \al>1:\ 0\leq\tau\leq1-\varepsilon<1.
\end{equation}
We are now ready to determine the large $N$ asymptotics of the normalizing constants $h_{N,t}$. To this end notice that for $z\in\mathbb{C}\backslash\Sigma_Q$
\begin{equation}\label{It:3}
	Q(z) = I+\frac{i}{2\pi z}\int\limits_{\Sigma_Q}Q_-(w)\big(G_Q(w)-I\big)dw +O\big(z^{-2}\big),\hspace{0.5cm}z\rightarrow\infty
\end{equation}
and also as $N\rightarrow\infty$ for any $\al>1:0\leq\tau\leq1-\varepsilon<1$ (from \eqref{U:2} and \eqref{It:2} as well as the previous discussion about exponentially small contributions)
\begin{equation}\label{It:4}
	\int\limits_{\Sigma_Q}Q_-(w)\big(G_Q(w)-I\big)dw = \int\limits_{C_t}\big(G_Q(w)-I\big)dw +\int\limits_{C_b}\big(G_Q(w)-I\big)dw +O\left(N^{-2}\right).
\end{equation}


\section{Asymptotics of $h_{N,t}$ - proof of theorem \ref{theo1}}\label{proof_theo1}

We go back to \eqref{FIKeq1}
\begin{equation*}
	h_{N,t} = -2\pi i\left(Y_1^{(N)}\right)_{12}
\end{equation*}
and recall that
\begin{equation}\label{pr1eq1}
	Y_1^{(N)} = \lim_{z\rightarrow\infty}\left(z\left(Y^{(N)}(z)z^{-N\sigma_3}-I\right)\right).
\end{equation}
Now recall the sequence of transformations
\begin{equation*}
	Y(z)\equiv Y^{(N)}(z)\mapsto T(z)\mapsto S(z)\mapsto R(z)\mapsto Q(z)
\end{equation*}
and combine it with the expansion
\begin{equation*}
	e^{N(g(z)-\frac{l}{2})\sigma_3}z^{-N\sigma_3} = e^{-\frac{Nl}{2}\sigma_3}\left(I-\frac{N\sigma_3}{z}\int\limits_0^bw\psi(w)dw+O\left(z^{-2}\right)\right),\hspace{0.5cm}z\rightarrow\infty.
\end{equation*}
This gives us for \eqref{pr1eq1}
\begin{equation*}
	Y_1^{(N)} = \lim_{z\rightarrow\infty}\Big(z\big(e^{\frac{Nl}{2}\sigma_3}Q(z)\bigg(I+\frac{B}{Nz}\bigg)M(z)e^{N(g(z)-\frac{l}{2})\sigma_3}z^{-N\sigma_3}-I\big)\Big)
\end{equation*}
and hence (compare \eqref{It:3})
\begin{eqnarray*}
	e^{-\frac{Nl}{2}\sigma_3}Y_1^{(N)}e^{\frac{Nl}{2}\sigma_3} &=& -N\sigma_3\int\limits_0^bw\psi(w)dw +\frac{b}{4}\begin{pmatrix}
	-1 & \frac{ib}{4}\\
	\frac{4}{ib} & 1\\
	\end{pmatrix} +\frac{B}{N}\\
	&&+\frac{i}{2\pi}\int\limits_{\Sigma_Q}Q_-(w)\big(G_Q(w)-I\big)\,dw.
\end{eqnarray*}
In view of \eqref{It:4}, we will now compute the contribution from the circle $C_b$. First from \eqref{pl:9} as $N\rightarrow\infty$ for any $\al>1: 0\leq\tau\leq1-\varepsilon<1$
\begin{equation*}
	\int\limits_{C_b}Q_-(w)\big(G_Q(w)-I\big)dw = \frac{1}{96}\int\limits_{C_b}\begin{pmatrix}
	U_{11}(w) & U_{12}(w)\\
	U_{21}(w) & U_{22}(w)\\
	\end{pmatrix}\frac{dw}{\zeta^{3/2}(w)} +O\left(N^{-2}\right).
\end{equation*}
Now use the local expansions
\begin{eqnarray*}
	U_1^{11}(z) &=& -\frac{5\sqrt{b}}{\sqrt{z-b}}\left(1+\frac{3}{2b}(z-b)+O\left((z-b)^2\right)\right),\\
	U_1^{12}(z) &=&\frac{5ib\sqrt{b}}{4\sqrt{z-b}}\left(1+\frac{91}{10b}(z-b)+O\left((z-b)^2\right)\right),\\
	U_1^{21}(z) &=&\frac{20i}{\sqrt{b}\sqrt{z-b}}\left(1-\frac{1}{2b}(z-b)+O\left((z-b)^2\right)\right),
\end{eqnarray*}
valid as $z\rightarrow b$, and with \eqref{pl:5} compute the relevant line integral via residue theorem. We obtain, as $N\rightarrow\infty$,
\begin{eqnarray*}
	\int\limits_{C_b}Q_-(w)\big(G_Q(w)-I\big)dw&=&-\frac{2\pi i}{96}\frac{b}{Nq(b)}
\begin{pmatrix}
	6(\frac{q'(b)}{q(b)}-\frac{3}{b}) & \frac{ib}{2}\big[-3\frac{q'(b)}{q(b)}+\frac{47}{b}\big]\\
	\frac{8i}{b}\big[-3\frac{q'(b)}{q(b)}-\frac{1}{b}\big] & -6(\frac{q'(b)}{q(b)}-\frac{3}{b})\\
	\end{pmatrix}\\
	&&+O\left(N^{-2}\right)
\end{eqnarray*}
which is uniform with respect to the parameter $0\leq\tau\leq1-\varepsilon<1$. For the remaining line integral along the circle boundary $C_t$ recall \eqref{pr:11} and \eqref{R:6} and deduce, as $N\rightarrow\infty$
\begin{equation}\label{pr1eq2}
	\int\limits_{C_t}Q_-(w)\big(G_Q(w)-I\big)dw = \int\limits_{C_t}\Big(G_R(w)-I-\frac{B}{Nw}\Big)\Big(I-\frac{B}{Nw}\Big)dw +O\left(N^{-2}\right)
\end{equation}
which is again uniform with respect to the parameter $0\leq\tau\leq1-\varepsilon<1$. Now from \eqref{pr:11}, as $z\rightarrow 0$,
\begin{eqnarray*}
	W_1^{11}(z)&=&\frac{3\sqrt{b}}{\sqrt{z}}\left(1-\frac{3z}{2b}+O\left(z^2\right)\right),\hspace{0.5cm}
	W_1^{12}(z)=\frac{b}{4}\frac{3i\sqrt{b}}{\sqrt{z}}\left(1+\frac{35z}{6b}+O\left(z^2\right)\right)\\
	W_1^{21}(z)&=&\frac{4}{b}\frac{3i\sqrt{b}}{\sqrt{z}}\left(1+\frac{z}{2b}+O\left(z^2\right)\right),
\end{eqnarray*}
hence
\begin{eqnarray*}
	\frac{W_1(z)}{32\sqrt{\zeta(z)}}-\frac{B}{Nz} &=& -\frac{1}{3N}\bigg(\frac{q'(0)}{q(0)}-\frac{1}{2b}\bigg)B+\frac{1}{2Nb}\frac{3}{16q(0)}\bigg(\frac{b}{4}\bigg)^{\sigma_3/2}\begin{pmatrix}
	-3 & \frac{35i}{3}\\
	i & 3\\
	\end{pmatrix}\bigg(\frac{b}{4}\bigg)^{-\sigma_3/2}\\
	&&+O\left(zN^{-1}\right),\hspace{0.5cm}z\rightarrow 0.
\end{eqnarray*}
Back to \eqref{pr1eq2}, as $N\rightarrow\infty$ therefore
\begin{equation*}
	\int\limits_{C_t}Q_-(w)\big(G_Q(w)-I\big)dw= O\left(N^{-2}\right),
\end{equation*}
which is uniform with respect to the parameter $0\leq\tau\leq1-\varepsilon<1$. At this point we summarize our computations
\begin{eqnarray*}
	e^{-\frac{Nl}{2}\sigma_3}Y_1^{(N)}e^{\frac{Nl}{2}\sigma_3} &=& -N\sigma_3\int\limits_0^bw\psi(w)dw +\frac{b}{4}\begin{pmatrix}
	-1 & \frac{ib}{4}\\
	\frac{4}{ib} & 1\\
	\end{pmatrix}+\frac{B}{N}\\
	&&+\frac{1}{96}\frac{b}{Nq(b)}\begin{pmatrix}
	6(\frac{q'(b)}{q(b)}-\frac{3}{b}) & \frac{ib}{2}\big[-3\frac{q'(b)}{q(b)}+\frac{47}{b}\big]\\
	\frac{8i}{b}\big[-3\frac{q'(b)}{q(b)}-\frac{1}{b}\big] & -6(\frac{q'(b)}{q(b)}-\frac{3}{b})\\
	\end{pmatrix}+O\left(N^{-2}\right),
\end{eqnarray*}
which implies, as $N\rightarrow\infty$, 
\begin{equation*}
	\left(Y_1^{(N)}\right)_{12} = ie^{Nl}\bigg(\frac{b}{4}\bigg)^2\left[1+\frac{v}{N}+O\left(N^{-2}\right)\right],\hspace{0.5cm}0\leq\tau\leq1-\varepsilon<1,
\end{equation*}
with (compare \eqref{the14})
\begin{equation*}
	v = \frac{3}{4bq(0)}-\frac{q'(b)}{4q^2(b)}+\frac{47}{12bq(b)}.
\end{equation*}
All we need to do now is recall \eqref{FIKeq1}, the connection formula $h_N=2t^{2N+2}h_{N,t}$ and combine it with Stirling's approximation
\begin{equation*}
	N! = \left(\frac{N}{e}\right)^N\sqrt{2\pi N}\left(1+\frac{1}{12N}+O\left(N^{-2}\right)\right),\hspace{0.5cm}N\rightarrow\infty.
\end{equation*}
This gives, as $N\rightarrow\infty$,
\begin{equation}\label{pr1eq3}
	\frac{h_N}{(N!)^2} = \frac{N}{8}\tau^{2N+2}b^2\exp\left[N(l+2)+\frac{v}{N}-\frac{1}{6N}+\varepsilon_N(\tau)\right]
\end{equation}
which is uniform with respect to the parameter $0\leq\tau\leq1-\varepsilon<1$, thus proving Theorem \ref{theo1}.\smallskip

\begin{rem} At this point it is useful to compare the latter expansion to the estimation \eqref{BLres1} derived in \cite{BL2}. We obtain from the connection $h_N^o=(\al-1)^{2N+1}h_N$ and \eqref{pr1eq3}, that, as $N\rightarrow\infty$,
\begin{equation}\label{pr1eq4}
	\ln\left[\frac{h_N^o}{(N!)^2}\right] = (2N+1)\ln(1-\tau)+N(l+2)+\ln\left(\frac{t}{8}\right)+2\ln b+\frac{v}{N}-\frac{1}{6N}+\varepsilon_N(\tau)
\end{equation}
uniformly with respect to $0\leq\tau\leq1-\varepsilon<1$. Also, as a consequence of the Riemann-Hilbert analysis presented in the last subsections, the estimation
\begin{equation*}
	\left|\varepsilon_N\right|\leq \frac{c}{(N+1)^2},\hspace{0.5cm}c>0
\end{equation*}
on the error term $\varepsilon_N(\tau)$, can in fact be extended to a full asymptotic series in reciprocal integer powers of $N$ which is also uniform with respect to the parameter $0\leq\tau\leq1-\varepsilon<1$. Now choose $\al$ from any compact subset of the set \eqref{excset1} and let $N\rightarrow\infty$, i.e. $t\rightarrow\infty$. In this limit, Proposition \ref{prop2} implies with \eqref{prop1eq2},
\begin{equation*}
	b = \frac{4}{1-\tau}\left(1-\frac{1}{2N}+\frac{\zeta(3/2)}{8\sqrt{\pi(r-1)}N^{3/2}}+O\left(N^{-5/2}\right)\right),\hspace{0.5cm} N\rightarrow\infty,\ \ r = \frac{\al+1}{\al-1}
\end{equation*}
which extends to a full asymptotic series in reciprocal half-integer powers of $N$, the error terms being uniform on any compact subset of the set \eqref{excset1}. Also via \eqref{lageq5} and \eqref{prop4eq2}, as $N\rightarrow\infty$,
\begin{equation*}
	l = 4(1-\ln 2)-\frac{3b}{2}(1-\tau)+2\ln b-\frac{2}{N}(1-\ln 2)-\frac{\ln(2bN\tau)}{N}+O\left(N^{-5/2}\right)
\end{equation*}
and which can also be extended to a full asymptotic series in reciprocal half-integer powers of $N$. Combining the last two expansions,
\begin{equation*}
	l = -2-4\ln 2+2\ln\left(\frac{4}{1-\tau}\right)-\frac{\ln(2N)}{N}+\frac{\ln(\al-1)}{N}-\frac{\zeta(3/2)}{2\sqrt{\pi(r-1)}N^{3/2}}+\frac{1}{4N^2}+O\left(N^{-5/2}\right).
\end{equation*}
and since from \eqref{denseq3}, as $N\rightarrow\infty$
\begin{equation*}
	v = \frac{7}{6}+O\left(N^{-1/2}\right),
\end{equation*}
we can go back to \eqref{pr1eq4} and derive
\begin{equation}\label{pr1eq5}
	\ln\left[\frac{h_N^o}{(N!)^2}\right] = -\frac{\zeta(3/2)}{2\sqrt{\pi(r-1)}N^{1/2}}+\frac{1}{4N} +O\left(N^{-3/2}\right),\hspace{0.5cm}N\rightarrow\infty
\end{equation}
which is uniform on any compact subset of the set \eqref{excset1}. The last estimation agrees with \eqref{BLres1} and as we have seen, extends to a full asymptotic series in reciprocal half-integer powers of $N$.
\end{rem}
As a first step in the computation of the $N$ independent leading term $C$ in the large $N$ expansion \eqref{the20} of $Z_N$, we use the Toda equation.


\section{Toda equation and the structure of the constant factor}\label{toda_eq}

We use the Toda equation as written in \eqref{zj5},
\begin{equation*}
	\left(\ln\tau_N\right)'' = \frac{h_N}{h_{N-1}} = \frac{h_N^o}{(\al-1)^2h_{N-1}^o},\hspace{0.5cm}(') = \frac{d}{d\alpha}.
\end{equation*}
From \eqref{pr1eq5} and our discussion thereafter, as $N\rightarrow\infty$,
\begin{equation*}
	\ln\left[\frac{h_N^o}{(N!)^2}\right] = -\frac{\zeta(3/2)}{2\sqrt{\pi(r-1)}N^{1/2}}+\frac{1}{4N} +\frac{c_1(\alpha)}{N^{3/2}}+\frac{c_2(\al)}{N^2}+O\left(N^{-5/2}\right)
\end{equation*}
with some constants $c_i(\al)$ whose precise form is not important for us. Hence
\begin{equation*}
	\ln\left[\frac{h_N^o}{N^2h_{N-1}^o}\right] = \frac{\zeta(3/2)}{4\sqrt{\pi(r-1)}N^{3/2}}-\frac{1}{4N^2}+O\left(N^{-5/2}\right)
\end{equation*}
and after exponentiating the latter expansion, as $N\rightarrow\infty$
\begin{eqnarray*}
	\left(\ln\tau_N\right)'' &=&\frac{N^2}{(\al-1)^2}\left(1+\frac{\zeta(3/2)}{4\sqrt{\pi(r-1)}N^{3/2}}-\frac{1}{4N^2}+O\left(N^{-5/2}\right)\right)\\
	&=& - N^2\left(\ln(\al-1)\right)''-\sqrt{N}\,\frac{\zeta(3/2)}{\sqrt{2\pi}}\left(\sqrt{\al-1}\,\right)''+\frac{1}{4}\left(\ln(\al-1)\right)''+O\left(N^{-1/2}\right),
\end{eqnarray*}
where the error term is uniform on any compact subset of the set \eqref{excset1}. Back to \eqref{zj4}, we have therefore shown that
\begin{equation}\label{todaeq1}
	\left(\ln Z_N\right)'' = N^2\left(\ln\left(\frac{\al+1}{2}\right)\right)''-\sqrt{N}\,\frac{\zeta(3/2)}{\sqrt{2\pi}}\left(\sqrt{\al-1}\,\right)''
	+\frac{1}{4}\left(\ln(\al-1)\right)''+O\left(N^{-1/2}\right).
\end{equation}
On the other hand from \eqref{zj4} combined with \eqref{pr1eq5}, 
\begin{equation*}
	\ln Z_N = \ln C+N^2\ln\left(\frac{\al+1}{2}\right)-\sqrt{N}\,\frac{\zeta(3/2)}{\sqrt{2\pi}}\sqrt{\al-1}+\frac{1}{4}\ln N+O\left(N^{-1/2}\right),
\end{equation*}
where $C>0$ depends in general on $\al$, but not on $N$. Thus, comparing the latter with \eqref{todaeq1}, we conclude
\begin{equation*}
	\ln C = \frac{1}{4}\left(\ln(\al-1)\right)'' +O\left(N^{-1/2}\right).
\end{equation*}
Integrating this expansion, we get
\begin{equation}\label{todaeq2}
	\ln C = \frac{1}{4}\ln(\al-1)+d(N)\al+c(N)+O\left(N^{-1/2}\right)
\end{equation}
with some numbers $d(N)$ and $c(N)$ which are independent of $\al$. Now choose any distinct $\al_1,\al_2$ from \eqref{excset1} and derive
\begin{equation*}
	\ln C(\al_1)-\ln C(\al_2)=\frac{1}{4}\ln(\al_1-1)-\frac{1}{4}\ln(\al_2-1)+d_1(N)(\al_1-\al_2)+O\left(N^{-1/2}\right),
\end{equation*}
which shows that the limit
\begin{equation*}
	\lim_{N\rightarrow\infty}d(N) = d
\end{equation*}
exists and therefore also the limit
\begin{equation*}
	\lim_{N\rightarrow\infty}c(N) = c.
\end{equation*}
Taking the limit $N\rightarrow\infty$ in \eqref{todaeq2}, we obtain
\begin{equation*}
	\ln C = \frac{1}{4}\ln(\al-1)+d\al+c,
\end{equation*}
and summarize (see \eqref{the21})
\begin{prop}\label{prop5} The constant factor $C$ in asymptotic formula \eqref{the20} has the form
\begin{equation}\label{todaeq3}
	C =  (\al-1)^{1/4}e^{d\al +c}.
\end{equation}
\end{prop}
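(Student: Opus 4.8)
The plan is to convert the large-$N$ asymptotics of the normalizing constants into an ordinary differential equation for $\ln C$ in the variable $\al$ and then integrate it. For each fixed $N$ the Hankel determinant $\tau_N=\prod_{k=0}^{N-1}h_k$ is a real-analytic, nonvanishing function of $\al\in(1,\infty)$, so $(\ln\tau_N)''$ is a genuine smooth function there, and the Toda equation \eqref{zj5} together with $h_N^o=(\al-1)^{2N+1}h_N$ identifies it as $(\ln\tau_N)''=h_N/h_{N-1}=h_N^o/((\al-1)^2h_{N-1}^o)$. The idea is to compute this right-hand side from \eqref{pr1eq5}, recognize the first three terms of the answer as second $\al$-derivatives of explicit functions, match this against the second derivative of the asymptotics \eqref{the20} of $Z_N$, and read off $(\ln C)''$.

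Concretely, first I would use \eqref{pr1eq5} in its refined form, a full asymptotic series in reciprocal half-integer powers of $N$ with coefficients analytic in $\al$ and remainders uniform on compact subsets of $\{\,\al>1\,\}$, as noted after Theorem \ref{theo1}. Subtracting the expansion at $N-1$ from the one at $N$ and using $(N!)^2=N^2((N-1)!)^2$ gives
\[
\ln\frac{h_N^o}{N^2h_{N-1}^o}=\frac{\zeta(3/2)}{4\sqrt{\pi(r-1)}}\,N^{-3/2}-\frac14\,N^{-2}+O\big(N^{-5/2}\big),\qquad r=\frac{\al+1}{\al-1},
\]
uniformly on compacta. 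Exponentiating and multiplying by $N^2/(\al-1)^2$ produces $(\ln\tau_N)''$; using $\big(\ln(\al-1)\big)''=-(\al-1)^{-2}$, $\big(\sqrt{\al-1}\big)''=-\frac14(\al-1)^{-3/2}$ and $(r-1)^{-1/2}=\sqrt{(\al-1)/2}$ one identifies its three leading terms as $-N^2\big(\ln(\al-1)\big)''$, $-\sqrt N\,\frac{\zeta(3/2)}{\sqrt{2\pi}}\big(\sqrt{\al-1}\big)''$, and $\frac14\big(\ln(\al-1)\big)''$ respectively. Plugging this into $(\ln Z_N)''=N^2\big(\ln\frac{\al^2-1}{2}\big)''+(\ln\tau_N)''$ (which follows from \eqref{zj4}) and using $\big(\ln\frac{\al^2-1}{2}\big)''=\big(\ln(\al-1)\big)''+\big(\ln(\al+1)\big)''$ gives \eqref{todaeq1}.

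On the other hand, Theorem \ref{theorem1} yields $\ln Z_N=\ln C+N^2\ln\frac{\al+1}{2}-\sqrt N\,\frac{\zeta(3/2)}{\sqrt{2\pi}}\sqrt{\al-1}+\frac14\ln N+O(N^{-1/2})$, with the error uniform on compacta and $\ln C=\ln C(\al)$ independent of $N$. Differentiating this twice in $\al$ and comparing with \eqref{todaeq1} leaves $\big(\ln C\big)''=\frac14\big(\ln(\al-1)\big)''+O(N^{-1/2})$; since the left-hand side does not depend on $N$, letting $N\to\infty$ gives $\big(\ln C\big)''=\frac14\big(\ln(\al-1)\big)''$ identically on $(1,\infty)$. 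Integrating twice produces constants $d$ and $c$ with $\ln C=\frac14\ln(\al-1)+d\al+c$, i.e. $C=(\al-1)^{1/4}e^{d\al+c}$, which is \eqref{todaeq3}.

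The main obstacle is justifying the two $\al$-differentiations of the expansion \eqref{the20} (equivalently of \eqref{pr1eq5}): one must know that, after differentiating, the remainder is still $O(N^{-1/2})$ uniformly on compact subsets of $\{\,\al>1\,\}$. This is exactly why the Riemann-Hilbert construction underlying Theorem \ref{theo1} was carried out with all parametrices depending analytically on $\al$ and with $\al$-uniform error bounds: then $h_{N,t}$, $b$, $l$ and $v$ — hence $\ln[h_N^o/(N!)^2]$ and $\ln Z_N$ — are analytic in $\al$ and admit uniform full asymptotic series, so Cauchy's integral formula on a slightly larger disk converts these into the differentiated expansions. The remaining work, namely matching the numerical constants such as $\zeta(3/2)/\sqrt{2\pi}$ and bookkeeping the error orders through the exponentiation and the multiplication by $N^2/(\al-1)^2$, is routine.
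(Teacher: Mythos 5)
Your derivation of \eqref{todaeq1} --- the Toda equation, the expansion of $\ln[h_N^o/(N^2h_{N-1}^o)]$ obtained by differencing the refined form of \eqref{pr1eq5}, exponentiation, and the recognition of the three leading terms as second $\al$-derivatives of explicit functions --- coincides with the paper's argument; note that this part involves no differentiation of any asymptotic series, since the Toda equation gives $(\ln\tau_N)''=h_N/h_{N-1}$ exactly. The gap is in how you then extract $(\ln C)''$. You differentiate the expansion $\ln Z_N=\ln C+N^2\ln\frac{\al+1}{2}-\sqrt N\,\frac{\zeta(3/2)}{\sqrt{2\pi}}\sqrt{\al-1}+\frac14\ln N+O(N^{-1/2})$ twice in $\al$ and assert that the remainder stays $O(N^{-1/2})$. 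You correctly identify this as the main obstacle, but the justification you offer is not supported by what has actually been proved: the uniformity of the errors in \eqref{the20}, \eqref{pr1eq5} and Theorem \ref{theo1} is established only for \emph{real} $\al$ on compact subsets of $(1,\infty)$, and Theorem \ref{theo1} controls only the \emph{first} $\tau$-derivative of $\varepsilon_N$. Invoking Cauchy's formula ``on a slightly larger disk'' would require redoing the whole Riemann--Hilbert analysis (equilibrium measure, variational inequalities, contour estimates) for complex $\al$ with bounds uniform on a complex neighborhood, which neither you nor the paper carries out. Moreover, differentiating the expansion presupposes that $\ln C(\al)$ is twice differentiable, which is not known at that stage.

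The paper avoids both difficulties by going in the opposite direction: it \emph{integrates} \eqref{todaeq1} twice rather than differentiating \eqref{the20}. This produces $\ln C=\frac14\ln(\al-1)+d(N)\al+c(N)+O(N^{-1/2})$ with integration ``constants'' that a priori depend on $N$; evaluating at two distinct points $\al_1,\al_2$ shows that $d(N)$, and hence $c(N)$, converge as $N\to\infty$, and passing to the limit yields \eqref{todaeq3}. Only the undifferentiated expansion of $\ln Z_N$ is ever used, and no regularity of $\ln C$ is assumed. If you replace your differentiation step by this integration step, your argument closes with the tools already available; as written, the differentiation step is a genuine gap.
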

In light of the last proposition we now have to compute the remaining two universal constants $c$ and $d$. 
This will be done by studying two regimes of the double scaling parameter $t=\frac{N}{\al}$. 
First, we are interested in the behavior of the partition function $Z_N$ as $N\rightarrow\infty$ and $t$ remains bounded.


\section{The double scaling limit of the partition function}\label{dsl}

We start with the observation that 
\begin{equation*}
	\lim_{\substack{\al\rightarrow\infty\\ N\leq N_0}}w_t(x) = xe^{-Nx},
\end{equation*}
which in particular implies
\begin{equation*}
	w_t(x)\sim xe^{-Nx}\equiv w_0(x),\hspace{0.5cm}t\rightarrow 0.
\end{equation*}
The limiting orthogonal polynomials are the normalized (and rescaled) Laguerre polynomials (cf. \cite{BE})
\begin{equation*}
	p_{n,0}(x) = \lim_{t\rightarrow 0}p_{n,t}(x) = \frac{(-1)^nn!}{N^n}L_n^{(1)}(Nx)
\end{equation*}
for which
\begin{equation*}
	h_{n,t} \sim \int\limits_0^{\infty}\left(p_{n,0}(x)\right)^2w_0(x)\,dx = \frac{(n!)^2}{N^{2n+2}}\int\limits_0^{\infty}\left(L_n^{(1)}(x)\right)^2xe^{-x}\,dx = \frac{(n!)^2(n+1)}{N^{2n+2}}\equiv h_{n,0},\hspace{0.5cm}t\rightarrow 0.
\end{equation*}
Let us introduce the abbreviation
\begin{equation}\label{dbeq1}
	\sigma_{N,t} = N^{N(N+1)}\prod_{k=0}^{N-1}\frac{h_{k,t}}{(k!)^2},
\end{equation}
which satisfies
\begin{equation}\label{dbeq2}
	\lim_{\substack{\al\rightarrow\infty\\ N\leq N_0}} \sigma_{N,t} = N^{N(N+1)}\prod_{k=0}^{N-1}\frac{h_{k,0}}{(k!)^2} = N!
\end{equation}
and which relates to the partition function $Z_N$ via the identity
\begin{equation}\label{dbeq3}
	Z_N = \left(\frac{\al^2-1}{2\al}\right)^{N^2}\left(\frac{2}{\al}\right)^N\sigma_{N,t}.
\end{equation}
We will now evaluate \eqref{dbeq1} by using \eqref{pr1eq3}, in other words
\begin{eqnarray*}
	\sigma_{N,t} &=& N^{N(N+1)}h_{0,t}\prod_{k=1}^{N-1}\frac{h_{k,t}}{(k!)^2} = N^2h_{0,t}\left(\frac{\al}{2}\right)^{N-1}\al^{N^2-1}\prod_{k=1}^{N-1}\frac{h_k}{(k!)^2}\\
	&=&N^2h_{0,t}(N-1)!\,\exp\left[\sum_{k=1}^{N-1}\left(2\ln\left(\frac{b}{4}\right)+k(l+2)+\frac{v-\frac{7}{6}}{k}+\frac{1}{k}+\varepsilon_k(\tau)\right)\right]\\
	&=&\hat{C}_0\,N^2h_{0,t}\,N!\exp\left[\sum_{k=1}^{N-1}\left(2\ln\left(\frac{b}{4}\right)+k(l+2)+\frac{v-\frac{7}{6}}{k}\right)\right]\left(1+O\left(N^{-1}\right)\right),
\end{eqnarray*}
valid as $N\rightarrow\infty$, where the error term is uniform with respect to the parameter $0\leq\tau\leq1-\varepsilon<1$ and with a universal, i.e. $N$ and $\tau$ independent constant $\hat{C}_0>0$. Now use \eqref{zj0} and derive
\begin{equation*}
	h_{0,t} = \int\limits_0^{\infty}\left(p_{0,t}(x)\right)^2w_t(x)\,dx = \hat{c}_0\int\limits_0^{\infty}w_t(x)\,dx = \frac{\hat{c}_0}{N^2(1-\tau^2)}
\end{equation*}
with another universal constant $\hat{c}_0>0$. Back to the previous expansion for $\sigma_{N,t}$, as $N\rightarrow\infty$
\begin{equation}\label{dbeq4}
	\sigma_{N,t} = \frac{C_0\,N!}{1-\tau^2}\,\exp\left[\sum_{k=1}^{N-1}\left(2\ln\left(\frac{b}{4}\right)+k(l+2)+\frac{v-\frac{7}{6}}{k}\right)\right]\left(1+O\left(N^{-1}\right)\right)
\end{equation}
which is uniform with respect to the parameter $0\leq\tau\leq1-\varepsilon<1$. In order to determine the constant $C_0$, we will now evaluate the sums in \eqref{dbeq4} in the double scaling limit $N,\al\rightarrow\infty$ with $0\leq t\leq t_0$ and then compare the result with \eqref{dbeq2}.\smallskip

For the sums, use Euler's summation formula
\begin{equation}\label{euler}
	\sum_{k=1}^{N-1}g(k) = \int\limits_1^{N-1}g(x)\,dx +\int\limits_1^{N-1}P_1(x)g'(x)\,dx+\frac{1}{2}\left(g(N-1)+g(1)\right),
\end{equation}
which holds for a differentiable function $g:\mathbb{R}\rightarrow\mathbb{R}$ with the Bernoulli polynomial $P_1(x)=x-\left\lfloor x\right\rfloor -\frac{1}{2}$. First via \eqref{prop1eq1}, as $N,\al\rightarrow\infty$,
\begin{equation*}
	\sum_{k=1}^{N-1}2\ln\left(\frac{b}{4}\right) = 2t+\int\limits_0^tI(8x)\frac{dx}{x}+O\left(N^{-1}\right),
\end{equation*}
where the error term is uniform on any finite interval $0\leq t\leq t_0$. Secondly via \eqref{lageq5}
\begin{eqnarray*}
	\sum_{k=1}^{N-1}k(l+2) &=&N\left[-t+\frac{2}{t}\int\limits_0^t\big(J(8x)-(1-\ln 2)\big)\,dx-\frac{2}{t}\int\limits_0^tI(8x)\,dx+\frac{1}{t}\int\limits_0^t\ln S(4x)\,dx\right]\\
	&&+t-\frac{3}{2}t^2-2tI(8t)-\frac{1}{2}I(8t)-\frac{1}{2}I^2(8t)-\frac{3}{2}I(8t)-\frac{1}{4}\int\limits_0^tI^2(8x)\frac{dx}{x}\\
	&&+2\int\limits_0^tJ'(8x)\left(4+\frac{2}{x}I(8x)\right)\,dx+J(8t)-(1-\ln 2)+\frac{1}{2}\ln S(4t)+O\left(N^{-1}\right)
\end{eqnarray*}
and from \eqref{denseq3},
\begin{equation*}
	\sum_{k=1}^{N-1}\frac{v-\frac{7}{6}}{k} = O\left(N^{-1}\right)
\end{equation*}
as $N,\al\rightarrow\infty$, where the error terms are uniform on any finite interval $0\leq t\leq t_0$. We go back to \eqref{dbeq4}, as $N,\al\rightarrow\infty$,
\begin{equation}\label{dbeq5}
	\sigma_{N,t} = C_0\,N!\,e^{N\Phi(t)+\Psi(t)}\left(1+O\left(N^{-1}\right)\right),\hspace{0.5cm}0\leq t\leq t_0
\end{equation}
with
\begin{equation}\label{dbeq6}
	\Phi(t) = -t+\frac{2}{t}\int\limits_0^t\big(J(8x)-(1-\ln 2)\big)\,dx-\frac{2}{t}\int\limits_0^tI(8x)\,dx+\frac{1}{t}\int\limits_0^t\ln S(4x)\,dx
\end{equation}
and
\begin{eqnarray}\label{dbeq7}
	\Psi(t) &=& 3t-\frac{3}{2}t^2-2tI(8t)-\frac{1}{2}I^2(8t)-\frac{3}{2}I(8t)-\frac{1}{2}\ln S(4t)-\big(J(8t)-(1-\ln 2)\big)\\
	&&-\frac{1}{4}\int\limits_0^tI^2(8x)\frac{dx}{x}+\int\limits_0^tI(8x)\frac{dx}{x}+2\int\limits_0^t\left(J'(8x)
	+\frac{S'(4x)}{2S(4x)}\right)\left(1+\frac{I(8x)}{2x}\right)4x\,dx.\nonumber
\end{eqnarray}
The small $t$-behavior of $\Phi(t)$ and $\Psi(t)$ can be determined from \eqref{prop1eq1} and \eqref{prop4eq1}, we have, as $t\rightarrow 0$,
\begin{equation*}
	\Phi(t) = \frac{t^2}{3}+O\left(t^3\right),\hspace{0.5cm} \Psi(t) = 5t-\frac{9}{2}t^2+O\left(t^3\right).
\end{equation*}
Back to \eqref{dbeq2}, we have on one hand
\begin{equation}\label{dbeq8}
	\sigma_{N,t} \sim N!\hspace{0.5cm}\textnormal{as}\ \ t\rightarrow 0.
\end{equation}
On the other hand, if we let $N,\al\rightarrow\infty$ such that $Nt^2\rightarrow 0$ (i.e. in particular $t\rightarrow 0$), then \eqref{dbeq5} and the behavior of $\Phi(t)$ and $\Psi(t)$ at the origin imply, that
\begin{equation}\label{dbeq9}
	\sigma_{N,t} \sim C_0\,N!\hspace{0.5cm}\textnormal{as}\ \ N,\al\rightarrow\infty:\ Nt^2\rightarrow 0.
\end{equation}
Comparing \eqref{dbeq8} with \eqref{dbeq9}, this implies
\begin{equation*}
	C_0 = 1,
\end{equation*}
and we have therefore shown
\begin{theo}\label{prop6} In the double scaling limit $N,\al\rightarrow\infty$
\begin{equation}\label{prop6eq1}
	Z_N=N!\,\left(\frac{\al^2-1}{2\al}\right)^{N^2}\left(\frac{2}{\al}\right)^Ne^{N\Phi(t)+\Psi(t)}\left(1+O\left(N^{-1}\right)\right),
\end{equation}
where $\Phi(t)$ and $\Psi(t)$ are given explicitly in \eqref{dbeq6}, \eqref{dbeq7} and the error term is uniform on any finite interval $0\leq t\leq t_0$.
\end{theo}

The explicit evaluation of the numerical constant $C_0$ is crucial for our further strategy. In order to compute the constants $c$ and $d$, we will go back to \eqref{dbeq4}, evaluate now the sums in the limit $t\rightarrow\infty$ and then compare the result with \eqref{the20} and \eqref{todaeq3}.


\section{Proof of theorem \ref{theorem2}}\label{proof_theo2}
The computations in the last section lead to the following expansion for $\sigma_{N,t}$, as $N\rightarrow\infty$
\begin{equation*}
	\sigma_{N,t} = \frac{N!}{1-\tau^2}\,\exp\left[\sum_{k=1}^{N-1}\left(2\ln\left(\frac{b}{4}\right)+k(l+2)+\frac{v-\frac{7}{6}}{k}\right)\right]\left(1+O\left(N^{-1}\right)\right)
\end{equation*}
where the error term is uniform with respect to the parameter $0\leq\tau\leq1-\varepsilon<1$. In order to derive \eqref{the20} including the constant term, we now evaluate the sums in the last estimation in the limit $N\rightarrow\infty$ as $\al>1$ and $t>t_0$ (i.e. $t\rightarrow\infty$). This time we use the Euler-Maclaurin type summation formula
\begin{equation}\label{euler2}
	\sum_{k=1}^{N-1}g(k\tau) = \frac{1}{\tau}\int\limits_0^tg(x)\,dx-\frac{1}{2\tau}\int\limits_0^{\tau}g(x)\,dx-\frac{1}{2\tau}\int\limits_{t-\tau}^tg(x)\,dx+R
\end{equation}
with
\begin{equation*}
	R = -\frac{1}{4\tau}\sum_{k=1}^{N-1}\int\limits_{-\tau}^{\tau}\int\limits_0^xg''(k\tau+u)(x-u)\,dudx = O\left(\tau\int\limits_0^t\left|g''(x)\right|\,dx\right)
\end{equation*}
which holds for a twice differentiable function $g:\mathbb{R}\rightarrow\mathbb{R}$. To derive formula \eqref{euler2}, we write the Taylor formula with an integral form for the remainder
\begin{equation*}
	g(s+x) = g(s)+g'(s)x+\frac{1}{2}\int\limits_0^xg''(s+u)(x-u)\,du,
\end{equation*}
then integrate from $-\tau$ to $\tau$,
\begin{equation*}
	\int\limits_{-\tau}^{\tau}g(s+x)\,dx = 2g(s)\tau+\frac{1}{2}\int\limits_{-\tau}^{\tau}\int\limits_0^xg''(s+u)(x-u)\,dudx,
\end{equation*}
and now sum over $\left\{s=k\tau,\ k=1,\ldots,N-1\right\}$,
\begin{equation*}
	\sum_{k=1}^{N-1}\int\limits_{-\tau}^{\tau}g(k\tau+x)\,dx = 2\sum_{k=1}^{N-1}g(k\tau)\tau+\frac{1}{2}\sum_{k=1}^{N-1}\int\limits_{-\tau}^{\tau}\int\limits_0^xg''(k\tau+u)(x-u)\,dudx,
\end{equation*}
which implies \eqref{euler2}.\smallskip

From \eqref{prop2eq1} and \eqref{prop1eq2}, as $N\rightarrow\infty$,
\begin{equation}\label{db2eq1}
	\sum_{k=1}^{N-1}2\ln\left(\frac{b}{4}\right) = -2(N-1)\ln(1-\tau)-\ln t+c_1+O\left(\tau\right)+O\left(t^{-1/2}\right)
\end{equation}
where we introduced as abbreviation
\begin{equation*}
	c_1 = \int\limits_0^1I(8x)\frac{dx}{x}+\int\limits_1^{\infty}\big(I(8x)+1\big)\frac{dx}{x}=-3\ln 2+\int\limits_0^1I(x)\frac{dx}{x}+\int\limits_1^{\infty}\left(I(x)+1\right)\frac{dx}{x} = -\ln 2
\end{equation*}
and the error terms are uniform with respect to the parameters $0\leq\tau\leq1-\varepsilon<1$ and $t>t_0$. Next, combining \eqref{prop2eq1} with \eqref{denseq3}, 
\begin{equation*}
	v = \frac{7}{6}+O\left(\frac{\tau}{\sqrt{1+t}}\right),
\end{equation*}
and hence, 
\begin{equation}\label{db2eq2}
	\sum_{k=1}^{N-1}\frac{v-\frac{7}{6}}{k} = O\left(\tau\right) +O\left(\tau\,t^{-1/2}\right).
\end{equation}
The evaluation of the remaining term involving the Lagrange multiplier will be split into several parts. First
\begin{eqnarray}
	\sum_{k=1}^{N-1}k(l+2) &=& N(N-1)+2(1-\ln 2)(N-1)^2+\sum_{k=1}^{N-1}\ln S(tb)+\sum_{k=1}^{N-1}2J(2bt)\nonumber\\
	&&+\sum_{k=1}^{N-1}k\left(2\ln b-\frac{b}{2}(1-\tau)-b\right)\nonumber\\
	&\equiv&N(N-1)+2(1-\ln 2)(N-1)^2+\Sigma_1+\Sigma_2+\Sigma_3\label{db2eq3}.
\end{eqnarray}
For $\Sigma_1$, use the asymptotic formula
\begin{equation*}
	\ln S(x) = x-\ln(2x)+O\left(e^{-2x}\right),\hspace{0.5cm}x\rightarrow+\infty
\end{equation*} 
and derive
\begin{eqnarray*}
	\Sigma_1 &=&\frac{2\tau}{1-\tau}(N-1)^2-N\ln\left(\frac{8t}{1-\tau}\right)+N+\sqrt{N}\frac{\zeta(3/2)}{\sqrt{2\pi(\al-1)}}+\ln t\\
	&&+\al c_2+c_3+O\left(\tau\right)+O\left(t^{-1/2}\right)+O\left(\al\,e^{-8t}\right),
\end{eqnarray*}
valid in the limit $N\rightarrow\infty,\al>1$ with $t\rightarrow\infty$. Here we have
\begin{eqnarray*}
	c_2 &=&\int\limits_0^{\infty}\left(\ln S(4x)-4x+\ln(8x)\right)\,dx = \frac{1}{8}\int\limits_0^{\infty}\ln\left(1-e^{-x}\right)\,dx=-\frac{\pi^2}{48}\\
	c_3 &=&\frac{3}{2}\ln 2+\int\limits_0^{\infty}4x\left(\frac{S'(4x)}{S(4x)}-1+\frac{1}{4x}\right)\left(1+\frac{I(8x)}{2x}\right)\,dx+2\int\limits_0^{\infty}\left(I(8x)+1-\frac{\zeta(3/2)}{4\sqrt{2\pi x}}\right)\,dx\\
	&&-\frac{1}{2}\int\limits_0^1I(8x)\frac{dx}{x}-\frac{1}{2}\int\limits_1^{\infty}\big(I(8x)+1\big)\frac{dx}{x}=2\ln 2+\frac{\pi^2}{48}+\frac{1}{2}\int\limits_0^{\infty}\frac{I(x)}{e^x-1}\,dx,
\end{eqnarray*}
and we simplified the expressions for $c_i$, by recalling the definitions of $S(x)$ and $I(x)$ as well as the integrals
\begin{equation*}
	\int\limits_0^{\infty}\ln\left(1-e^{-x}\right)\,dx=-\frac{\pi^2}{6},\hspace{0.85cm}\int\limits_0^{\infty}\frac{x\,dx}{e^x-1} = \frac{\pi^2}{6}.
\end{equation*}
Next we go back to Proposition \ref{prop4} and derive
\begin{equation*}
	\Sigma_2 = \al c_4+c_5+O\left(\tau\right)+O\left(\al\,t^{-1/2}\right)
\end{equation*}
where
\begin{eqnarray*}
	c_4 &=&2\int\limits_0^{\infty}J(8x)\,dx=\frac{1}{4}\int\limits_0^{\infty}J(x)\,dx = \frac{\pi^2}{48}\\
	c_5 &=&-(1-\ln 2)+16\int\limits_0^{\infty}xJ'(8x)\left(1+\frac{I(8x)}{2x}\right)\,dx = -(1-\ln 2)-\frac{\pi^2}{48}+\int\limits_0^{\infty}J'(x)I(x)\,dx,
\end{eqnarray*}
in the limit $N\rightarrow\infty,\al>1$, with error terms which are uniform with respect to the parameters $0\leq\tau\leq1-\varepsilon<1$ and $t>t_0$. Here we have used the definite integrals
\begin{equation*}
	\int\limits_0^1\left(\sqrt{\frac{x}{1-x}}-\arctan\sqrt{\frac{x}{1-x}}\right)\frac{dx}{x^2}=\frac{\pi}{2},\hspace{0.85cm}\int\limits_0^1\left(\sqrt{\frac{x}{1-x}}
	-\sqrt{x}\right)\frac{dx}{x^2}=2
\end{equation*}
in order to simplify the expressions for $c_i$. Finally with \eqref{prop2eq1} and Proposition \ref{prop1},
\begin{eqnarray*}
	\Sigma_3 &=&N(N-1)\ln\left(\frac{4}{1-\tau}\right)-N(N-1)-\frac{2}{1-\tau}(N-1)^2-\zeta\left(\frac{3}{2}\right)\al\sqrt{\frac{N}{2\pi(\al-1)}}\\
	&&-\frac{1}{4}\ln t+\al c_6+c_7+O\left(\tau\right)+O\left(\al\,t^{-1/2}\right)
\end{eqnarray*}
where
\begin{eqnarray*}
	c_6 &=&-2\int\limits_0^{\infty}\left(I(8x)+1-\frac{\zeta(3/2)}{4\sqrt{2\pi x}}\right)\,dx=0\\
	c_7 &=&1-\int\limits_0^1I^2(8x)\frac{dx}{4x}-2\int\limits_0^{\infty}\left(I'(8x)+\frac{\zeta(3/2)}{4\sqrt{\pi}(8x)^{3/2}}\right)8x\left(1+\frac{I(8x)}{2x}\right)\,dx\\
	&&+\int\limits_1^{\infty}\left(1-I^2(8x)\right)\frac{dx}{4x}-2\int\limits_0^{\infty}\left(I(8x)+1-\frac{\zeta(3/2)}{4\sqrt{2\pi x}}\right)\,dx\\
	&&-\frac{\zeta(3/2)}{2\sqrt{2\pi}}\int\limits_0^{\infty}\left(4I'(8x)-\frac{I(8x)}{2x}\right)\frac{dx}{\sqrt{x}}=\frac{1}{2}-\frac{3}{4}\ln 2-\frac{1}{4}\int\limits_0^1I^2(x)\frac{dx}{x}+\frac{1}{4}\int\limits_1^{\infty}\left(1-I^2(x)\right)\frac{dx}{x}.
\end{eqnarray*}
Now back to \eqref{db2eq3}, as $N\rightarrow\infty$,
\begin{eqnarray}
	\sum_{k=1}^{N-1}k(l+2)&=&-N^2\ln(1-\tau)+2N\ln(1-\tau)-N\ln(2t)+N-\sqrt{N}\zeta\left(\frac{3}{2}\right)\sqrt{\frac{\al-1}{2\pi}}\nonumber\\
	&&+\frac{3}{4}\ln t+\al(c_2+c_4+c_6)+c_3+c_5+c_7-2\ln 2\nonumber\\
	&&+O\left(\tau\right)+O\left(\al\,t^{-1/2}\right)\label{db2eq4}
\end{eqnarray}
which is uniform with respect to the parameters $0\leq\tau\leq1-\varepsilon<1$ and $t>t_0$. In order to obtain the desired expansion for $\sigma_{N,t}$ we combine estimations \eqref{db2eq1},\eqref{db2eq2} and \eqref{db2eq4}, as $N\rightarrow\infty$
\begin{equation*}
	\sigma_{N,t} = \left(\frac{\al}{\al-1}\right)^{N^2}\left(\frac{\al}{2}\right)^N\hat{C}_{N,t}\,G^{\sqrt{N}}N^{1/4}\left(1+O\left(N^{-1}\right)\right),\hspace{0.25cm} G = \exp\left[-\zeta\left(\frac{3}{2}\right)\sqrt{\frac{\al-1}{2\pi}}\,\right]
\end{equation*}
with
\begin{equation*}
	\hat{C}_{N,t} = (\al-1)^{1/4}\exp\left[d_0\al +c_0+O\left(\tau\right)+O\left(\al\,t^{-1/2}\right)\right]
\end{equation*}
where all error terms are uniform with respect to the parameters $0\leq\tau\leq1-\varepsilon<1$ and $t>t_0$. The factors $d_0$ 
and $c_0$ are given as
\begin{equation}\label{db2eq5}
	d_0= c_2+c_4+c_6 =0
\end{equation}
and
\begin{eqnarray}
	c_0 &=&-\frac{1}{2}-\frac{1}{4}\ln 2+\frac{1}{2}\ln\pi\nonumber\\
	&&+\frac{1}{2}\int\limits_0^{\infty}\frac{I(x)}{e^x-1}\,dx+\int\limits_0^{\infty}J'(x)I(x)\,dx-\frac{1}{4}\int\limits_0^1I^2(x)\frac{dx}{x}
	+\frac{1}{4}\int\limits_1^{\infty}\left(1-I^2(x)\right)\frac{dx}{x}.\label{db2eq6}
\end{eqnarray}
We have therefore shown

\begin{prop}\label{prop7} In the limit $N\rightarrow\infty$,
\begin{equation}\label{prop7eq1}
	Z_N = \hat{C}_{N,t}F^{N^2}G^{\sqrt{N}}N^{1/4}\left(1+O\left(N^{-1}\right)\right),\hspace{0.5cm} F=\frac{\al+1}{2},\ \ G=\exp\left[-\zeta\left(\frac{3}{2}\right)\sqrt{\frac{\al-1}{2\pi}}\right]
\end{equation}
with
\begin{equation*}
	\hat{C}_{N,t} = (\al-1)^{1/4}\exp\left[c_0+O\left(\tau\right)+O\left(\al\,t^{-1/2}\right)\right],
\end{equation*}
where $c_0$ is given explicitly in \eqref{db2eq6} and the error terms are uniform with respect 
to the parameters $0\leq\tau\leq1-\varepsilon<1$ and $t>t_0$.
\end{prop}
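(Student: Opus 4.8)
The plan is to obtain \eqref{prop7eq1} by inserting the large $N$ asymptotics of $\sigma_{N,t}$ into the exact identity \eqref{dbeq3},
\[
	Z_N = \left(\frac{\al^2-1}{2\al}\right)^{N^2}\left(\frac{2}{\al}\right)^N\sigma_{N,t},
\]
and simplifying. Concretely, once one has the expansion $\sigma_{N,t}=\left(\frac{\al}{\al-1}\right)^{N^2}\left(\frac{\al}{2}\right)^N\hat{C}_{N,t}\,G^{\sqrt N}N^{1/4}\bigl(1+O(N^{-1})\bigr)$ with $\hat{C}_{N,t}=(\al-1)^{1/4}\exp[d_0\al+c_0+O(\tau)+O(\al t^{-1/2})]$, the statement follows by a one line computation: the algebraic identities $\left(\frac{\al^2-1}{2\al}\right)^{N^2}\left(\frac{\al}{\al-1}\right)^{N^2}=\left(\frac{\al+1}{2}\right)^{N^2}=F^{N^2}$ and $\left(\frac{2}{\al}\right)^N\left(\frac{\al}{2}\right)^N=1$ absorb the explicit prefactors, while the cancellation $d_0=c_2+c_4+c_6=0$ recorded in \eqref{db2eq5} removes the linear in $\al$ term from $\ln\hat{C}_{N,t}$. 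Hence the substance of the proof is the asymptotic evaluation of $\sigma_{N,t}$ in the regime $N\to\infty$, $\al>1$, $t=N\tau\to\infty$.

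For that evaluation I would start from the representation $\sigma_{N,t}=\frac{N!}{1-\tau^2}\exp\!\bigl[\sum_{k=1}^{N-1}\bigl(2\ln\frac b4+k(l+2)+\frac{v-7/6}{k}\bigr)\bigr]\bigl(1+O(N^{-1})\bigr)$ established in Section \ref{dsl}, and evaluate the three sums with the Euler--Maclaurin type formula \eqref{euler2}. The first sum is treated with the fixed point expansion \eqref{prop2eq1} for $b$ together with the large argument asymptotics \eqref{prop1eq2} of $I$, yielding \eqref{db2eq1}; the sum $\sum(v-7/6)/k$ is seen to be $O(\tau)+O(\tau t^{-1/2})$ via the density estimate \eqref{denseq3}, which is \eqref{db2eq2}; and $\sum k(l+2)$ is split, following the explicit formula \eqref{lageq5} for the Lagrange multiplier, into a $\ln S$ part $\Sigma_1$, a $J$ part $\Sigma_2$, and a part $\Sigma_3$ polynomial in $b$, each again handled by \eqref{euler2} together with the large argument behaviour of $\ln S$, of $J$ from \eqref{prop4eq2}, and of $b$ from \eqref{prop2eq1}. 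Collecting the $N^2$, $N$, $\sqrt N$, $\ln N$, and $O(1)$ contributions reconstructs the factors $F^{N^2}$, $G^{\sqrt N}$, and $N^{1/4}$ (the last after Stirling's approximation to $N!$) and produces the constants $d_0\al+c_0$ as in \eqref{db2eq4}; the closed forms $c_2=-\pi^2/48$, $c_4=\pi^2/48$, $c_6=0$ and the expression \eqref{db2eq6} for $c_0$ then follow from a handful of elementary definite integrals.

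The hard part is the bookkeeping of divergent pieces. Several of the integrals that arise naturally, such as $\int_1^\infty I(x)\,\frac{dx}{x}$ or $\int_0^1 I^2(x)\,\frac{dx}{x}$, diverge in isolation and must be regrouped into convergent combinations, exploiting $I(x)+1=O(x^{-1/2})$ as $x\to\infty$ and $I(x)=O(x)$ as $x\to 0$ from Proposition \ref{prop1}, and likewise for $J$ from Proposition \ref{prop4}; in parallel, the Euler--Maclaurin remainder $R$, the boundary correction terms in \eqref{euler2}, and the exponentially small tails of $\ln S$ must all be shown to contribute only at the orders $O(\tau)$, $O(\al t^{-1/2})$, $O(\al e^{-8t})$, uniformly for $0\le\tau\le 1-\varepsilon<1$ and $t>t_0$. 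One also has to verify by direct computation the cancellation $d_0=0$; it is exactly this cancellation that suppresses the spurious linear growth in $\al$ of $\ln\hat{C}_{N,t}$ and makes the result compatible with the structural form $C=(\al-1)^{1/4}e^{d\al+c}$ from Proposition \ref{prop5}. With all of this in hand, substituting the expansion of $\sigma_{N,t}$ into \eqref{dbeq3} yields \eqref{prop7eq1}.
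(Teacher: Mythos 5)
Your proposal is correct and follows essentially the same route as the paper: substitute the large $N$ expansion of $\sigma_{N,t}$ from \eqref{dbeq4} (with $C_0=1$) into the identity \eqref{dbeq3}, evaluate the three sums via the Euler--Maclaurin formula \eqref{euler2} with the asymptotics \eqref{prop2eq1}, \eqref{prop1eq2}, \eqref{prop4eq2}, \eqref{denseq3} and the splitting of $\sum k(l+2)$ into $\Sigma_1,\Sigma_2,\Sigma_3$, and use the cancellation $d_0=c_2+c_4+c_6=0$ of \eqref{db2eq5}. This matches the paper's argument, including the regularization of the individually divergent integrals and the role of Stirling's formula in producing the factor $N^{1/4}$.
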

The last proposition allows us to prove Theorem \ref{theorem2}. To this end let us choose $\al$ from a compact subset of the set
\begin{equation*}
	\left\{\al\in\mathbb{R}:\ \al>1\right\}.
\end{equation*}
Then, as $N\rightarrow\infty$,
\begin{equation*}
	Z_N = (\al-1)^{1/4}e^{c_0+O(\tau)}F^{N^2}G^{\sqrt{N}}N^{1/4}\left(1+O\left(N^{-1/2}\right)\right)
\end{equation*}
where the error term is uniform on any compact subset of the set \eqref{excset1}. Comparing the last line with \eqref{todaeq3}, we obtain
\begin{equation*}
	d=0,\qquad c=c_0.
\end{equation*}
In order to derive the stated expression for $c$ in Theorem \ref{theorem2}, we will simplify the integrals appearing in \eqref{db2eq6} as follows. For the last two integrals in \eqref{db2eq6}, we replace one of the factors in the products $I^2(x)$ 
with its definition \eqref{the12}. Evaluating the integrals and recalling our computations for $c_1$, we obtain
\begin{eqnarray*}
	-\frac{1}{4}\int\limits_0^1I^2(x)\frac{dx}{x}+\frac{1}{4}\int\limits_1^{\infty}\left(1-I^2(x)\right)\frac{dx}{x}&=&\frac{1}{2}\ln 2-\frac{1}{4\pi^2}\int\limits_0^1\sqrt{\frac{u}{1-u}}\int\limits_0^1\sqrt{\frac{v}{1-v}}\\
	&&\times\left[\int\limits_0^{\infty}\left(\frac{1}{e^{xv}-1}-\frac{1}{xv}\right)\frac{x\,dx}{e^{xu}-1}\right]dudv.
\end{eqnarray*}
Next, we use geometric progression for $\frac{1}{e^z-1}$ and integrate term by term
\begin{equation*}
	\int\limits_0^{\infty}\left(\frac{1}{e^{xv}-1}-\frac{1}{xv}\right)\frac{x\,dx}{e^{xu}-1} = \sum_{n=1}^{\infty}\left[-\frac{1}{nuv}+\sum_{m=1}^{\infty}\frac{1}{(nu+mv)^2}\right],\ \ 0<u,v<1.
\end{equation*}
With the help of the integrals
\begin{equation*}
	\int\limits_0^1\sqrt{\frac{u}{1-u}}\frac{du}{(1+au)^2} = \frac{\pi}{2(1+a)^{3/2}},\hspace{0.4cm}\int\limits_0^1\frac{du}{\sqrt{1-u}\,(a+u)^{3/2}} = \frac{2}{(1+a)\sqrt{a}},\hspace{0.5cm}a\geq 0
\end{equation*}
this implies
\begin{equation}\label{evsumeq1}
	-\frac{1}{4}\int\limits_0^1I^2(x)\frac{dx}{x}+\frac{1}{4}\int\limits_1^{\infty}\left(1-I^2(x)\right)\frac{dx}{x} = \frac{1}{2}\ln 2-\frac{1}{4\pi}\sum_{n=1}^{\infty}\left[-\frac{\pi}{n}+\sum_{m=1}^{\infty}\frac{1}{(m+n)\sqrt{mn}}\right].
\end{equation}
For the second integral in \eqref{db2eq6} we use the identities
\begin{equation*}
	J'(x) = I'(x)-\frac{1}{\pi}\int\limits_0^1\arctan\sqrt{\frac{u}{1-u}}\frac{d}{dx}\left(\frac{x}{e^{xu}-1}\right)\,du,\hspace{0.75cm} \frac{\partial}{\partial x}\left(\frac{x}{e^{xu}-1}\right) =\frac{\partial}{\partial u}\left(\frac{u}{e^{xu}-1}\right)
\end{equation*} 
and integrate by parts
\begin{equation*}
	\int\limits_0^{\infty}J'(x)I(x)\,dx = \frac{1}{2}-\frac{1}{2}\int\limits_0^{\infty}\frac{I(x)}{e^x-1}\,dx+\frac{1}{2\pi}\int\limits_0^{\infty}\int\limits_0^1\sqrt{\frac{u}{1-u}}\frac{I(x)}{e^{xu}-1}\,dudx.
\end{equation*}
Since
\begin{equation*}
	I(x) = \frac{x}{\pi}\int\limits_0^1\sqrt{\frac{v}{1-v}}\left(\frac{1}{e^{xv}-1}-\frac{1}{xv}\right)\,dv,
\end{equation*}
we obtain
\begin{equation*}
	\frac{1}{2\pi}\int\limits_0^{\infty}\int\limits_0^1\sqrt{\frac{u}{1-u}}\frac{I(x)}{e^{xu}-1}\,dudx = \frac{1}{2\pi^2}\int\limits_0^1\sqrt{\frac{u}{1-u}}\int\limits_0^1\sqrt{\frac{v}{1-v}}\left[\int\limits_0^{\infty}\left(\frac{1}{e^{xv}-1}-\frac{1}{xv}\right)\frac{x\,dx}{e^{xu}-1}\right]dudv,
\end{equation*}
i.e. the triple integral we just evaluated in the computation of \eqref{evsumeq1}. We summarize
\begin{equation}\label{evsumeq2}
	\int\limits_0^{\infty}J'(x)I(x)\,dx = \frac{1}{2}-\frac{1}{2}\int\limits_0^{\infty}\frac{I(x)}{e^x-1}\,dx+\frac{1}{2\pi}\sum_{n=1}^{\infty}\left[-\frac{\pi}{n}+\sum_{m=1}^{\infty}\frac{1}{(m+n)\sqrt{mn}}\right]
\end{equation}
and back to \eqref{db2eq6},
\begin{equation*}
	c = \frac{1}{4}\ln 2+\frac{1}{2}\ln\pi+\frac{1}{4\pi}\sum_{n=1}^{\infty}\left[-\frac{\pi}{n}+\sum_{m=1}^{\infty}\frac{1}{(m+n)\sqrt{mn}}\right],
\end{equation*}
thus proving Theorem \ref{theorem2}.


\end{document}